\documentclass[sigplan,screen]{acmart}
\newcommand{\oomit}[1]{}
\usepackage[T1]{fontenc}
\usepackage{amsmath,amsfonts}
\usepackage{bm}
\usepackage{textcomp}
\usepackage{cases}
\usepackage{empheq}
\usepackage{cancel,soul,ulem}
\usepackage{subfigure}
\usepackage{ulem}
\usepackage{marvosym}
\newtheorem{remark}{Remark}
\newtheorem{definition}{Definition}
\newtheorem{theorem}{Theorem}
\newtheorem{proposition}{Proposition}
\newtheorem{lemma}{Lemma}
\newtheorem{example}{Example}

\newtheorem{assumption}{Assumption}
\newtheorem{problem}{Problem}
\usepackage{multirow}
\usepackage{caption}

\AtBeginDocument{%
  }

\setcopyright{acmlicensed}
\copyrightyear{2026}
\acmYear{2026}
\acmDOI{XXXXXXX.XXXXXXX}

\acmConference[HSCC '26]{29th ACM International Conference on Hybrid Systems: Computation and Control}{May 11--14,2026}{Saint Malo, France}

\acmISBN{978-1-4503-XXXX-X/2018/06}

\begin{document}

\pagenumbering{arabic}
\setcounter{page}{1}
\title{PAC Finite-Time Safety Guarantees for Stochastic Systems with Unknown Disturbance Distributions}



\author{Taoran Wu}
\affiliation{%
  \institution{KLSS, Institute of Software, CAS $\&$ University of}
  \city{Chinese Academy of Sciences, Beijing}
  \country{China}}
\email{wutr@ios.ac.cn}

\author{Dominik Wagner}
\affiliation{%
  \institution{College of Computing and Data Science, Nanyang Technological University,Singapore}
  \city{}
  \country{}}
\email{dominik.wagner@ntu.edu.sg}

\author{C.-H. Luke Ong\textsuperscript{\Letter}}
\affiliation{%
  \institution{College of Computing and Data Science, Nanyang Technological University, Singapore}
  \city{}
  \country{}}
\email{luke.ong@ntu.edu.sg}

\author{Bai Xue\textsuperscript{\Letter}}
\affiliation{%
  \institution{KLSS, Institute of Software, CAS $\&$ University of}
  \city{Chinese Academy of Sciences, Beijing}
  \country{China}}
\email{xuebai@ios.ac.cn}



\begin{abstract}
We investigate the problem of establishing finite-time probabilistic safety guarantees for discrete-time stochastic dynamical systems subject to unknown disturbance distributions, using barrier certificate methods. Our approach develops a data-driven safety certification framework that relies only on a finite collection of independent and identically distributed (i.i.d.) disturbance samples. Within this framework, we propose a certification procedure such that, with confidence at least $1-\delta$ over the sampled disturbances, if the output of the certification procedure is accepted, the probability that the system remains within a prescribed safe set over a finite horizon is at least $1-\epsilon$. A key challenge lies in formally characterizing the probably approximately correct (PAC) generalization behavior induced by finite samples. To address this, we derive PAC generalization bounds using tools from  VC dimension, scenario optimization, and Rademacher complexity. These results illuminate the fundamental trade-offs between sample size, model complexity, and safety tolerance, providing both theoretical insight and practical guidance for designing reliable, data-driven safety certificates in discrete-time stochastic systems.
\end{abstract}

\begin{CCSXML}
<ccs2012>
 <concept>
  <concept_id>00000000.0000000.0000000</concept_id>
  <concept_desc>Do Not Use This Code, Generate the Correct Terms for Your Paper</concept_desc>
  <concept_significance>500</concept_significance>
 </concept>
 <concept>
  <concept_id>00000000.00000000.00000000</concept_id>
  <concept_desc>Do Not Use This Code, Generate the Correct Terms for Your Paper</concept_desc>
  <concept_significance>300</concept_significance>
 </concept>
 <concept>
  <concept_id>00000000.00000000.00000000</concept_id>
  <concept_desc>Do Not Use This Code, Generate the Correct Terms for Your Paper</concept_desc>
  <concept_significance>100</concept_significance>
 </concept>
 <concept>
  <concept_id>00000000.00000000.00000000</concept_id>
  <concept_desc>Do Not Use This Code, Generate the Correct Terms for Your Paper</concept_desc>
  <concept_significance>100</concept_significance>
 </concept>
</ccs2012>
\end{CCSXML}


\keywords{PAC Safety Guarantees; Stochastic Systems; Unknown Disturbance Distributions}


\maketitle

\section{Introduction}

Stochastic systems are widely used in safety-critical applications because they provide a principled framework for modeling and reasoning about inherent uncertainties \cite{bertsekas1996stochastic}, particularly aleatoric uncertainty arising from natural randomness in the system or environment \cite{sullivan2015introduction,thiebes2021trustworthy}. In real-world settings such as autonomous vehicles, robotics, and aerospace, system behavior is often influenced by unpredictable disturbances, sensor noise, or environmental variability. By explicitly accounting for these stochastic effects, safety verification can provide probabilistic guarantees of system performance, robustness, and safety—an essential requirement for deploying reliable systems in high-stakes scenarios \cite{baier2008principles}.


Traditional approaches to stochastic safety often rely on complete knowledge of the disturbance distribution or adopt worst-case robust methods. Distribution-based approaches can fail when the assumed distribution deviates from reality, whereas robust methods are often overly conservative, limiting system performance. In many practical settings involving hybrid and cyber-physical systems, the true disturbance distribution is unknown, and only finite samples of disturbances are available from sensors, simulations, or historical data. Although several recent works have begun to address safety guarantees for such systems, e.g., \cite{mathiesen2023inner,mathiesen2024data}, existing methods remain at an early stage of development. 

In this work, we develop a data-driven framework to certify PAC finite-time safety guarantees for discrete-time stochastic systems subject to unknown disturbance distributions via the computation of barrier functions. 
This framework relies only on a finite collection of independent and identically distributed (i.i.d.) disturbance samples and  provides a certification procedure such that, with confidence at least $1-\delta$ over the sampled disturbances, if the output of the certification procedure is accepted, the probability that the system remains within a prescribed safe set over a finite horizon is at least $1-\epsilon$.
A key technical challenge lies in characterizing the generalization gap between the empirical safety estimate obtained from finite samples and the true lower bound of the finite-time safety probability. To address this issue, 
we derive PAC-style generalization bounds using tools from statistical learning theory, including VC dimension, scenario optimization, and Rademacher complexity. These bounds are explicit and yield concrete sample-complexity requirements that relate the number of disturbance samples to the complexity of the barrier functions, as well as to $\epsilon$ and $\delta$. This enables practitioners to determine the number of samples required to achieve a desired confidence level and safety guarantee. Finally, the proposed methods were illustrated on several numerical examples using a semidefinite programming solver.

The core novelty of this work lies in their architectural integration (specifically, the strategy of constructing PAC bounds for barrier certificates) to yield safety guarantees unavailable in prior data-driven methods. The main contributions of this work are listed below:
\begin{enumerate}
    \item \textbf{PAC Analysis of Sample-Based Barrier Certificates:}  
    We derive VC-, scenario-, and Rademacher-based generalization bounds for barrier certificates constrained using disturbance samples, providing formal safety guarantees beyond the observed data.

    \item \textbf{Applicability to General Nonlinear Dynamics:}  
    Unlike prior data-driven approaches that are restricted to piecewise-affine systems~\cite{gracia2024data,mathiesen2024data}, our framework applies to general nonlinear stochastic dynamics.

    \item \textbf{Unified View of Generalization Guarantees:}  
    We systematically derive and compare multiple PAC bounds within a single certification framework, clarifying trade-offs between conservatism and sample efficiency.
\end{enumerate}



\subsection*{Related Work}

Barrier certificates were first introduced for deterministic systems as a Lyapunov-like tool for verifying safety and reachability properties \cite{prajna2004safety,prajna2007convex}, and were later extended to stochastic systems for infinite-horizon safety and reach-avoid verification. Notable contributions include establishing lower bounds on safety and reach-avoid probabilities via Ville’s inequality \cite{prajna2007framework,vzikelic2023learning,zhi2024unifying}, as well as deriving both lower and upper bounds on reach-avoid and safety probabilities through equation relaxations \cite{xue2021reach,xue2022reach,yu2023safe,xue2024sufficient,chen2025construction}. Extensions also cover infinite-horizon probabilistic program analysis and $\omega$-regular properties \cite{chakarov2013probabilistic,mciver2017new,kenyon2021supermartingales,abate2024stochastic,wang2025verifying}. However, in practice, finite-time verification is more relevant, as most real systems operate under bounded time constraints. $c$-martingale–based barrier conditions have become an important tool for the finite-time formal verification of stochastic systems, with early foundations laid in \cite{kushner1967,steinhardt2012finite}. These methods were later extended to finite-time temporal-logic verification and controller synthesis under robust-invariance or stopped-process assumptions \cite{jagtap2018temporal,jagtap2020formal,santoyo2021barrier}. More recently, \cite{zhi2024unifying,xue2024finite} extended $c$-martingale–based barrier conditions and refined the associated safety probability bounds, while \cite{ghanbarpour2025characterization} investigated the safety verification problem for stochastic discrete-time systems modeled by difference inclusions. 
The above methods assume full knowledge of the system, including both dynamics and disturbance distributions. In contrast, this work considers systems with known dynamics but unknown disturbances, reflecting practical scenarios—such as autonomous systems operating in uncertain environments—where exact disturbance statistics are unavailable.

On the other hand, data-driven methods for formal verification and design of black-box systems have received increasing attention, e.g., \cite{samari2024single,wooding2024learning}. A representative data-driven approach is scenario optimization, originally proposed in \cite{calafiore2006scenario} to address robust optimization problems. 
It has since been integrated with barrier certificates for safety verification and controller synthesis in deterministic systems \cite{xue2019probably,nejati2023formal, wu2025convex, rickard2025data, samari2025data}. Later, \cite{nejati2023data,salamati2024data} extended scenario optimization to the stochastic setting. \cite{salamati2022data} and \cite{salamati2022safety} further reduced sample complexity using the wait-and-judge technique \cite{campi2018wait} and the repetitive scenario technique \cite{calafiore2016repetitive}, respectively. All these works on stochastic systems require prior knowledge of certain constants, such as the Lipschitz constants of the system dynamics and the barrier certificates. 
\textcolor{black}{In contrast, our work focuses on a complementary setting in which the system dynamics are assumed to be known, while the disturbance distribution is unknown. We address this uncertainty using only i.i.d. disturbance samples, without imposing additional structural assumptions, such as Lipschitz continuity on the system dynamics.} \textcolor{black}{Moreover, we would like to highlight a technical subtlety in these approaches. To the best of our knowledge, works such as \cite{salamati2024data,salamati2022data,salamati2022safety} apply Chebyshev’s inequality to obtain uniform characterizations of expectations (e.g., via Rademacher complexity). However, the validity of this step in the current setting requires additional justification, since applying a pointwise bound to a data-dependent selection does not automatically extend to a uniform guarantee.}
This setting is closely related to \cite{mathiesen2023inner,gracia2024data,mathiesen2024data}. In \cite{mathiesen2023inner}, the authors studied finite-horizon safety guarantees for discrete-time piecewise affine stochastic systems with unknown disturbance distributions. Their approach uses stochastic barrier certificates to reformulate the problem as a chance-constrained optimization problem and synthesizes piecewise affine stochastic barrier functions via scenario optimization. Because it relies on this chance-constrained formulation, the method is conceptually similar to the robust barrier certificate approach proposed in this work, which enforces constraint satisfaction over all sampled disturbances. More recently, \cite{mathiesen2024data} extended this method to nonlinear systems with affine disturbances by over-approximating them as uncertain piecewise affine systems. In contrast, our method is applicable to more general system classes without requiring piecewise affine approximations or restricting the barrier function structure to (piecewise) affine forms. Moreover, to reduce the computational overhead associated with solving the resulting optimization problem, we further propose an alternative approach that synthesizes stochastic barrier functions directly, without performing a transformation into a chance-constrained optimization problem as in \cite{mathiesen2023inner,mathiesen2024data}. For a detailed discussion of the relations between \cite{mathiesen2023inner,mathiesen2024data} and the present work, please refer to Remark~ \ref{relation}. Further, \cite{gracia2024data} addressed the controller synthesis for nonlinear switched systems with unknown disturbance distributions under LTL$_f$ specifications. Their method is based on learning an ambiguity set that contains the unknown disturbance distribution and subsequently constructing a robust Markov decision process as a finite abstraction of the system. However, it is known that such discretization-based approaches suffer from the curse of dimensionality.

\paragraph{Organization.} Section~\ref{sec:pre} introduces the system model, notation, and the PAC finite-time safety certification problem. Section~\ref{sec:pac} presents the main PAC guarantees and explicit sample-complexity bounds using VC
dimension, scenario optimization, and Rademacher complexity.  Section~\ref{sec:exam} demonstrates the effectiveness of the proposed approaches, and Section \ref{sec:con} concludes this paper.

\paragraph{Notation.}
We use $\mathbb{R}^n$ to denote the $n$-dimensional Euclidean space, and $\mathbb{N}$ for the set of nonnegative integers. For two sets $\mathbb{A}$ and $\mathbb{B}$, $\mathbb{A} \setminus \mathbb{B}$ denotes the set difference, i.e., the set of all elements that belong to $\mathbb{A}$ but not to $\mathbb{B}$. The notation $\overline{\mathbb{A}}$ denotes the closure of $\mathbb{A}$, $\mathbb{A}^c$ denotes its complement, and $\mathbb{A}^{\circ}$ denotes its interior. Given a probability space $(\Omega, \mathcal{F}, \textnormal{P}_{\bm{y}})$, we denote by 
$\textnormal{P}_{\bm{y}}[\mathbb{E}]$ the probability of an event 
$\mathbb{E} \in \mathcal{F}$, and by 
$\textnormal{E}_{\bm{y}}[\cdot]$ the expectation with respect to $\textnormal{P}_{\bm{y}}$.

\section{Preliminaries}
\label{sec:pre}
In this section, we introduce stochastic systems and the PAC finite-time safety certification problem of interest.

\paragraph{System model.}
We consider a discrete-time stochastic system
\begin{equation}
    \bm{x}(j+1) = \bm{f}(\bm{x}(j), \bm{d}(j)), \quad j \in \mathbb{N},
    \label{eq:system}
\end{equation}
where $\bm{x}(j) \in \mathbb{R}^n$ is the system state and $\bm{d}(j) \in \mathbb{D}$ is a stochastic disturbance with $\mathbb{D}$ being compact in $\mathbb{R}^{d}$. The update map $\bm{f}(\cdot,\cdot): \mathbb{R}^n \times \mathbb{D} \to \mathbb{R}^n$ is assumed known,  deterministic, and continuous over $(\bm{x},\bm{d})$. The disturbance sequence \(\{\bm{d}(j)\}_{j\in \mathbb{N}}\) consists of i.i.d. random vectors on a probability space \((\mathbb{D},\mathcal{F},\text{P}_{\bm{d}})\), with \(\bm{d}(j)\sim \text{P}_{\bm{d}}\) for all \(j\in \mathbb{N}\). The expectation \(\textnormal{E}_{\bm{d}}[\cdot]\) is taken with respect to \(\text{P}_{\bm{d}}\). 

\begin{definition}[Disturbance Signal]  
A disturbance signal \(\pi\) is a sample path of the stochastic process \(\{\bm{d}(j) \colon \mathbb{D} \rightarrow \mathbb{D}, j \in \mathbb{N}\}\), defined on the canonical sample space \(\mathbb{D}^{\infty}\) with the product topology \(\mathcal{B}(\mathbb{D}^{\infty})\) and probability measure \(\textnormal{P}_{\pi}:=\textnormal{P}_{\bm{d}}^{\infty}\).   
\end{definition}  

\begin{definition}[Trajectory]  
Given an initial state \(\bm{x}_0 \in \mathbb{R}^n\) and a disturbance signal \(\pi\), the corresponding trajectory of system \eqref{eq:system} is the mapping \(\bm{\phi}_{\pi}^{\bm{x}_0}(\cdot) \colon \mathbb{N} \rightarrow \mathbb{R}^n\) defined recursively as  
\[
\begin{cases}
    \bm{\phi}_{\pi}^{\bm{x}_0}(j+1) = \bm{f}(\bm{\phi}_{\pi}^{\bm{x}_0}(j), \bm{d}(j)), \quad \forall j\in \mathbb{N},\\
    \bm{\phi}_{\pi}^{\bm{x}_0}(0) = \bm{x}_0.
\end{cases}
\]  
\end{definition}

\paragraph{$k$-step safety certification.}
A compact set $\mathbb{X} \subseteq \mathbb{R}^n$ defines the safe region. Given state $\bm{x} \in \mathbb{X}$, the system starting from $\bm{x}\in \mathbb{X}$ is $k$-step safe with respect to $\epsilon(\cdot): \mathbb{X}\rightarrow [0,1]$  if
\[
    \text{P}_{\pi}\big[ \wedge_{j=1}^k \bm{\phi}^{\bm{x}}_{\pi}(j) \in \mathbb{X} \big] \geq 1-\epsilon(\bm{x}),
\]
where $k\in \mathbb{N}$.

In this paper, instead of knowing $\text{P}_{\bm{d}}$, we assume access to a finite i.i.d. sample set
\[
    D^M = \{ \bm{d}^{(1)}, \dots, \bm{d}^{(M)} \} \sim \text{P}_{\bm{d}}^M,
\]
where $\text{P}_{\bm{d}}^M$ denotes the $M$-fold product distribution of $\text{P}_{\bm{d}}$, and each sample $\bm{d}^{(i)} \in \mathbb{D}$ is drawn independently according to $\text{P}_{\bm{d}}$ for $i=1,\dots,M$.

\begin{problem}[PAC Finite-time Safety Certification]
\label{prob:1}
For a given finite time horizon $k \in \mathbb{N}$, the goal is to design a certification procedure $\mathcal{A}$ that provides k-step safety certification in the PAC sense: for each $\bm{x}\in \mathbb{X}$, with confidence at least $1-\delta$ over the sample set $D^M$, if $\mathcal{A}(D^M)$ accepts, then the system \eqref{eq:system} starting from $\bm{x}$ is $k$-step safe with respect to $\epsilon(\bm{x})$, where $\epsilon(\cdot): \mathbb{X} \rightarrow [0,1]$ specifies a state-dependent probability threshold that may depend on $D^M$. 

Formally, the $k$-step PAC guarantee can be expressed as
\[
\textnormal{P}_{\bm{d}}^M \Bigg[
\mathcal{A}(D^M) \Rightarrow \left(
\begin{split}
&\textnormal{P}_{\pi}\Big[\bigwedge_{j=1}^k \bm{\phi}^{\bm{x}}_{\pi}(j) \in \mathbb{X} \Big] \\
&\ge 1-\epsilon(\bm{x})
\end{split}
\right)
\Bigg] \ge 1-\delta, \forall \bm{x}\in \mathbb{X}.
\]
\end{problem}

In addition, in the sequel, we denote by $\widehat{\mathbb{X}}$ a compact set that contains $\mathbb{X}$ as well as all possible states reachable by system \eqref{eq:system} from $\mathbb{X}$ within one step, i.e., 
    \[\widehat{\mathbb{X}} \supseteq \{\bm{y}\mid \bm{y}=\bm{f}(\bm{x},\bm{d}), \bm{x}\in \mathbb{X},\bm{d}\in \mathbb{D}\} \cup \mathbb{X}.\]
    Since $\bm{f}(\bm{x},\bm{d})$ is continuous over $(\bm{x},\bm{d})$ and both $\mathbb{X}$ and $\mathbb{D}$ are compact, such a compact set $\widehat{\mathbb{X}}$ always exists.

\section{PAC Finite-time Safety Certification}
\label{sec:pac}
In this section, we propose data-driven approaches to address Problem \ref{prob:1} by combining robust and stochastic barrier certificates with tools from statistical learning theory, including VC dimension, scenario approach, and Rademacher complexity. The integration of robust barrier certificates with VC dimension and scenario approach is presented in Subsection \ref{sec:robust}, while the integration of stochastic barrier certificates with Rademacher complexity is discussed in Subsection \ref{sec:probabilistic}.

\subsection{Certification via Robust Barrier Functions}
\label{sec:robust}
In this subsection, we present approaches that integrate VC dimension and scenario approach with robust barrier functions (RBFs) to address Problem \ref{prob:1},
beginning with a review of RBFs for discrete-time systems.

\begin{definition}
    Given $\gamma \in (0,1)$ and a safe set $\mathbb{X}$, a function $h(\cdot):\mathbb{R}^n\rightarrow\mathbb{R}$ is a RBF for the system 
    \eqref{eq:system} if the inequalities hold:
           \begin{subequations}
    \label{eq:rbf}
    \begin{empheq}[left=\empheqlbrace]{align}
        &h(\bm{x}) \leq  0, & \forall \bm{x} \in \overline{\widehat{\mathbb{X}}\setminus\mathbb{X}}, \label{eq:rbf1} \\
     &h(\bm{f}(\bm{x},\bm{d})) \geq \gamma h(\bm{x}), & \forall \bm{x} \in \mathbb{X}, ~\forall \bm{d} \in \mathbb{D}, \label{eq:rbf3}
    \end{empheq}
    \end{subequations}
    and \textcolor{black}{~$\mathbb{X}_s=\{\bm{x}\in \mathbb{X}\mid h(\bm{x})> 0\}\neq \emptyset$}.
\end{definition}

\begin{proposition}[\cite{prajna2007framework}]
    If there exists a RBF $h(\cdot): \mathbb{R}^n\rightarrow\mathbb{R}$, then for any disturbance $\bm{d} \in \mathbb{D}$ and initial state $\bm{x}_0 \in \mathbb{X}_s=\{\bm{x}\in \mathbb{X}\mid h(\bm{x})> 0\}$, the system \eqref{eq:system} will remain within $\mathbb{X}_s$—and consequently within the safe set $\mathbb{X}$—at the next time step, i.e., $\forall \bm{d} \in \mathbb{D}, \forall \bm{x}_0\in \mathbb{X}_s, \bm{f}(\bm{x}_0,\bm{d}) \in \mathbb{X}_s$.
\end{proposition}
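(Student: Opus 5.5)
Fix $\bm{x}_0\in\mathbb{X}_s$ and $\bm{d}\in\mathbb{D}$, and set $\bm{y}=\bm{f}(\bm{x}_0,\bm{d})$. We must show $\bm{y}\in\mathbb{X}$ and $h(\bm{y})>0$. Since $\mathbb{X}_s\subseteq\mathbb{X}$, condition \eqref{eq:rbf3} applies at $\bm{x}_0$ and gives $h(\bm{y})\ge \gamma h(\bm{x}_0)$. Because $\gamma\in(0,1)$ is positive and $h(\bm{x}_0)>0$ by definition of $\mathbb{X}_s$, we obtain $h(\bm{y})>0$. This is the only place where positivity of $\gamma$ is used.

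Next, show that $\bm{y}\in\mathbb{X}$ by contradiction. Because $\bm{x}_0\in\mathbb{X}$ and $\bm{d}\in\mathbb{D}$, the construction of $\widehat{\mathbb{X}}$ gives $\bm{y}\in\widehat{\mathbb{X}}$. If $\bm{y}\notin\mathbb{X}$, then $\bm{y}\in\widehat{\mathbb{X}}\setminus\mathbb{X}\subseteq\overline{\widehat{\mathbb{X}}\setminus\mathbb{X}}$. Condition \eqref{eq:rbf1} would then force $h(\bm{y})\le 0$, contradicting $h(\bm{y})>0$. Hence $\bm{y}\in\mathbb{X}$, and together with $h(\bm{y})>0$ this yields $\bm{y}\in\mathbb{X}_s$.

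There is no real obstacle. The one point that needs care is the second step: \eqref{eq:rbf1} constrains $h$ only on (the closure of) $\widehat{\mathbb{X}}\setminus\mathbb{X}$, not on all of $\mathbb{X}^c$. So the argument must use that one-step successors of points in $\mathbb{X}$ stay inside $\widehat{\mathbb{X}}$, which holds by the definition of $\widehat{\mathbb{X}}$. Finally, since $\bm{x}_0$ and $\bm{d}$ were arbitrary, the claim $\bm{f}(\bm{x}_0,\bm{d})\in\mathbb{X}_s$ holds for all $\bm{x}_0\in\mathbb{X}_s$ and $\bm{d}\in\mathbb{D}$, and in particular the system stays in $\mathbb{X}$ at the next step.
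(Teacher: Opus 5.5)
Your proof is correct and follows essentially the same argument the paper relies on: the paper gives no standalone proof and cites Prajna et al., but in the proof of Theorem~\ref{lemma:robust} it notes that $h(\bm{f}(\bm{x},\bm{d}))\ge\gamma h(\bm{x})>0$ for $\bm{x}\in\mathbb{X}_s$, hence $\bm{f}(\bm{x},\bm{d})\in\mathbb{X}_s$. You additionally spell out the step the paper leaves implicit: one-step successors lie in $\widehat{\mathbb{X}}$, so \eqref{eq:rbf1} rules out $\bm{f}(\bm{x}_0,\bm{d})\in\widehat{\mathbb{X}}\setminus\mathbb{X}$.
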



To search for such a function, we consider a parameterized function $h(\bm{a}, \cdot): \mathbb{R}^n \rightarrow \mathbb{R}$ with $h(\bm{0},\bm{x})\equiv 0$ and aim to find the parameter vector $\bm{a} = (\bm{a}[1], \ldots, \bm{a}[m])^{\top} \in \mathbb{R}^m$ such that $h(\bm{a}, \bm{x})$ satisfies the constraints \eqref{eq:rbf1} and \eqref{eq:rbf3}. This search problem can be reformulated as the following uncertain optimization one:
\begin{equation}
\begin{split}
    &\min_{\bm{a}\in \mathbb{R}^m } - \textstyle\frac{1}{N_o}\textstyle\sum_{i=1}^{N_o}h(\bm{a},\bm{x}'_i)\label{eq:RBF_linear1} \\
    \text{s.t.}&\begin{cases}
        h(\bm{a},\bm{x})\leq 0, \hspace{2.6cm} \forall \bm{x}\in \overline{\widehat{\mathbb{X}}\setminus \mathbb{X}},\\
        h(\bm{a}, \bm{f}(\bm{x},\bm{d})) \geq \gamma h(\bm{a}, \bm{x}), \hspace{0.5cm}\forall \bm{x} \in \mathbb{X}, \forall \bm{d} \in \mathbb{D},\\
        \bm{a}[l]  \in [-U_a, U_a], ~l=1,\ldots,m, 
    \end{cases} 
\end{split}
\end{equation}
where $\gamma \in (0,1)$ and $U_a > 0$ are user-defined scalar values, and $\{\bm{x}'_i\}_{i=1}^{N_o}$ is a family of specified states distributed evenly over $\mathbb{X}$, which can be generated by random sampling from a uniform distribution over the safe set $\mathbb{X}$.

\begin{remark}
   The ideal objective of the program \eqref{eq:RBF_linear1} is to find a function $h(\bm{a}, \bm{x})$ that is maximized over $\mathbb{X}$, thereby providing the largest possible $\mathbb{X}_s$. A natural choice for the objective function is to maximize $\int_{\mathbb{X}} h(\bm{a}, \bm{x}) d\bm{x}$. However, since this integral may not be analytically tractable, we approximate it using numerical sampling $\frac{1}{N_o}\sum_{i=1}^{N_o}h(\bm{a},\bm{x}'_i)$. 
\end{remark}



The optimization problem \eqref{eq:RBF_linear1} always admits a feasible solution, since $\bm{a}\equiv \bm{0}$ is one such solution. Let $\bm{a}^*$ be the solution to \eqref{eq:RBF_linear1}. If~\textcolor{black}{~$\mathbb{X}_s=\{\bm{x}\in \mathbb{X}\mid h(\bm{a}^*,\bm{x})> 0\}\neq \emptyset$}, we conclude that $h(\bm{a}^*, \bm{x})$ is a RBF. However, problem \eqref{eq:RBF_linear1}, which enforces worst-case satisfaction over all possible disturbances, can be overly conservative or even infeasible in practice. Such conservatism may result in degraded performance or the failure to identify a feasible safety certificate. To alleviate this issue, we adopt a probabilistic formulation of the safety certification problem, which provides quantitative safety guarantees instead of relying on excessively conservative worst-case bounds. Nevertheless, the lack of knowledge about the true disturbance distribution prevents the direct computation of such probabilistic characterizations. To address this challenge, we resort to a data-driven approach, leveraging finite disturbance samples to obtain confidence-guaranteed probabilistic safety assurances. Specially, we draw $M$ i.i.d disturbance samples of the form $D^M=\{\bm{d}^{(i)}\}_{i=1}^{M}$, where each $\bm{d}^{(i)}$ is independently drawn from $\mathbb{D}$ according to $\textnormal{P}_{\bm{d}}$. We then replace the universal disturbance constraints in \eqref{eq:RBF_linear1} with their sample-based counterparts, leading to the following sample-based optimization problem:

\begin{equation}
\begin{split}
    &\min_{\bm{a}\in \mathbb{R}^m } - \textstyle\frac{1}{N_o}\textstyle\sum_{i=1}^{N_o}h(\bm{a},\bm{x}'_i)\label{eq:RBF_linear21} \\
    \text{s.t.}&\begin{cases}
     h(\bm{a},\bm{x})\leq 0, \hspace{3.15cm}\forall \bm{x}\in \overline{\widehat{\mathbb{X}}\setminus \mathbb{X}},\\
        h(\bm{a}, \bm{f}(\bm{x},\bm{d}^{(j)})) \geq \gamma h(\bm{a}, \bm{x}), \hspace{1.2cm}\forall \bm{x}\in \mathbb{X},  \\
        \bm{a}[l]  \in [-U_a, U_a];~ j=1,\ldots,M;~ l=1,\ldots,m. \\
    \end{cases}
\end{split}
\end{equation}

Similar to program \eqref{eq:RBF_linear1}, program \eqref{eq:RBF_linear21} is always feasible.

\subsubsection{VC Dimension Based Methods}
In this subsection, we employ the VC dimension to derive uniform generalization bounds over all sample disturbances for the solution to \eqref{eq:RBF_linear21}, thereby establishing a PAC-style characterization of one-step (i.e., $k=1$) safety certification.

 The gain of PAC characterization of one-step safety certification relies on the following assumption. 
\begin{assumption}[Uniformly Finite VC dimension]
    \label{ass:VC_finite}
   Define the function class
\[
\mathbb{F}_{\bm{x}}
= \left\{\widehat{f}_{\bm{a}}(\bm{d})\!=\!\mathbf 1_{\left\{\bm{d}\mid h(\bm{a},\bm{f}(\bm{x},\bm{d}))-\gamma h(\bm{a},\bm{x})<0\right\}}(\bm{d})\mid\bm{a}\!\in\![-U_a,U_a]^m\right\}.
\]
Assume $\mathbb{F}_{\bm{x}}$ has VC dimension at most $N$ for every $\bm{x}\in \mathbb{X}$ (uniformly in $\bm{x}$). 
\end{assumption}

When $h(\bm{a},\bm{x})$ is linear in $\bm{a}$, we can write $h(\bm{a},\bm{y})=\bm{a}^\top \bm{g}(\bm{y})$. Thus, the indicator function $\widehat{f} \in \mathbb{F}_{\bm{x}}$ is characterized as:
\[
\widehat{f}_{\bm{a}}(\bm{d})=\mathbf 1_{\left\{\bm{d}\mid \bm{a}^\top\bm{\psi}(\bm{x},\bm{d})<0\right\}}(\bm{d}), \bm{\psi}(\bm{x},\bm{d}):=\bm{g}(\bm{f}(\bm{x},\bm{d}))-\gamma \bm{g}(\bm{x}).
\]
This defined a linear family in the parameters $\bm{a}$, hence its VC dimension is less than or equal to $m$.

Let $\textnormal{E}_{\bm{d},M}$ denote the empirical mean of $\widehat{f}_{\bm{a}}(\bm{d})$ evaluated over the set of $M$ i.i.d.\ disturbance samples $D^M$, i.e.,
\[
\textstyle\textnormal{E}_{\bm{d},M}[\widehat{f}_{\bm{a}}] := \frac{1}{M} \sum_{i=1}^M \widehat{f}_{\bm{a}}(\bm{d}^{(i)}), \forall \bm{x}\in \mathbb{X}.
\]

Since there always exist solutions to \eqref{eq:RBF_linear21}, there always exists a solution \textcolor{black}{$\bm{a}(D^M)$} such that $\textnormal{E}_{\bm{d},M}[\widehat{f}_{\bm{a}(D^M)}]=0, \forall \bm{x}\in \mathbb{X}$. 
According to Corollary~4 in~\cite{alamo2009randomized}, there exists an explicit bound on the number of samples $M$ required to ensure that the empirical probability $\textnormal{E}_{\bm{d},M}[\widehat{f}_{\bm{a}(D^M)}]$ deviates from the true probability $\textnormal{E}_{\bm{d}}[\widehat{f}_{\bm{a}(D^M)}]$ by at most $\epsilon$ from one side with confidence at least $1-\delta$. Specifically,
\begin{equation}
\label{MM}
M \ge \frac{5}{\epsilon} \left( \ln{\frac{4}{\delta}} + N \ln{\frac{40}{\epsilon}} \right)
\end{equation}
is sufficient to guarantee that
\[
\textnormal{P}_{\bm{d}}^M\!\left[\textnormal{E}_{\bm{d}}[\widehat{f}_{\bm{a}(D^M)}]\le \epsilon \right] \ge 1-\delta, \forall \bm{x}\in \mathbb{X}.
\]

\begin{lemma}[Corollary 4, \cite{alamo2009randomized}]
\label{coro4}
Suppose $\epsilon \in (0,1)$ and $\delta\in (0,1)$ are given. Then, for each $\bm{x}\in \mathbb{X}$, the probability of one-sided constrained failure $p(M,\epsilon,0)$ is smaller than $\delta$, where $p(M,\epsilon,0)$ is defined as
\[p(M,\epsilon,0):=\textnormal{P}_{\bm{d}}^M\left[
\begin{split}
&\exists ~\bm{a}\in [-U_a,U_a]^m, \\&\Big(
\textnormal{E}_{\bm{d},M}[\widehat{f}_{\bm{a}}]\leq 0 \wedge \textnormal{E}_{\bm{d}}[\widehat{f}_{\bm{a}}]>\epsilon\Big)
\end{split}
\right], \]  
if $M$ satisfies \eqref{MM}.
\end{lemma}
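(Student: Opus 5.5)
Fix $\bm{x}\in\mathbb{X}$; the bound is pointwise in $\bm{x}$, so the argument runs inside the class $\mathbb{F}_{\bm{x}}$, which by Assumption~\ref{ass:VC_finite} has VC dimension at most $N$. The statement is a one-sided uniform convergence bound for a zero-empirical-risk hypothesis. I would prove it by the classical symmetrization (ghost sample) argument plus the Sauer--Shelah lemma, and then check that \eqref{MM} makes the resulting tail smaller than $\delta$.

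\textbf{Symmetrization.} Draw a second independent sample $\tilde D^M\sim\textnormal{P}_{\bm{d}}^M$. Suppose some $\bm{a}$ satisfies $\textnormal{E}_{\bm{d},M}[\widehat f_{\bm{a}}]=0$ and $\textnormal{E}_{\bm{d}}[\widehat f_{\bm{a}}]>\epsilon$. Conditionally on $D^M$, the number of ghost points with $\widehat f_{\bm{a}}=1$ is Binomial with mean greater than $M\epsilon$. A Chernoff (or Chebyshev) bound, valid once $M\epsilon\ge 8$ (which \eqref{MM} ensures), shows this count is at least $M\epsilon/2$ with probability at least $1/2$. Hence
\[
p(M,\epsilon,0)\le 2\,\textnormal{P}^{2M}\big[\exists \bm{a}:\ \textnormal{E}_{\bm{d},M}[\widehat f_{\bm{a}}]=0,\ \widetilde{\textnormal{E}}_{\bm{d},M}[\widehat f_{\bm{a}}]\ge \epsilon/2\big].
\]
One technical point is measurability of the supremum over $\bm{a}\in[-U_a,U_a]^m$. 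I would handle it by noting that $h$ is continuous in $\bm{a}$, so the event can be taken over a countable dense set of parameters; alternatively one can simply assume permissibility.

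\textbf{Permutation argument.} Condition on the $2M$ pooled points and randomly swap each pair $(\bm{d}^{(i)},\tilde{\bm{d}}^{(i)})$. By Sauer--Shelah, $\mathbb{F}_{\bm{x}}$ induces at most $(2eM/N)^N$ distinct labelings on the pooled points. Fix one labeling with $r\ge M\epsilon/2$ ones. The probability that all $r$ ones land in the ghost half is at most $2^{-r}\le 2^{-M\epsilon/2}$. A union bound then gives
\[
p(M,\epsilon,0)\le 2\,(2eM/N)^N\,2^{-M\epsilon/2}.
\]

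\textbf{Main obstacle: the constants.} It remains to show that \eqref{MM} implies $2(2eM/N)^N 2^{-M\epsilon/2}<\delta$. This is a purely numerical step and the part I expect to be hardest. Taking logarithms, the requirement is
\[
\tfrac{\epsilon\ln 2}{2}\,M>\ln\tfrac{2}{\delta}+N\ln\tfrac{2eM}{N}.
\]
The $\ln M$ term can be absorbed with the inequality $\ln M\le \alpha M+\ln(1/(e\alpha))$ for a suitable $\alpha$ proportional to $\epsilon/N$. This produces a condition of the form $M\ge \frac{c_1}{\epsilon}\big(\ln\frac{c_2}{\delta}+N\ln\frac{c_3}{\epsilon}\big)$.

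The crude symmetrization constants may not reach exactly $5,4,40$. If they do not, I would refine the argument along the lines of Alamo et al.: a sharper ghost-sample lemma using $(1-\epsilon)$-type binomial tails, and the bound $\frac{1}{1-e^{-M\epsilon/\ldots}}$ in place of the factor $2$. Failing that, I would appeal directly to Corollary~4 of \cite{alamo2009randomized}, whose hypotheses, namely a binary class of VC dimension at most $N$ together with the one-sided zero-empirical-risk event, are exactly those verified above for each fixed $\bm{x}$.
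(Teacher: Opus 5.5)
Your proof is correct, but it takes a different route from the paper. The paper gives no argument for this lemma: it invokes Corollary~4 of \cite{alamo2009randomized} for each fixed $\bm{x}$, and the only thing it checks is that Assumption~\ref{ass:VC_finite} makes $\mathbb{F}_{\bm{x}}$ a binary class of VC dimension at most $N$. That is your fallback option. The citation buys brevity. Your ghost-sample/Sauer--Shelah derivation is self-contained, exposes the side conditions the citation hides ($M\epsilon\ge 8$, measurability), and already closes with the stated constants. So the step you flag as the main obstacle is not one.

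Take $\alpha=c\epsilon/N$ with $c=\tfrac{\ln 2}{2}-\tfrac15\approx 0.1466$ in $\ln M\le \alpha M+\ln\frac{1}{e\alpha}$. Then $N\ln\frac{2eM}{N}\le c\epsilon M+N\ln\frac{2}{c\epsilon}$, so $2(2eM/N)^N2^{-M\epsilon/2}<\delta$ holds whenever
\[
M>\frac{5}{\epsilon}\Big(\ln\frac{2}{\delta}+N\ln\frac{13.7}{\epsilon}\Big).
\]
This is implied strictly by \eqref{MM}, since $2/c<13.7$ and $\ln 4>\ln 2$. Also, \eqref{MM} gives $M\epsilon>5(\ln 4+N\ln 40)>25$ for $N\ge 1$, so $M\epsilon\ge 8$ holds. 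For $N=0$ the class is a single function, and $(1-\epsilon)^M\le(\delta/4)^5<\delta$ directly. In fact $(5,4,40)$ is exactly what the same absorption step gives, with $c=1/20$, from the weaker tail $4\,\Pi(2M)\,e^{-\epsilon M/4}$ ($\Pi$ the growth function). Your sharper tail therefore has slack, and no refined ghost-sample lemma is needed.

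One correction: your countable-dense-subset fix for measurability does not work as stated. Continuity of $h$ in $\bm{a}$ is not part of Assumption~\ref{ass:VC_finite}; it enters only in Assumption~\ref{as:h}. Even granting continuity, $\bm{a}\mapsto\widehat f_{\bm{a}}(\bm{d})$ is merely lower semicontinuous. Moreover, the zero-empirical-error set $\{\bm{a}:\textnormal{E}_{\bm{d},M}[\widehat f_{\bm{a}}]=0\}$ can have empty interior: in the linear case it lies in a hyperplane whenever $\bm{\psi}(\bm{x},\bm{d}^{(2)})=-\bm{\psi}(\bm{x},\bm{d}^{(1)})\neq\bm{0}$. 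So a fixed countable set need not contain any offending $\bm{a}$ even when one exists.

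Use your other option instead, permissibility of the class, which is what \cite{alamo2009randomized} and the paper tacitly assume. Alternatively, under joint continuity the failure event is the projection of a Borel set and hence universally measurable. The same caveat applies to choosing the bad $\bm{a}$ as a function of $D^M$ in your symmetrization step.
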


Under Assumption \ref{ass:VC_finite} and Lemma \ref{coro4}, we obtain the following conclusion. For each $\bm{x}\in\mathbb{X}$, with confidence at least $1-\delta$,
\[
\textnormal{P}_{\bm d}\!\big[h(\bm a^*(D^M),\bm f(\bm x,\bm d))\ge\gamma\,h(\bm a^*(D^M),\bm x)\big] \ge 1-\epsilon,
\]
where $\bm a^*(D^M)$ is obtained by solving \eqref{eq:RBF_linear21}. Consequently, with the same confidence, if $\bm{x}$ belongs to
\[
\mathbb{X}_s = \{\bm{x}\in\mathbb{X} \mid h(\bm a^*(D^M),\bm{x}) > 0\},
\]
the probability that the system \eqref{eq:system} starting from $\bm{x}$ remains within $\mathbb{X}_s$ at the next step is at least $1-\epsilon$. The above statements hold provided that the sample size $M$ satisfies \eqref{MM}.

\begin{theorem}[State-pointwise PAC one-step safety guarantee I]
\label{lemma:robust}
    Let $\epsilon, \delta$, and $\gamma \in (0,1)$ be given constants. Let  $\bm{a}^*(D^M)$ be the (locally/globally) optimal solution to \eqref{eq:RBF_linear21} with $D^M=\{\bm{d}^{(j)}\}_{j=1}^M\stackrel{\text{i.i.d.}}{\sim}\textnormal{P}_{\bm{d}}$, and define the certification procedure
\begin{equation}
\label{cp}
\mathcal{A}(D^M) := 
\begin{cases}
1, & \text{if } ~\textcolor{black}{\mathbb{X}_s\neq \emptyset},\\
0, & \text{otherwise,}
\end{cases}
\end{equation}
where $\bm{a}^*(D^M)$ is obtained from \eqref{eq:RBF_linear21}.
  Then the following probabilistic guarantee holds:
    \begin{equation*}
\begin{split}
  & \textnormal{P}_{\bm{d}}^M \left[{\mathbb{X}_s\neq \emptyset} \Rightarrow 
  \Big( \begin{split}
\textnormal{P}_{\bm{d}}\left[
\begin{split}
&h(\bm{a}^*(D^M),\bm{f}(\bm{x},\bm{d})) \\
& \geq\gamma h(\bm{a}^*(D^M),\bm{x})
\end{split}
\right]\geq 1-\epsilon\Big)
        \end{split}
        \right]\\   
          &\geq 
          1-\delta,\forall \bm{x}\in \mathbb{X}, 
\end{split}
\end{equation*}
implying 
\begin{equation*}
\begin{split}
   \textnormal{P}_{\bm{d}}^M \left[
    \begin{split}
        {\mathbb{X}_s\!\neq\! \emptyset} \Rightarrow \textnormal{P}_{\bm{d}}[\bm{f}(\bm{x},\bm{d}) \!\in\! \mathbb{X}_s]\geq 1-\epsilon(\bm{x})
        \end{split}
        \right]\geq 
          1-\delta, \forall \bm{x}\in \mathbb{X},
\end{split}
\end{equation*}
where $\mathbb{X}_s=\{\bm{x}\in \mathbb{X}\mid h(\bm{a}^*(D^M),\bm{x}) > 0\}\subseteq \mathbb{X}$, \begin{equation}
\label{epsilon}
\epsilon(\bm{x}) := 
\begin{cases}
\epsilon, & \text{if } \bm{x} \in \mathbb{X}_s,\\[1ex]
1, & \text{if } \bm{x} \in \mathbb{X} \setminus \mathbb{X}_s,
\end{cases}
\end{equation}
and the sample number $M$ satisfies \eqref{MM}.


\end{theorem}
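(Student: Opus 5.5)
We fix an arbitrary $\bm{x}\in\mathbb{X}$ and reduce the first claim to the uniform one-sided bound of Lemma~\ref{coro4}. The key observation is that any (locally or globally) optimal solution $\bm{a}^*(D^M)$ of \eqref{eq:RBF_linear21} is feasible. It therefore satisfies $h(\bm{a}^*,\bm{f}(\bm{x},\bm{d}^{(j)}))\ge\gamma h(\bm{a}^*,\bm{x})$ for every $j=1,\dots,M$ and every $\bm{x}\in\mathbb{X}$, and it lies in $[-U_a,U_a]^m$. Hence $\widehat f_{\bm{a}^*(D^M)}(\bm{d}^{(j)})=0$ for all $j$, that is, $\textnormal{E}_{\bm{d},M}[\widehat f_{\bm{a}^*(D^M)}]=0\le 0$.

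Next, we consider the bad event $B_{\bm{x}}:=\{D^M:\ \mathbb{X}_s\neq\emptyset \wedge \textnormal{E}_{\bm{d}}[\widehat f_{\bm{a}^*(D^M)}]>\epsilon\}$. On $B_{\bm{x}}$ the parameter $\bm{a}=\bm{a}^*(D^M)$ witnesses the event
\[
\exists\,\bm{a}\in[-U_a,U_a]^m:\ \textnormal{E}_{\bm{d},M}[\widehat f_{\bm{a}}]\le 0\wedge \textnormal{E}_{\bm{d}}[\widehat f_{\bm{a}}]>\epsilon .
\]
So $B_{\bm{x}}$ is contained in this event, whose probability is $p(M,\epsilon,0)<\delta$ by Lemma~\ref{coro4}. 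The lemma applies under Assumption~\ref{ass:VC_finite}, which supplies the VC bound $N$ uniformly in $\bm{x}$, and under \eqref{MM}. This uniform-over-$\bm{a}$ statement is exactly what makes the data-dependent choice $\bm{a}^*(D^M)$ legitimate, and it is the step needing care. Whether the solver returns a local or global optimum is irrelevant, since only feasibility is used. Dropping the condition $\mathbb{X}_s\neq\emptyset$ only enlarges the event, so the bound still holds. Taking complements gives, with probability at least $1-\delta$, either $\mathbb{X}_s=\emptyset$ or $\textnormal{P}_{\bm{d}}[h(\bm{a}^*,\bm{f}(\bm{x},\bm{d}))<\gamma h(\bm{a}^*,\bm{x})]\le\epsilon$. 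This is the first displayed guarantee, using $\textnormal{E}_{\bm{d}}[\widehat f_{\bm{a}}]=\textnormal{P}_{\bm{d}}[h(\bm{a},\bm{f}(\bm{x},\bm{d}))-\gamma h(\bm{a},\bm{x})<0]$.

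For the implication, we work on the same good event and split on $\bm{x}$. If $\bm{x}\in\mathbb{X}\setminus\mathbb{X}_s$, then $\epsilon(\bm{x})=1$ and the claim is trivial. If $\bm{x}\in\mathbb{X}_s$, then $h(\bm{a}^*,\bm{x})>0$, so $h(\bm{a}^*,\bm{f}(\bm{x},\bm{d}))\ge\gamma h(\bm{a}^*,\bm{x})$ implies $h(\bm{a}^*,\bm{f}(\bm{x},\bm{d}))>0$. Since $\bm{f}(\bm{x},\bm{d})\in\widehat{\mathbb{X}}$, and $h(\bm{a}^*,\cdot)\le 0$ on $\overline{\widehat{\mathbb{X}}\setminus\mathbb{X}}$ by the first constraint of \eqref{eq:RBF_linear21}, we get $\bm{f}(\bm{x},\bm{d})\in\mathbb{X}$, and hence $\bm{f}(\bm{x},\bm{d})\in\mathbb{X}_s$. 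Thus $\{h(\bm{a}^*,\bm{f}(\bm{x},\bm{d}))\ge\gamma h(\bm{a}^*,\bm{x})\}\subseteq\{\bm{f}(\bm{x},\bm{d})\in\mathbb{X}_s\}$. By monotonicity of $\textnormal{P}_{\bm{d}}$, the latter has probability at least $1-\epsilon=1-\epsilon(\bm{x})$.

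Measurability of the event in $D^M$ is implicitly assumed, as it is in Lemma~\ref{coro4}.
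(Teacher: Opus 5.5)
Your proposal is correct and takes essentially the same route as the paper. Feasibility of $\bm{a}^*(D^M)$ gives zero empirical violation; the data-dependent bad event is then contained in the uniform event of Lemma~\ref{coro4}; and one-step invariance of $\mathbb{X}_s$ follows from $\gamma h(\bm{a}^*,\bm{x})>0$. Your version is slightly more careful than the paper's in one place: you explicitly use $h(\bm{a}^*,\cdot)\le 0$ on $\overline{\widehat{\mathbb{X}}\setminus\mathbb{X}}$ to conclude $\bm{f}(\bm{x},\bm{d})\in\mathbb{X}$, a step the paper leaves implicit.
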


\oomit{We now extend the PAC one-step safety certification of Lemma \ref{lemma:robust} to the multi-step setting, deriving the corresponding safety guarantees on the set $\mathbb{X}_s$.

\begin{theorem}[State-pointwise Finite-time PAC Safety Guarantee I]
\label{thm:vc}
    Let $\epsilon, \delta$, and $\gamma \in (0,1)$ be given constants. Let $\bm{a}^*(D^M)$ and $\xi^*(D^M)$ be the optimal solution to \eqref{eq:RBF_linear2} with $D^M=\{\bm{d}^{(j)}\}_{j=1}^M\stackrel{\text{i.i.d.}}{\sim}\textnormal{P}_{\bm{d}}$ \big(if $\xi^*(D^M)=0$, $\bm{a}^*(D^M)$ is obtained via solving \eqref{eq:RBF_linear21}\big), and  the certification procedure $\mathcal{A}(D^M)$ be defined as in \eqref{cp}. Then the following probabilistic guarantee holds:
\begin{equation*}
\begin{split}
   &\textnormal{P}_{\bm{d}}^M \left[
   \xi^*(D^M)=0 \Rightarrow
    \begin{split}
         &\Big(\textnormal{P}_{\pi}[\wedge_{i=1}^k \bm{\phi}_{\pi}^{\bm{x}}(i) \in \mathbb{X}_s]\geq (1-\epsilon(\bm{x}))^k\Big)
        \end{split}
        \right]\\
        &\geq 
          1-k\delta, \forall \bm{x}\in \mathbb{X},
\end{split}
\end{equation*}
where $\mathbb{X}_s=\{\bm{x}\in \mathbb{X}\mid h(\bm{a}^*(D^M),\bm{x}) > 0\} \subseteq \mathbb{X}$, $\epsilon(\bm{x})$ is defined in \eqref{epsilon}, and the sample number $M$ satisfies \eqref{MM}.
\end{theorem}
\begin{proof}
The system \eqref{eq:system} evolves as a Markov process.
By Lemma~\ref{lemma:robust}, for each fixed $\bm x\in\mathbb X$, with probability at least $1-\delta$ over the draw of $D^M$,
if $\xi^*(D^M)=0$ and $\bm x\in\mathbb X_s$, then
\[
\textnormal{P}_{\pi}\big[\bm\phi_{\pi}^{\bm x}(1)\in\mathbb X_s\big]\ge 1-\epsilon.
\]

\noindent\textbf{Step 1. (Union bound over $k$ steps).}
For the $k$ one-step conditions corresponding to
\(\bm\phi_{\pi}^{\bm x}(i)\in\mathbb X_s\) for $i=0,\ldots,k-1$,
define events
\[
\mathbb{E}_i := \big\{ D^M \mid
     \textnormal{P}_{\pi}\big[\bm\phi_{\pi}^{\bm x}(i+1)\in\mathbb X_s
          \mid \bm\phi_{\pi}^{\bm x}(i)\in\mathbb X_s \big]
     \ge 1-\epsilon
   \big\}.
\]
Each $\mathbb{E}_i$ occurs with probability at least $1-\delta$ by Lemma~\ref{lemma:robust} applied pointwise.
By the union bound,
\[
\textnormal{P}_{\bm{d}}^M\!\Big[\bigcap_{i=0}^{k-1} \mathbb{E}_i\Big]
   \ge 1 - \sum_{i=0}^{k-1}\textnormal{P}_{\bm{d}}^M[\mathbb{E}_i^c]
   \ge 1-k\delta.
\]

\noindent\textbf{Step 2. (Markov property).}
Condition on any realization $D^M$ inside $\bigcap_i \mathbb{E}_i$ and with $\xi^*(D^M)=0$.
For $\bm x\in\mathbb X_s$, the Markov property yields
\[
\textnormal{P}_{\pi}\!\Big[\bigwedge_{i=0}^{k-1}\bm\phi_{\pi}^{\bm x}(i+1)\in\mathbb X_s\Big]
= \prod_{i=0}^{k-1}
   \textnormal{P}_{\pi}\big[\bm\phi_{\pi}^{\bm x}(i+1)\in\mathbb X_s
                 \mid \bm\phi_{\pi}^{\bm x}(i)\in\mathbb X_s\big].
\]
Each factor is at least $1-\epsilon$ by the definition of $\mathbb{E}_i$,
hence the product is bounded below by $(1-\epsilon)^k$.

\noindent\textbf{Step 3. (Outer confidence).}
Since $\bigcap_i \mathbb{E}_i$ has probability at least $1-k\delta$ over $D^M$,
the overall confidence that the $k$-step safety probability exceeds $(1-\epsilon)^k$
whenever $\xi^*(D^M)=0$ is at least $1-k\delta$.
\end{proof}
}
\oomit{
\begin{proof}
The system \eqref{eq:system} evolves as a Markov process. By Lemma \ref{lemma:robust}, for each fixed $\bm{x} \in \mathbb{X}$, with probability at least $1-\delta$ over $D^M$, if $\xi^*(D^M)=0$ and $\bm{x} \in \mathbb{X}_s$, then:
\[
\text{P}_{\pi}[\bm{\phi}_{\pi}^{\bm{x}}(1) \in \mathbb{X}_s] \geq 1-\epsilon.
\]

Now consider the $k$-step safety probability. For $\bm{x} \in \mathbb{X}_s$ and $\xi^*(D^M)=0$, the Markov property implies:
\[
\begin{split}
&\text{P}_{\pi}[\wedge_{i=0}^{k-1} \bm{\phi}_{\pi}^{\bm{x}}(i+1) \in \mathbb{X}_s] \\
&= \prod_{i=0}^{k-1} \text{P}_{\pi}[\bm{\phi}_{\pi}^{\bm{x}}(i+1) \in \mathbb{X}_s \mid \bm{\phi}_{\pi}^{\bm{x}}(i) \in \mathbb{X}_s] \\
&\geq (1-\epsilon)^k,
\end{split}
\]
where the inequality follows because for any state $\bm{x}' \in \mathbb{X}_s$, the one-step safety probability from $\bm{x}'$ is at least $1-\epsilon$ (by the same reasoning as in Lemma \ref{lemma:robust} applied to $\bm{x}'$).

The confidence level $1-k\delta$ comes from applying the union bound over the $k$ time steps, where each step's guarantee holds with probability at least $1-\delta$ over $D^M$.
\end{proof}
}

\begin{remark}
\label{cover_X}
    In Theorem \ref{lemma:robust}, we provide safety certification only for $\bm{x} \in \mathbb{X}_s$. 
The definition \eqref{epsilon} on $\epsilon(\bm{x})$ ensures that the PAC safety guarantee holds over the entire domain $\mathbb{X}$, consistent with the formulation of Problem \ref{prob:1}, while trivially assigning zero probability of safety to states outside $\mathbb{X}_s$.
\end{remark}


The proof of Theorem~\ref{lemma:robust} is provided in Appendix~\ref{sec:proof_theorem_robust}. The above VC-dimension–based analysis applies to any class of barrier functions satisfying Assumption~\ref{ass:VC_finite}. However, the required sample complexity can be large, imposing substantial computational burden in practice. On the other hand, the one-step PAC guarantee in Theorem~\ref{lemma:robust} is state-pointwise: for any fixed state $\bm{x} \in \mathbb{X}$ it provides, with a certain confidence, a probability lower bound on the one-step safety probability originating from that state. This pointwise property does not directly extend to $k$-step safety for the actual (random) trajectory. 
{We will investigate this in the future work.} To alleviate these issues, we next focus on a special class of barrier functions that are linear in $\bm{a}$ and employ scenario approach \cite{calafiore2006scenario} to reduce the sample complexity and provide PAC $k$-step safety guarantees.

\subsubsection{Scenario Optimization Based Methods}
In this section, we employ scenario approach, originally proposed for solving robust optimization problems in \cite{calafiore2006scenario}, to reduce the sample complexity of the VC dimension-based method discussed above and provide PAC $k$-step safety guarantees, under Assumption \ref{as:h}.

\begin{assumption}
\label{as:h}
The parameterized function $h(\bm{a},\cdot): \mathbb{R}^n\rightarrow \mathbb{R}$ is linear over unknown parameters $\bm{a} \in \mathbb{R}^m$ with $h(\bm{0},\bm{x})\equiv 0$, and continuous with respect to $\bm{a}\in \mathbb{R}^{m}$ and $\bm{x}\in \mathbb{R}^n$. 
\end{assumption}

Under Assumption~\ref{as:h}, a conclusion stronger than Theorem~\ref{lemma:robust} can be established by solving \eqref{eq:RBF_linear21}. It provides a state-uniform PAC one-step safety guarantee while requiring fewer samples. 
\begin{lemma}[State-uniform PAC one-step safety guarantee]
\label{lemma:robust_s}
    Let $\epsilon, \delta$, and $\gamma \in (0,1)$ be given constants. Let $\bm{a}^*(D^M)$ be the optimal solution to \eqref{eq:RBF_linear21} with $D^M=\{\bm{d}^{(j)}\}_{j=1}^M\stackrel{\text{i.i.d.}}{\sim}\textnormal{P}_{\bm{d}}$, and the certification procedure $\mathcal{A}(D^M)$ be defined as in \eqref{cp}. Then, under Assumption \ref{as:h}, the following probabilistic guarantee holds:
    \begin{equation*}
\begin{split}
  & \textnormal{P}_{\bm{d}}^M \left[{\mathbb{X}_s\!\neq\! \emptyset}  \Rightarrow 
    \begin{split}
\Big(\forall \bm{x}\in \mathbb{X}, \textnormal{P}_{\bm{d}}\left[
\begin{split}
&h(\bm{a}^*(D^M),\bm{f}(\bm{x},\bm{d})) \\
& \geq\gamma h(\bm{a}^*(D^M),\bm{x})
\end{split}
\right]\geq 1-\epsilon \Big)
        \end{split}
        \right]\\   
          &\geq 
          1-\delta, 
\end{split}
\end{equation*}
implying 
\begin{equation*}
\begin{split}
   &\textnormal{P}_{\bm{d}}^M \left[
    \begin{split}
     {\mathbb{X}_s\!\neq\! \emptyset} \Rightarrow \Big(\forall \bm{x}\in \mathbb{X}_s, \textnormal{P}_{\bm{d}}[\bm{f}(\bm{x},\bm{d}) \in \mathbb{X}_s]\geq 1-\epsilon \Big)
        \end{split}
        \right]\geq 
          1-\delta,
\end{split}
\end{equation*}
where $\mathbb{X}_s=\{\bm{x}\in \mathbb{X}\mid h(\bm{a}^*(D^M),\bm{x}) > 0\}$ and the number $M$ satisfies \begin{equation}
    \label{MM_so}
    M \geq \frac{2}{\epsilon}(\ln{\frac{1}{\delta}}+m).
\end{equation}
\end{lemma}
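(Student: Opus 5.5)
The plan is to read \eqref{eq:RBF_linear21} as a convex scenario program in $\bm{a}$ and invoke the scenario-approach result of Calafiore--Campi \cite{calafiore2006scenario}, in the refined form $M\ge\frac{2}{\epsilon}(\ln\frac1\delta+m)$. Under Assumption~\ref{as:h}, $h(\bm{a},\bm{y})=\bm{a}^\top\bm{g}(\bm{y})$ for a continuous $\bm{g}$. The objective $-\frac{1}{N_o}\sum_i \bm{a}^\top\bm{g}(\bm{x}'_i)$ is then linear. The deterministic constraints $\bm{a}^\top\bm{g}(\bm{x})\le 0$ on $\overline{\widehat{\mathbb{X}}\setminus\mathbb{X}}$, together with the box $[-U_a,U_a]^m$, define a fixed compact convex set $\mathcal{K}$.

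For each disturbance value $\bm{d}$, define the convex, closed constraint set
\[
\mathcal{C}_{\bm{d}}:=\{\bm{a}\mid \bm{a}^\top\bm{\psi}(\bm{x},\bm{d})\ge 0,\ \forall \bm{x}\in\mathbb{X}\}.
\]
It is an intersection of half-spaces indexed by $\bm{x}$, with $\bm{\psi}(\bm{x},\bm{d})=\bm{g}(\bm{f}(\bm{x},\bm{d}))-\gamma\bm{g}(\bm{x})$. The key point is to treat the whole state-indexed family of constraints as a \emph{single} scenario constraint parametrized by $\bm{d}$. Then \eqref{eq:RBF_linear21} is exactly $\min_{\bm{a}\in\mathcal{K}\cap\bigcap_{j}\mathcal{C}_{\bm{d}^{(j)}}} c^\top\bm{a}$, a standard convex scenario program with $m$ decision variables. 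It is always feasible since $\bm{0}$ is admissible, and the feasible set is compact, so an optimizer exists. The scenario theorem then says that, with probability at least $1-\delta$ over $D^M$, the violation probability satisfies $V(\bm{a}^*):=\textnormal{P}_{\bm{d}}[\bm{a}^*\notin\mathcal{C}_{\bm{d}}]\le\epsilon$ whenever $M$ satisfies \eqref{MM_so}. Here I use the bound $\sum_{i=0}^{m-1}\binom{M}{i}\epsilon^i(1-\epsilon)^{M-i}\le\delta$ and its standard sufficient condition $M\ge\frac{2}{\epsilon}(\ln\frac1\delta+m)$.

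Given $V(\bm{a}^*)\le\epsilon$, the uniform-in-$\bm{x}$ statement follows without any union bound. For every $\bm{x}\in\mathbb{X}$, the event $\{\bm{d}\mid h(\bm{a}^*,\bm{f}(\bm{x},\bm{d}))<\gamma h(\bm{a}^*,\bm{x})\}$ is contained in $\{\bm{d}\mid\bm{a}^*\notin\mathcal{C}_{\bm{d}}\}$, so its probability is at most $\epsilon$. This holds simultaneously for all $\bm{x}$ on the same good event over $D^M$, and adding the antecedent $\mathbb{X}_s\neq\emptyset$ only enlarges that event. For the second implication, fix $\bm{x}\in\mathbb{X}_s$, so $h(\bm{a}^*,\bm{x})>0$. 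On the event $h(\bm{a}^*,\bm{f}(\bm{x},\bm{d}))\ge\gamma h(\bm{a}^*,\bm{x})>0$ we get $h(\bm{a}^*,\bm{f}(\bm{x},\bm{d}))>0$. Since $\bm{f}(\bm{x},\bm{d})\in\widehat{\mathbb{X}}$ and $h(\bm{a}^*,\cdot)\le 0$ on $\overline{\widehat{\mathbb{X}}\setminus\mathbb{X}}$, the successor must lie in $\mathbb{X}$, and hence in $\mathbb{X}_s$.

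I expect the main obstacle to be the technical hypotheses of the scenario theorem. First, measurability of $\bm{d}\mapsto\mathbf 1[\bm{a}\notin\mathcal{C}_{\bm{d}}]$ must be checked; it follows from continuity of $\bm{f}$ and $\bm{g}$ and compactness of $\mathbb{X}$, which make $\bm{d}\mapsto\min_{\bm{x}\in\mathbb{X}}\bm{a}^\top\bm{\psi}(\bm{x},\bm{d})$ continuous. Second, the classical theorem assumes a unique optimizer. I would handle this with the usual convex tie-break rule, for example lexicographic or minimum-norm selection among optimal solutions, or by appealing to the version of \cite{calafiore2006scenario} that allows such a rule. I would remark that $\bm{a}^*(D^M)$ is understood as this tie-broken solution.
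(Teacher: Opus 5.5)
Your proposal is correct and follows essentially the same route as the paper. You bundle the $\bm{x}$-indexed constraints into a single convex scenario constraint per disturbance; the paper writes this as $\max_{\bm{x}\in\mathbb{X}}[-h(\bm{a},\bm{f}(\bm{x},\bm{d}))+\gamma h(\bm{a},\bm{x})]\le 0$ and checks convexity and continuity in $\bm{a}$, while you write it as an intersection of half-spaces. You then apply the scenario theorem of \cite{calafiore2006scenario} to bound the violation probability of $\bm{a}^*(D^M)$, and read off both the uniform-in-$\bm{x}$ bound and the one-step invariance of $\mathbb{X}_s$.

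Two of your details are ones the paper glosses over. You handle the uniqueness requirement explicitly via a tie-break rule. You also use the binomial-tail condition $\sum_{i=0}^{m-1}\binom{M}{i}\epsilon^i(1-\epsilon)^{M-i}\le\delta$, which is what actually yields \eqref{MM_so}; as far as I recall, the sample bound stated in Theorem~1 of \cite{calafiore2006scenario} is cruder.
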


\begin{theorem}[State-uniform PAC Finite-Time Safety Certification II]
\label{thm:so}
    Let $\epsilon, \delta$, and $\gamma \in (0,1)$ be given constants. Let $\bm{a}^*(D^M)$ be the optimal solution to \eqref{eq:RBF_linear21} with $D^M=\{\bm{d}^{(j)}\}_{j=1}^M\stackrel{\text{i.i.d.}}{\sim}\textnormal{P}_{\bm{d}}$, and  the certification procedure $\mathcal{A}(D^M)$ be defined as in \eqref{cp}. Then, under Assumption \ref{as:h}, the probabilistic guarantee holds:
 \begin{equation*}
\begin{split}
   &\textnormal{P}_{\bm{d}}^M \left[
    {\mathbb{X}_s\neq \emptyset} \Rightarrow
    \begin{split}
         \Big(\forall \bm{x}\in \mathbb{X}_s, \textnormal{P}_{\pi}[\wedge_{i=1}^k \bm{\phi}_{\pi}^{\bm{x}}(i) \in \mathbb{X}_s]\geq (1-\epsilon)^k\Big)
        \end{split}
        \right]\\
        &\geq 
          1-\delta,
\end{split}
\end{equation*}
where $\mathbb{X}_s=\{\bm{x}\in \mathbb{X}\mid h(\bm{a}^*(D^M),\bm{x}) > 0\}$ and the sample number $M$ satisfies \eqref{MM_so}.
\end{theorem}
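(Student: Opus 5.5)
The plan is to derive Theorem~\ref{thm:so} from the state-uniform one-step guarantee of Lemma~\ref{lemma:robust_s}. The idea is to fix one sample set $D^M$ and then run a purely deterministic Markov-chain argument on the trajectory. Because Lemma~\ref{lemma:robust_s} holds uniformly over all $\bm{x}\in\mathbb{X}$ for a single good event, no union bound over time steps is needed. This is exactly why the confidence stays $1-\delta$ rather than degrading to $1-k\delta$.

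\textbf{Step 1 (good event).} Let $\Omega_g$ be the set of samples $D^M$ for which the implication in the second display of Lemma~\ref{lemma:robust_s} holds:
\[
\mathbb{X}_s\neq\emptyset \Rightarrow \Big(\forall \bm{x}\in\mathbb{X}_s,\ \textnormal{P}_{\bm{d}}[\bm{f}(\bm{x},\bm{d})\in\mathbb{X}_s]\ge 1-\epsilon\Big).
\]
By Lemma~\ref{lemma:robust_s}, with $M$ satisfying \eqref{MM_so}, we have $\textnormal{P}_{\bm{d}}^M[\Omega_g]\ge 1-\delta$. It therefore suffices to show that on $\Omega_g \cap \{\mathbb{X}_s\neq\emptyset\}$ the $k$-step bound holds for every $\bm{x}\in\mathbb{X}_s$. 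On this event $\bm{a}^*$ and $\mathbb{X}_s$ are fixed, deterministic objects. Moreover $\mathbb{X}_s$ is Borel, since $h(\bm{a}^*,\cdot)$ is continuous by Assumption~\ref{as:h}. Hence the function $q(\bm{y}):=\textnormal{P}_{\bm{d}}[\bm{f}(\bm{y},\bm{d})\in\mathbb{X}_s]$ is measurable in $\bm{y}$, and on $\Omega_g$ it satisfies $q(\bm{y})\ge 1-\epsilon$ for all $\bm{y}\in\mathbb{X}_s$.

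\textbf{Step 2 (induction on the horizon).} Define
\[
V_j(\bm{y}):=\textnormal{P}_{\pi}\Big[\bigwedge_{i=1}^{j}\bm{\phi}^{\bm{y}}_{\pi}(i)\in\mathbb{X}_s\Big].
\]
We prove $V_j(\bm{y})\ge(1-\epsilon)^j$ for all $\bm{y}\in\mathbb{X}_s$ by induction on $j$, with base case $V_0\equiv 1$. For the inductive step, use that $\bm{d}(0)$ is independent of $(\bm{d}(1),\bm{d}(2),\dots)$ and that the shifted signal has the same law $\textnormal{P}_{\pi}$. Conditioning on the first disturbance (Fubini on the product measure $\textnormal{P}_{\bm{d}}^\infty$) then gives the recursion
\[
V_{j+1}(\bm{y})=\textnormal{E}_{\bm{d}}\big[\mathbf 1_{\mathbb{X}_s}(\bm{f}(\bm{y},\bm{d}))\,V_j(\bm{f}(\bm{y},\bm{d}))\big].
\]
The indicator forces the argument of $V_j$ to lie in $\mathbb{X}_s$, so the induction hypothesis applies and yields $V_{j+1}(\bm{y})\ge (1-\epsilon)^j\, q(\bm{y})\ge (1-\epsilon)^{j+1}$. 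Taking $j=k$ gives the claim for all $\bm{x}\in\mathbb{X}_s$ simultaneously. Since $\mathbb{X}_s\subseteq\mathbb{X}$, this also implies the original safety requirement.

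\textbf{Main obstacle.} The delicate point is the correct use of uniformity in Step 2. The one-step bound is applied at the random states $\bm{\phi}^{\bm{x}}_\pi(i)$, and these depend on $\pi$ but not on $D^M$. The argument is valid only because, on the fixed event $\Omega_g$, the bound $q\ge 1-\epsilon$ holds at every state of $\mathbb{X}_s$, not merely with high confidence at each fixed state. A state-pointwise guarantee such as Theorem~\ref{lemma:robust} would not suffice here, as the surrounding discussion already notes. I would therefore state explicitly that $D^M$ is independent of $\pi$, so that conditioning on $D^M$ leaves the law of $\pi$ unchanged. I would also verify the measurability needed for the Markov recursion, namely joint measurability of $(\bm{y},\bm{d})\mapsto \mathbf 1_{\mathbb{X}_s}(\bm{f}(\bm{y},\bm{d}))$, which follows from continuity of $\bm{f}$ and $h$.
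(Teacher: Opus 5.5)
Your proof is correct and has the same skeleton as the paper's. You take the single uniform event supplied by Lemma~\ref{lemma:robust_s}, argue deterministically on it, and so keep confidence $1-\delta$ with no union bound over time. Where you differ is the chaining step.

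The paper works forward. Writing $\bm{x}_i:=\bm{\phi}^{\bm{x}}_{\pi}(i)$, it expresses $\textnormal{P}_{\pi}[\bigwedge_{i=1}^k \bm{x}_i\in\mathbb{X}_s]$ as $\prod_{i=0}^{k-1}\textnormal{P}_{\pi}[\bm{x}_{i+1}\in\mathbb{X}_s\mid \bm{x}_i\in\mathbb{X}_s]$. It bounds each factor by averaging $q$ over the conditional law of $\bm{x}_i$ given $\bm{x}_i\in\mathbb{X}_s$. To make that conditioning well defined, it argues $\textnormal{P}[\bm{x}_t\in\mathbb{X}_s]>0$ via a ``Lebesgue-type'' measure.

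Your backward recursion $V_{j+1}(\bm{y})=\textnormal{E}_{\bm{d}}[\mathbf{1}_{\mathbb{X}_s}(\bm{f}(\bm{y},\bm{d}))\,V_j(\bm{f}(\bm{y},\bm{d}))]$ never conditions on an event. It needs only the measurability you already check via continuity of $\bm{f}$ and $h$ and Fubini on $\textnormal{P}_{\bm{d}}^{\infty}$.

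This buys genuine rigor, for two reasons.
\begin{itemize}
\item The paper's product formula is not an identity in general. The chain rule conditions on the whole history $\bigcap_{j\le i}\{\bm{x}_j\in\mathbb{X}_s\}$, not on $\{\bm{x}_i\in\mathbb{X}_s\}$ alone. The bound $\ge 1-\epsilon$ holds for either kind of factor because $q\ge 1-\epsilon$ on $\mathbb{X}_s$, so the conclusion survives.
\item The law of $\bm{x}_t$ is dictated by the dynamics and need not charge every nonempty open set. So the paper's positivity argument is not valid as written. Positivity does hold, but because $\textnormal{P}[\bm{x}_t\in\mathbb{X}_s]\ge(1-\epsilon)^t>0$ by induction.
\end{itemize}
Your recursion sidesteps both issues. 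The forward form's only advantage is its intuitive reading as a per-step conditional safety probability.

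A minor point: you do not actually need independence of $D^M$ and $\pi$. $\textnormal{P}_{\pi}$ in the statement is a fixed measure, evaluated on a set that merely depends on $D^M$.
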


The proofs of Lemma \ref{lemma:robust_s} and Theorem \ref{thm:so} are given in Appendix~\ref{sec:proof_lemma_robust_s} and \ref{sec:proof_thm_so}. Please refer to Remark \ref{cover_X} for the extension to all $\bm{x} \in \mathbb{X}$. 

Theorems~\ref{lemma:robust} and~\ref{thm:so} respectively provide 
state-pointwise PAC one-step and state-uniform PAC k-steps safety guarantees, 
when $\bm{x} \in \mathbb{X}_s$. 
It is noting that the state-uniform PAC guarantee in 
Lemma~\ref{lemma:robust_s} 
are stronger than the state-pointwise PAC guarantee in 
Theorem~\ref{lemma:robust}; 
the former implies the latter. 

The resulting set $\mathbb{X}_s$ is particularly valuable in safety-critical systems, 
as it delineates a highly trustworthy region in which the system can safely operate. 
Conceptually, $\mathbb{X}_s$ can be regarded as a set that plays a role analogous to an invariant set: 
if the system starts from an initial state within $\mathbb{X}_s$, 
it will remain inside this set over the next $k(\geq 1)$ steps 
with probability at least $(1-\epsilon)^k$ 
and confidence at least $1-\delta$. 
The safety certification can be applied recursively in an online manner,
providing continuous assurance of safe operation 
whenever the system remains within $\mathbb{X}_s$ over time. 

However, many systems may admit overly conservative $\mathbb{X}_s$ 
or even fail to admit a nonempty $\mathbb{X}_s$, 
particularly when the desired confidence and probability thresholds 
$\delta$ and $\epsilon$ are small. On the other hand, the robustness-based formulation \eqref{eq:RBF_linear21} can be overly conservative or even infeasible for many stochastic systems. This method enforces safety under all sampled disturbances, effectively performing a worst-case analysis over the empirical disturbance set. As a result, rare tail events are treated with the same importance as high-probability disturbances, which can lead to unnecessarily restrictive safety certificates. To overcome the above mentioned issues, we next develop an alternative approach for solving Problem \ref{prob:1} using stochastic barrier functions that explicitly incorporate disturbance probabilities.

\subsection{Certification via Stochastic Barrier Functions}
\label{sec:probabilistic}
In this section, we address the solution of Problem \ref{prob:1} with $k=1$ using stochastic barrier functions (SBFs). 

\begin{definition} 
    \label{def:sbf}
    Given a safe set $\mathbb{X}$, a function $h(\cdot):\mathbb{R}^n\rightarrow\mathbb{R}$ is a SBF for the system 
    \eqref{eq:system} if there exists a scalar $\lambda \in [0,1]$ that meets the following condition:
    \begin{subequations}
    \label{eq:sbf}
    \begin{empheq}[left=\empheqlbrace]{align}
        &h(\bm{x}) \geq 0, &\forall \bm{x}\in \mathbb{X}, \label{eq:sbf1} \\
        &h(\bm{x}) \geq 1, & \forall \bm{x}\in \overline{\widehat{\mathbb{X}}\setminus \mathbb{X}}, \label{eq:sbf2} \\
        &\textnormal{E}_{\bm{d}}[h(\bm{f}(\bm{x},\bm{d}))] - h(\bm{x}) \leq \lambda, &\forall \bm{x} \in \mathbb{X}. \label{eq:sbf3}
    \end{empheq}
    \end{subequations}
\end{definition}

\begin{proposition} [\cite{kushner1967}]
\label{finite_t}
    If there exist a SBF $h(\cdot): \mathbb{R}^n\rightarrow\mathbb{R}$ and  $\lambda\in[0,1]$, then \[\textnormal{P}_{\pi}\Big[\wedge_{i=1}^k \bm{\phi}_{\pi}^{\bm{x}}(i)\in \mathbb{X}\Big] \geq 1 - k\lambda - h(\bm{x}), \forall  \bm{x}\in\mathbb{X}. \] 
\end{proposition}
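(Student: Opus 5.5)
The plan is the standard supermartingale argument on a stopped process. Fix $\bm{x}\in\mathbb{X}$ and write $\bm{x}_j:=\bm{\phi}_{\pi}^{\bm{x}}(j)$. Let $\tau:=\inf\{j\in\mathbb{N}\mid \bm{x}_j\notin\mathbb{X}\}$ be the first exit time from $\mathbb{X}$, and let $j\wedge\tau$ denote $\min\{j,\tau\}$. Consider the stopped process $Y_j:=h(\bm{x}_{j\wedge\tau})$.

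Since $\bm{x}_j\in\mathbb{X}$ for every $j<\tau$, the state $\bm{x}_{j\wedge\tau}$ always lies either in $\mathbb{X}$ or in $\widehat{\mathbb{X}}\setminus\mathbb{X}$. The latter holds because it is one step from $\mathbb{X}$, and $\widehat{\mathbb{X}}$ contains all one-step successors of $\mathbb{X}$. Hence $h$ is only ever evaluated on $\widehat{\mathbb{X}}$, where the conditions \eqref{eq:sbf1}--\eqref{eq:sbf2} give $Y_j\ge 0$. The process is also integrable, since $h$ is bounded on the compact set $\widehat{\mathbb{X}}$ (assuming $h$ is continuous or at least bounded there, as for polynomial templates).

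The key step is the one-step drift bound
\[
\textnormal{E}_{\pi}[Y_{j+1}\mid \mathcal{F}_j]\le Y_j+\lambda ,
\]
where $\mathcal{F}_j$ is the $\sigma$-algebra generated by $\bm{d}(0),\dots,\bm{d}(j-1)$. On $\{\tau\le j\}$ we have $Y_{j+1}=Y_j$. On $\{\tau>j\}$ we have $\bm{x}_j\in\mathbb{X}$, and since $\bm{d}(j)$ is independent of $\mathcal{F}_j$ with law $\textnormal{P}_{\bm{d}}$,
\[
\textnormal{E}_{\pi}[Y_{j+1}\mid\mathcal{F}_j]=\textnormal{E}_{\bm{d}}[h(\bm{f}(\bm{x}_j,\bm{d}))]\le h(\bm{x}_j)+\lambda
\]
by \eqref{eq:sbf3}. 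Taking expectations and iterating from $Y_0=h(\bm{x})$ gives
\[
\textnormal{E}_{\pi}[Y_k]\le h(\bm{x})+k\lambda .
\]

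Next I bound the unsafe event. On the event that $\bm{x}_i\notin\mathbb{X}$ for some $i\in\{1,\dots,k\}$, we have $\tau\le k$. Then $\bm{x}_{k\wedge\tau}=\bm{x}_\tau\in\widehat{\mathbb{X}}\setminus\mathbb{X}$, so $Y_k\ge 1$ by \eqref{eq:sbf2}. Because $Y_k\ge 0$ everywhere, Markov's inequality yields
\[
\textnormal{P}_{\pi}\big[\exists i\le k,\ \bm{x}_i\notin\mathbb{X}\big]\le\textnormal{P}_{\pi}[Y_k\ge 1]\le \textnormal{E}_{\pi}[Y_k]\le h(\bm{x})+k\lambda .
\]
Taking complements gives the claimed bound $1-k\lambda-h(\bm{x})$.

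The main obstacle is the careful handling of the stopping. We must ensure that $h$ is evaluated only at points where the SBF conditions apply, that is, in $\mathbb{X}$ before exit and in $\widehat{\mathbb{X}}\setminus\mathbb{X}$ at the exit time. We must also justify the conditional expectation step using the independence of $\bm{d}(j)$ from the past together with measurability of $h\circ\bm{f}$. Both points follow from the definition of $\widehat{\mathbb{X}}$ and the i.i.d.\ structure of the disturbance signal.
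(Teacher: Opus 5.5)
Your proof is correct. The paper gives no proof of its own, only the citation to Kushner and the remark that $\widehat{\mathbb{X}}$ serves to define a stopped process, and your argument is the standard stopped-$c$-martingale proof behind that result; freezing the process at the exit point, where $h\ge 1$, is what lets you use Markov's inequality at time $k$ instead of a Doob/Ville maximal inequality. The boundedness assumption you add for integrability is unnecessary, since $Y_j\ge 0$ makes every (conditional) expectation well defined in $[0,\infty]$ and the recursion from $Y_0=h(\bm{x})<\infty$ keeps them finite, so only measurability of $h$ is needed.
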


\begin{remark}
  In \cite{kushner1967}, the domain $\mathbb{R}^n$ is used in \eqref{eq:sbf1}–\eqref{eq:sbf3}, rather than the compact set $\widehat{\mathbb{X}}$ later introduced in \cite{xue2024finite}. Although the primary purpose of introducing $\widehat{\mathbb{X}}$ in \cite{xue2024finite} was to define a stopped process, restricting attention to a smaller compact set generally leads to more accurate results when performing approximations.
\end{remark}

In the following, we present our stochastic barrier function method for addressing Problem~\ref{prob:1} with $k=1$, leveraging the constraints \eqref{eq:sbf1}–\eqref{eq:sbf3} together with tools from statistical learning theory, in particular Rademacher complexity. 


Since the disturbance distribution $\textnormal{P}_{\bm{d}}$ is unknown, the expectation term $\textnormal{E}_{\bm{d}}[h(\bm{a}, \bm{f}(\bm{x},\bm{d}))]$ cannot be evaluated directly. To address this issue, we extract $M$ i.i.d disturbance samples of the form $D^M=\{\bm{d}^{(i)}\}_{i=1}^{M}$ and use these samples to empirically estimate the expectation, i.e., $\textnormal{E}_{\bm{d}}[h(\bm{a}, \bm{f}(\bm{x},\bm{d}))] \lesssim \frac{1}{M}\sum_{j=1}^{M}h(\bm{a}, \bm{f}(\bm{x},\bm{d}^{(j)})) + \tau$, where $\tau \in [0,1]$ is a deviation term accounting for the finite-sample error. This leads to the following  optimization problem \eqref{eq:SBF_linear11}:
\begin{equation}
    \label{eq:SBF_linear11}
  \begin{split}
    &\min_{\lambda \in \mathbb{R}, ~\bm{a}\in \mathbb{R}^m} {\textstyle\frac{1}{N_o}\sum_{i=1}^{N_o}\big(\lambda+h(\bm{a},\bm{x}'_i)\big)}\\
    \text{s.t.}&
    \begin{cases}
       h(\bm{a},\bm{x}) \geq 0, &\forall \bm{x}\in \mathbb{X},\\
       h(\bm{a},\bm{x})\geq 1, &\forall \bm{x}\in \overline{\widehat{\mathbb{X}}\setminus \mathbb{X}},\\
       \hat{h}(\bm{x},\bm{a},\lambda)\leq 0,  &\forall \bm{x} \in \mathbb{X}, \\
        \lambda \in [0,1]; ~ \|\bm{a}\|_2 \leq  U_a,
    \end{cases}
    \end{split}
\end{equation}
where  $\hat{h}(\bm{x},\bm{a},\lambda):=\frac{1}{M}\sum_{j=1}^{M}h(\bm{a}, \bm{f}(\bm{x},\bm{d}^{(j)})) -(h(\bm{a},\bm{x})  + \lambda - \tau)$,  $U_a \geq 1$ is a user-defined scalar value, and $\{\bm{x}'_i\}_{i=1}^{N_o}$ is a family of specified states distributed evenly over $\mathbb{X}$, which can be generated by random sampling from a uniform distribution over the state space $\mathbb{X}$. 

In addition, we can demonstrate that \eqref{eq:SBF_linear11} is feasible, i.e., there exists $(\lambda,\bm{a})$ satisfying \eqref{eq:SBF_linear11}. Any $h(\bm{a},\bm{x})\equiv U_a\in [0,U_a]$ for $\bm{x}\in \mathbb{X}$ with $\lambda= \tau$ will be a solution \eqref{eq:SBF_linear11}.

Before establishing the connection between the sample-based optimization \eqref{eq:SBF_linear11} and Problem~\ref{prob:1} with $k=1$, we first define a function class based on $h(\bm{a}, \bm{x})$ as used in \eqref{eq:SBF_linear11}. The Rademacher complexity of this function class provides a quantitative tool to link the empirical optimization with PAC finite-time safety guarantees.

\begin{assumption}
\label{norm_bound}
    For a fixed $\bm{x}$, define the function class: \[\mathbb{G}_{\bm{x}}=\Big\{h(\bm{a},\bm{f}(\bm{x},\cdot))=\bm{a}^T \cdot \bm{\phi}(\bm{x},\bm{d})\mid \|\bm{a}\|_2 \leq  U_a\Big\}.\] This class is linear in $\bm{a}$: for fixed $\bm{x}$ and $\bm{d}$, $h(\bm{a},\bm{f}(\bm{x},\bm{d}))$ is linear in $\bm{a}$.
Moreover, there exists a scalar value $R$ such that $\|\bm{\phi}(\bm{x},\bm{d})\|_2\leq R$ uniformly for all $\bm{x}\in \mathbb{X}$ and $\bm{d}\in \mathbb{D}$.
\end{assumption}

Under Assumption \ref{norm_bound}, a standard Rademacher complexity argument implies the following generalization guarantee. For each $\bm{x}\in\mathbb{X}$, with confidence at least $1-\delta$, the expected value
$\textnormal{E}_{\bm{d}}\big[h(\bm{a}^*(D^M), \bm{f}(\bm{x},\bm{d}))\big]$
is less than or equal to its empirical counterpart 
$\frac{1}{M}\sum_{j=1}^M h(\bm{a}^*(D^M), \bm{f}(\bm{x}, \bm{d}^{(j)}))$
plus $\tau$, where $\bm{a}^*(D^M)$ and $\lambda^*(D^M)$ are obtained by solving \eqref{eq:SBF_linear11}. Consequently, for each $\bm{x}\in\mathbb{X}$, with the same confidence, the probability that the system \eqref{eq:system} starting from $\bm{x}$ remains within the safe set $\mathbb{X}$ at the next step is at least 
$1 - \lambda^*(D^M) - h(\bm{a}^*(D^M), \bm{x})$. The above statements hold provided that the sample size satisfies
 \begin{equation}
    \label{MM_RA}
        M \geq  
    \frac{U_a^2}{\tau^2}\left(
         2R +\sqrt{2\ln \frac{1}{\delta}}
    \right)^{\!2}.
    \end{equation}

\begin{theorem}[State-pointwise PAC one-step safety guarantee II]
    \label{lemma:1}
    Let $\delta \in (0,1)$, $U_a \geq 1$, and $\tau \in [0, 1]$ be given constants.  
    Let $\big(\bm{a}^*(D^M),\lambda^*(D^M)\big)$ be the (locally/globally) optimal solution to \eqref{eq:SBF_linear11} with 
    $D^M=\{\bm{d}^{(j)}\}_{j=1}^M\stackrel{\text{i.i.d.}}{\sim}\textnormal{P}_{\bm{d}}$, and let
\begin{equation*}
\label{cp1}
\mathcal{A}(D^M) := 1 (\text{certificate always accepted})
\end{equation*}
be the certification procedure.
    Then, under Assumption \ref{norm_bound}, the following probabilistic guarantee holds: 
    \begin{equation*}
    \begin{split}
    \textnormal{P}_{\bm{d}}^M \left[ 
    \begin{split}
       & \textnormal{E}_{\bm{d}}[h(\bm{a}^*(D^M), \bm{f}(\bm{x},\bm{d}))]\leq \\
        &\tfrac{1}{M}\sum_{j=1}^M h(\bm{a}^*(D^M), \bm{f}(\bm{x}, \bm{d}^{(j)})) +\tau
        \end{split}
    \right]\geq 1-\delta, \forall \bm{x}\in \mathbb{X}, 
    \end{split}
    \end{equation*}
    provided that the sample size $M$ satisfies \eqref{MM_RA}.  Further, 
    \[
    \begin{split}
    &\textnormal{P}_{\bm{d}}^M \left[
    \begin{split}
       \textnormal{P}_{\bm{d}} \big[\bm{f}(\bm{x},\bm{d})\in \mathbb{X}\big]         \geq  1-\lambda^*(D^M)-h(\bm{a}^*(D^M), \bm{x})
        \end{split}
    \right]\\
    &\geq 1-\delta, \forall \bm{x}\in \mathbb{X}
    \end{split}
    \]
    holds.
\end{theorem}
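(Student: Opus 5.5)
Fix $\bm{x}\in\mathbb{X}$. I would prove a uniform deviation bound over the whole class $\mathbb{G}_{\bm{x}}$ and then specialise it to the data-dependent choice $\bm{a}^*(D^M)$. Working uniformly is essential: $\bm{a}^*(D^M)$ depends on the same samples used in the empirical mean, so a pointwise concentration bound for a fixed $\bm{a}$ does not apply to it.

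\textbf{Step 1: uniform deviation.} Define
\[
\Phi(D^M):=\sup_{\|\bm{a}\|_2\le U_a}\Big(\textnormal{E}_{\bm{d}}[\bm{a}^\top\bm{\phi}(\bm{x},\bm{d})]-\tfrac{1}{M}\textstyle\sum_{j=1}^M\bm{a}^\top\bm{\phi}(\bm{x},\bm{d}^{(j)})\Big).
\]
By Cauchy--Schwarz and Assumption~\ref{norm_bound}, $|\bm{a}^\top\bm{\phi}|\le U_aR$. Hence replacing one sample changes $\Phi$ by at most $2U_aR/M$. McDiarmid's inequality then gives, with probability at least $1-\delta$,
\[
\Phi\le \textnormal{E}[\Phi]+U_aR\sqrt{2\ln(1/\delta)/M}.
\]

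\textbf{Step 2: bounding $\textnormal{E}[\Phi]$.} Symmetrisation gives $\textnormal{E}[\Phi]\le 2\mathfrak{R}_M(\mathbb{G}_{\bm{x}})$. For the linear class,
\[
\mathfrak{R}_M(\mathbb{G}_{\bm{x}})\le \frac{U_a}{M}\,\textnormal{E}\Big\|\sum_j\sigma_j\bm{\phi}(\bm{x},\bm{d}^{(j)})\Big\|_2\le \frac{U_aR}{\sqrt{M}},
\]
using Jensen's inequality. So $\Phi\le \frac{U_a}{\sqrt M}\big(2R+R\sqrt{2\ln(1/\delta)}\big)$.

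\textbf{Main obstacle: matching \eqref{MM_RA}.} The McDiarmid term carries a factor $R$, whereas \eqref{MM_RA} has $\sqrt{2\ln(1/\delta)}$ without $R$. This suggests the authors either assume $R\ge1$ together with a normalisation in which the bounded-difference constant is $U_a/M$, or use the range of $h$ on $\widehat{\mathbb{X}}$, which is of order $U_a$ rather than $U_aR$. I would try to justify the latter: bound the range of $h(\bm{a},\bm{f}(\bm{x},\cdot))$ by $U_a$ using the constraints of \eqref{eq:SBF_linear11}. If that fails, I would state the bound with $R$ (or $\max(R,1)$) in both terms, which matches \eqref{MM_RA} whenever $R\le1$. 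In either version, solving $\frac{U_a}{\sqrt M}(2R+\sqrt{2\ln(1/\delta)})\le\tau$ for $M$ yields exactly \eqref{MM_RA}. On the event of probability at least $1-\delta$, the deviation is then at most $\tau$ for every $\bm{a}$, and in particular for $\bm{a}^*(D^M)$. This proves the first claim.

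\textbf{Step 3: one-step safety.} On the same event, the constraint $\hat h\le0$ of \eqref{eq:SBF_linear11} at the point $\bm{x}$ gives
\[
\textnormal{E}_{\bm{d}}[h(\bm{a}^*,\bm{f}(\bm{x},\bm{d}))]\le h(\bm{a}^*,\bm{x})+\lambda^*.
\]
The constraints also give $h(\bm{a}^*,\cdot)\ge0$ on $\mathbb{X}$ and $h(\bm{a}^*,\cdot)\ge1$ on $\overline{\widehat{\mathbb{X}}\setminus\mathbb{X}}$. Since $\bm{f}(\bm{x},\bm{d})\in\widehat{\mathbb{X}}$, we have $h(\bm{a}^*,\bm{f}(\bm{x},\bm{d}))\ge \mathbf 1_{\{\bm{f}(\bm{x},\bm{d})\notin\mathbb{X}\}}$. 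Markov's inequality, equivalently Proposition~\ref{finite_t} with $k=1$, then gives
\[
\textnormal{P}_{\bm{d}}[\bm{f}(\bm{x},\bm{d})\notin\mathbb{X}]\le\lambda^*+h(\bm{a}^*,\bm{x}).
\]
This is the second claim, holding with confidence at least $1-\delta$ for each $\bm{x}$.
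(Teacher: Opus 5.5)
Your argument is the paper's argument. The paper likewise takes a uniform one-sided deviation bound over $\mathbb{G}_{\bm{x}}$, which it simply cites as a one-sided Rademacher concentration inequality rather than deriving via symmetrization and McDiarmid as you do. It then uses the linear-class bound $\Re_M(\mathbb{G}_{\bm{x}})\le U_aR/\sqrt{M}$ and specializes to the data-dependent $\bm{a}^*(D^M)$. It finishes with the constraint $\hat{h}\le 0$ at $\bm{x}$ plus a Markov-type bound using $h\ge 0$ on $\mathbb{X}$ and $h\ge 1$ on $\overline{\widehat{\mathbb{X}}\setminus\mathbb{X}}$.

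The $R$-mismatch you flag is real, and the paper does not resolve it. It uses the tail $e^{-Mt^2/(2U_a^2)}$, which presupposes $|h(\bm{a},\bm{f}(\bm{x},\bm{d}))|\le U_a$. Assumption~\ref{norm_bound} only gives $U_aR$, so \eqref{MM_RA} is justified by this argument only when $R\le 1$.

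Your first repair will not close the gap. The data-independent constraints of \eqref{eq:SBF_linear11} bound $h$ only from below, giving $h(\bm{a},\cdot)\ge 0$ on $\widehat{\mathbb{X}}$. That halves the bounded-difference constant to $U_aR/M$, which would make \eqref{MM_RA} sufficient for $R\le 2$, but it keeps the factor $R$. The data-dependent constraint $\hat{h}\le 0$ cannot be used to shrink the class before concentrating.

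Your fallback, with $U_aR$ in the concentration term, is what the argument (yours and the paper's) actually supports. It gives the condition $M\ge \frac{U_a^2R^2}{\tau^2}\big(2+\sqrt{2\ln(1/\delta)}\big)^2$, which \eqref{MM_RA} implies when $R\le 1$.
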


\oomit{Based on Proposition \ref{finite_t}, we now extend the PAC one-step safety certification of Lemma \ref{lemma:1} to the multi-step setting.
\begin{theorem}[State-pointwise PAC Finite-Time Safety Certification III]
    \label{theo:1}
    Let $\delta \in (0,1)$, $\epsilon \in (0,1)$, $U_a \geq 1$, $\tau \in [0, 1]$, and $\big(\bm{a}^*(D^M), \lambda^*(D^M)\big)$ be the feasible solution to \eqref{eq:SBF_linear11} with 
    $D^M=\{\bm{d}^{(j)}\}_{j=1}^M\stackrel{\text{i.i.d.}}{\sim}\textnormal{P}_{\bm{d}}$, and the certification procedure $\mathcal{A}(D^M)$ be defined as 
    \begin{equation*}
\label{cp00}
\mathcal{A}(D^M) := 
\begin{cases}
1, & \text{if } \lambda^*(D^M)+h(\bm{a}^*(D^M),\bm{x})\leq \epsilon, \forall \bm{x}\in \mathbb{X},\\
0, & \text{otherwise.}
\end{cases}
\end{equation*}
    Then, under Assumption \ref{norm_bound}, the following probabilistic guarantee holds:  
    \[\begin{split}
    \textnormal{P}_{\bm{d}}^M \bigg[
     \begin{split}
        &\textnormal{P}_{\pi} [
       \wedge_{i=0}^{k-1} \bm{\phi}_{\pi}^{\bm{x}}(i+1) \in \mathbb{X}\big] \\
        &\geq  (1-\epsilon)^k
        \end{split}
    \bigg]\geq 1-k\delta, \forall \bm{x}\in \mathbb{X}, 
    \end{split}
    \]
    holds, 
    provided that the sample size $M$ satisfies \eqref{MM_RA}.
\end{theorem}
\begin{proof}
For each $\bm{x}_i\in \mathbb{X}$, $i\in\{0,\dots,k-1\}$, define the event
\[
\mathbb{E}_i := \Big\{ D^M \mid \textnormal{P}_{\bm d}\big[\bm f(\bm x_i,\bm d)\in\mathbb X\big]
    \ge 1-\lambda^*(D^M)-h(\bm a^*(D^M),\bm x_i) \Big\}.
\]
By the pointwise lemma we have $\textnormal{P}_{\bm{d}}^M [\mathbb{E}_i]\ge 1-\delta$ for every fixed $i$.
Applying the union bound gives
\[
\textnormal{P}_{\bm{d}}^M \Big[\bigcap_{i=0}^{k-1} \mathbb{E}_i\Big] \ge 1-\sum_{i=0}^{k-1}\textnormal{P}_{\bm{d}}^M [\mathbb{E}_i^c]
\ge 1-k\delta.
\]

Restrict attention to the event $\bigcap_{i=0}^{k-1} \mathbb{E}_i$, which occurs with probability at least $1-k\delta$.
On this event, for every $i$,
\[
\textnormal{P}_{\bm d}\big[\bm f(\bm x_i,\bm d)\in\mathbb X\big]
    \ge 1-\lambda^*(D^M)-h(\bm a^*(D^M),\bm x_i).
\]
If additionally $\mathcal{A}(D^M)=1$, then $\lambda^*(D^M)+h(\bm a^*(D^M),\bm x_i)\le\epsilon$ for all $i$, hence
\[
\textnormal{P}_{\bm{d}}^M \big[\bm f(\bm x_i,\bm d)\in\mathbb X\big] \ge 1-\epsilon
\qquad\text{for every }i=0,\dots,k-1.
\]

Using the Markov property of the dynamics, the $k$-step survival probability from $\bm x_0$ under policy $\pi$ factors as
\[
\textnormal{P}_{\pi}\!\Big[\bigwedge_{i=0}^{k-1}\bm\phi_{\pi}^{\bm x}(i+1)\in\mathbb{X}\Big]
= \prod_{i=0}^{k-1}
\textnormal{P}_{\pi}\big[\bm\phi_{\pi}^{\bm x}(i+1)\in\mathbb{X}
                 \mid \bm\phi_{\pi}^{\bm x}(i)\in\mathbb{X}\big].
\]
Each factor is at least $1-\epsilon$ on the event above, so the product is at least $(1-\epsilon)^k$.
Therefore, on the event $\bigcap_{i} \mathbb{E}_i$ (which has probability $\ge 1-k\delta$), whenever $\mathcal{A}(D^M)=1$ the $k$-step survival probability is $\ge(1-\epsilon)^k$. This proves the claimed statement.
\end{proof}
}
\oomit{
\begin{proof}
From Lemma~\ref{lemma:1}, for any $\bm{x}_0 \in \mathbb{X}$, with confidence at least $1-\delta$ over the sampling of $D^M$, the pair $(\bm{a}^(D^M), \lambda^(D^M))$ satisfies
 \begin{equation*}
    \begin{cases}
        h(\bm{a}^*(D^M),\bm{x}) \geq 0, &\forall \bm{x}\in \mathbb{X}, \\
        h(\bm{a}^*(D^M),\bm{x}) \geq 1, & \forall \bm{x} \in \widehat{\mathbb{X}}\setminus\mathbb{X},  \\
\textnormal{E}_{\bm{d}}[h(\bm{a}^*(D^M),\bm{f}(\bm{x}_0,\bm{d}))] \\
\quad\quad\quad\quad\quad\quad- h(\bm{a}^*(D^M),\bm{x}_0) \leq \lambda^*(D^M).
    \end{cases}
    \end{equation*}

Further, according to Proposition \ref{finite_t}, we can obtain that, with confidence at least $1-\delta$ over the sampling process on $D^M$, if $\mathcal{A}(D^M)$ is accepted,
\[\textnormal{P}_{\pi} [
       \bm{\phi}_{\pi}^{\bm{x}_0}(1) \in \mathbb{X}\big] \geq  1-\lambda^*(D^M)-h(\bm{a}^*(D^M), \bm{x}_0) \geq 1-\epsilon.\]

Using the elementary inequality $\textnormal{P}[A \cap B] \ge \textnormal{P}[A] + \textnormal{P}[B] - 1$, we have that, with confidence at least $1 - 2\delta$ over the sampling of $D^M$, if $\mathcal{A}(D^M)$ is accepted,
\[\textnormal{P}_{\pi} [
       \bm{\phi}_{\pi}^{\bm{x}_0}(1) \in \mathbb{X}\big] \geq 1-\epsilon\]
       and 
       \[\textnormal{P}_{\pi} [
       \bm{\phi}_{\pi}^{\bm{x}_0}(2) \in \mathbb{X}\mid \bm{\phi}_{\pi}^{\bm{x}_0}(1) \in \mathbb{X}\big] \geq  1-\epsilon.\]
Hence, with confidence at least $1 - 2\delta$ over the sampling process, if $\mathcal{A}(D^M)$ is accepted,
\[\textnormal{P}_{\pi} [
    \wedge_{i=1}^2\bm{\phi}_{\pi}^{\bm{x}_0}(i) \in \mathbb{X}\big] \geq (1-\epsilon)^2.\]

    Applying the same reasoning inductively yields that, with confidence at least $1 - k\delta$,
if $\mathcal{A}(D^M)$ is accepted,
\[\textnormal{P}_{\pi} [
    \wedge_{i=1}^k\bm{\phi}_{\pi}^{\bm{x}_0}(i) \in \mathbb{X}\big] \geq (1-\epsilon)^k.\]
    
Since this holds for any $\bm{x}_0\in\mathbb{X}$, the result follows.
\end{proof}
}

\color{black}
\oomit{\begin{remark}
It follows from Theorem~\ref{theo:1} that when $\lambda^*(D^M)=0$, we have $k\lambda^*(D^M)=0$.
Hence, under Assumption \ref{norm_bound}, the finite-time safety certification result can be extended to the infinite-time case according to Proposition \ref{finite_t}, where the following probabilistic guarantee holds:
 \[\begin{split}
    \textnormal{P}_{\bm{d}}^M \bigg[\lambda^*(D^M)=0 \Rightarrow 
    \begin{split}
    \Big(\bm{x}\in \mathbb{X}, &\textnormal{P}_{\pi} \big[\forall i\in \mathbb{N}. \bm{\phi}_{\pi}^{\bm{x}}(i) \in \mathbb{X}\big] \\
        &\geq  1-h(\bm{a}^*(D^M), \bm{x})\Big)
    \end{split}
    \bigg]\geq 1-\delta
    \end{split}
    \]
    holds,
    provided that the sample size satisfies
    \[
    M \geq  
    \frac{U_a^2}{\tau^2}\left(
         2R +\sqrt{2\ln \frac{1}{\delta}}
    \right)^{\!2}.
    \]
Here, the certification procedure
\begin{equation}
\mathcal{A}(D^M) :=
\begin{cases}
1 & \text{if } \lambda^*(D^M)=0,\\
0, & \text{otherwise.}
\end{cases}
\end{equation}

\end{remark}
}

The proof of Theorem~\ref{lemma:1} is presented in Appendix~\ref{sec:proof_lemma_1}.


The SBF method provides a state-dependent one-step safety lower bound $1 - \lambda^*(D^M) - h(\bm{a}^*(D^M), \bm{x})$. 
As with Theorem~\ref{lemma:robust}, the conclusion of Theorem~\ref{lemma:1} cannot be directly extended to multi-step safety guarantees. {We will investigate this in the future work based on SBFs proposed in \cite{xue2024finite}.} Nevertheless, the one-step safety certification can be applied recursively online, providing continuous assurance of safe operation at each time step, as long as the system remains within $\mathbb{X}$. Unlike the RBF method, the SBF method does not generally yield a probabilistic invariant-type set 
\(\mathbb{X}_s\).
The main reason is that the guarantee provided by the SBF method depends strongly on the value of the state-dependent barrier function $h$.
Note that the one-step lower bound represents a probability and should therefore be non-negative; 
however, the expression 
\(1 - \lambda^*(D^M) - h(\bm{a}^*(D^M), \bm{x})\) 
can become negative, in which case the bound is trivially true but uninformative.
Hence, the bound is practically useful only when it is positive. On the other hand, the SBF method relaxes the barrier condition: 
it does not require the condition to hold for every disturbance sample, 
but only in expectation (approximated using finitely many samples). 
This relaxation reduces the number of constraints in the optimization problem from \(M\) constraints in the RBF formulation to a single expectation constraint and can thus significantly reduce computational overhead, 
especially when a large number of samples is used.

The above Rademacher-complexity-based analysis applies to any class of barrier functions that possesses uniformly finite Rademacher complexity over all $\bm{x}\in \mathbb{X}$, including, in particular, classes that are uniformly Lipschitz continuous or realized by neural networks with bounded weights and Lipschitz activations. The optimization problem \eqref{eq:SBF_linear11}, formulated using disturbance samples, is not limited to convex forms and may, in general, be non-convex—posing intrinsic challenges for practical solution. The development of new algorithms to tackle such non-convex problems lies beyond the scope of this work. In the subsequent numerical study, we therefore concentrate on convex formulations, particularly semi-definite programs obtained via sum-of-squares decompositions of multivariate polynomials, owing to the availability of mature algorithms and software tools that can efficiently solve such problems.

\begin{remark}[Relationship with existing works]
\label{relation}

Based on the combination of scenario approach with piecewise affine stochastic barrier certificates 
(i.e., \(h(\bm{a}_i,b_i,\bm{x})=\bm{a}_i^{\top}\bm{x}+b_i, i=1,\ldots,l\)), 
\cite{mathiesen2023inner} considered piecewise affine systems, i.e.,
\[
    \bm{x}(j+1) = A_i \bm{x}(j) + \bm{d}(j), 
    \quad \text{for } \bm{x}(j) \in \mathbb{P}_i, \; i = 1, \ldots, l,
\]
where each region $\mathbb{P}_i$ is a polytope. 
\cite{mathiesen2024data} further studied piecewise affine approximations of nonlinear dynamics of the form 
\(\bm{x}(j+1) = \bm{f}(\bm{x}(j)) + \bm{d}(j)\).  
As discussed in the related work, they relaxed the conditions for stochastic barrier certificates 
into a chance-constrained formulation and subsequently constructed a scenario optimization problem 
that accounts for the worst-case disturbance samples---similar in spirit to our robust barrier 
certificate based approach proposed in this paper---for solving the chance-constrained  problem.

When considering piecewise affine systems or systems that can be discretized into piecewise affine forms, 
the use of piecewise stochastic barrier certificates allows us to obtain stronger results than those in 
Theorem~\ref{lemma:1}. 
In this context, the condition 
\[
\textnormal{E}_{\bm{d}}\!\left[h(\bm{a}_i,b_i,A_i\bm{x}+\bm{d})\right] 
\leq h(\bm{a}_i,b_i,\bm{x}) + \lambda, \quad \forall \bm{x}\in \mathbb{P}_i,
\]
can be written as
 $\bm{a}_i^{\top}A_i\bm{x} + \bm{a}_i^{\top}\textnormal{E}_{\bm{d}}[\bm{d}] + b_i
\leq h(\bm{a}_i,b_i,\bm{x}) + \lambda, \forall \bm{x}\in \mathbb{P}_i$, which is equivalent to 
\[
\bm{a}_i^{\top}\textnormal{E}_{\bm{d}}[\bm{d}] 
\leq h(\bm{a}_i,b_i,\bm{x}) + \lambda-\bm{a}_i^{\top}A_i\bm{x}-\bm{b}_i, \quad \forall \bm{x}\in \mathbb{P}_i,
\]
where $\bm{a}_i \in [-U_a,U_a]^n$ and $b_i\in [-U_a,U_a]$ for $i=1,\ldots,l$, and $\mathbb{X}=\cup_{i=1}^l \mathbb{P}_i$ with $\mathbb{P}_{i}^{\circ}\cap \mathbb{P}_{j}^{\circ}=\emptyset$ for $i\neq j$.

As a result, the discrepancy between the expected value, i.e., $\textnormal{E}_{\bm{d}}\!\left[h(\bm{a}_i,b_i,A_i\bm{x}+\bm{d})\right]$, and its empirical counterpart, i.e., $\frac{1}{M} \sum_{j=1}^M h(\bm{a}_i,b_i,A_i\bm{x}+\bm{d}^{(j)})$, only involves the term $\bm{a}_i^{\top}\textnormal{E}_{\bm{d}}[\bm{d}]-\bm{a}_i^{\top} \frac{1}{M}\sum_{j=1}^M \bm{d}^{(j)}$, which is independent of \(\bm{x}\). 
Thus, the conclusion in Lemma~\ref{lemma:1} can be strengthened from a 
\emph{state-pointwise} PAC one-step safety guarantee to a 
\emph{state-uniform} PAC one-step safety guarantee, as established in Lemma~\ref{lemma:robust_s}, i.e.,
\begin{equation*}
\begin{split}
&\textnormal{P}_{\bm{d}}^M \left[
\begin{split}
 &\forall i\in \{1,\ldots,l\}, \;\\
&\bm{a}^{*\top}_i(D^M)^{\top}\textnormal{E}_{\bm{d}}[\bm{d}] \leq 
    \bm{a}_i^{*\top}(D^M) \frac{1}{M}\sum_{j=1}^M \bm{d}^{(j)}+ \tau
\end{split}
\right]\geq 1-\delta,
\end{split}
\end{equation*}
 and thus
\begin{equation*}
\begin{split}
&\textnormal{P}_{\bm{d}}^M \!\left[
\begin{split}
  &\forall i\in \{1,\ldots,l\}, \forall \bm{x}\in \mathbb{P}_i, \bm{a}^{*\top}_i(D^M)^{\top}\textnormal{E}_{\bm{d}}[\bm{d}] \leq \lambda^*(D^M)+\\
  &h(\bm{a}_i^*(D^M),b_i^*(D^M),\bm{x}) -\bm{a}_i^{*\top}(D^M)A_i\bm{x}-b_i^*(D^M)
\end{split}
\right] \\
&\geq 1-\delta,
\end{split}
\end{equation*}
implying
\[
\begin{split}
&\textnormal{P}_{\bm{d}}^M \left[
\begin{split}
    &\forall i\in \{1,\ldots,l\}, \forall \bm{x}\in \mathbb{P}_i, \\
    &\textnormal{P}_{\bm{d}}\!\left[\bm{f}(\bm{x},\bm{d})\in \mathbb{X}\right]
    \geq 1-\lambda^*(D^M) - h(\bm{a}_i^*(D^M), b_i^*(D^M),\bm{x})
    \end{split}
\right] \\
&
\geq 1-\delta,
\end{split}
\]
holds under suitable conditions on the sample size \(M\). 
Correspondingly, the conclusion in Theorem~\ref{lemma:1} can be enhanced to the \emph{state-uniform} multi-step safety guarantees:
\[
\begin{split}
\textnormal{P}_{\bm{d}}^M \left[
\begin{split}
    &\forall i\in \{1,\ldots,l\}, \forall \bm{x}\in \mathbb{P}_i, \textnormal{P}_{\pi} \!\left[
        \bigwedge_{i=0}^{k-1} 
        \bm{\phi}_{\pi}^{\bm{x}}(i+1) \in \mathbb{X}
    \right] \\
    &
    \geq 1-k\lambda^*(D^M)-h(\bm{a}_i^*(D^M),b_i^*(D^M),\bm{x})
    \end{split}
\right] \geq 1 - \delta.
\end{split}
\]
where $\mathcal{A}(D^M) := 1 (\text{certificate always accepted})$ and $k>1$.

In addition, compared with the chance-constrained optimization method, 
our approach does not require the barrier condition to hold for every sampled disturbance realization. 
Instead, it only requires the condition to hold in expectation (with a finite-sample approximation), 
thus significantly reducing the computational burden, especially when a large number of samples is used. A detailed numerical comparison with the methods in \cite{mathiesen2023inner,mathiesen2024data} is deferred, 
as their implementation is not publicly available.
Furthermore, we plan to combine VC dimension /Rademacher complexity with a finite $\varepsilon$-covering (i.e., covering number) of the safe space $\mathbb{X}$, along with Lipschitz continuity, to extend the current state-pointwise PAC safety guarantees in Theorem~\ref{lemma:robust} and Theorem~\ref{lemma:1} to state-uniform PAC safety guarantees for nonlinear systems beyond the piecewise-affine setting in future work.

\end{remark}

\oomit{\subsubsection*{Bernstein barrier functions}
Here, we introduce a representative class of functions—polynomials expressed in the Bernstein basis—which are particularly useful in reducing sample complexity.

Assume $\widehat{\mathbb{X}}=(\underline{\widehat{\mathbb{X}}}[1],\overline{\widehat{\mathbb{X}}}[1])\times \cdots \times (\underline{\widehat{\mathbb{X}}}[n],\overline{\widehat{\mathbb{X}}}[n])$. We parameterize $h(\bm{a},\bm{x})$ in the following form, which satisfies Assumption \ref{as:h}:
\begin{equation}
\label{eq:h_p}
h(\bm{a},\bm{x})=\sum_{i_1=0}^{\kappa}\ldots\sum_{i_n=0}^{\kappa} a_{i_1,\ldots,i_n} \prod_{j=1}^n \dbinom{{\kappa}}{i_j} \psi_j^{i_j}(1-\psi_j)^{{\kappa}-i_j}
\end{equation}
with $a_{i_1,\ldots,i_n} \in [0, U_a]$ and $\psi_j = \frac{\bm{x}[j]-\underline{\widehat{\mathbb{X}}}[j]}{\overline{\widehat{\mathbb{X}}}[j]-\underline{\widehat{\mathbb{X}}}[j]}$. 
Parameterization in the form of \eqref{eq:h_p} allows constraint \eqref{eq:sbf1} to hold naturally. Moreover, parameterization in the form of \eqref{eq:h_p} ensures that $h(\bm{a},\bm{x}) \leq U_a$ holds for all $\bm{x} \in \widehat{\mathbb{X}}$ \cite{farouki2012bernstein}, which is important for the validity of Theorem \ref{theo:2}.

Assume the optimal solution $(\bm{a}^*(D^M),\lambda^*(D^M))$ to \eqref{eq:SBF_linear11} has been computed. Then, by Rademacher complexity for uniform coverage bounds, we can conclude that for each $\bm{x}$, with confidence at least $1-\delta$, the probability that the system \eqref{eq:system} starting from $\bm{x}$, remains within the safe set $\mathbb{X}$ at the next step is at least $1-\lambda^*(D^M)-h(\bm{a}^*(D^M),\bm{x})$, provided that $M \geq \frac{U_a^2}{\tau^2} \left( 2\sqrt{m} + \sqrt{2 \ln \frac{1}{\delta}} \right)^2$ holds. 
We have the following conclusion. 

\begin{theorem}
    \label{theo:2}
    Let $\epsilon, \delta \in (0,1)$, $U_a \geq 1$, and $\tau \in (0, 1)$ be given constants.  Let $\bm{a}^*(D^M)$ and $\lambda^*(D^M)$ be the optimal solution to \eqref{eq:SBF_linear11} with $D^M=\{\bm{d}^{(j)}\}_{j=1}^M\stackrel{\text{i.i.d.}}{\sim}\textnormal{P}_{\bm{d}}$. Then the following probabilistic guarantee holds: 
    \begin{equation}
    \begin{split}
    \textnormal{P}_{\bm{d}}^M \left[
    \begin{split}
    &\textnormal{E}_{\bm{d}}[h(\bm{a}^*(D^M), \bm{f}(\bm{x},\bm{d}))]\\
\leq &h(\bm{a}^*(D^M), \bm{x}) + \lambda^*(D^M)
\end{split}
\right]\geq 1-\delta, \forall \bm{x}\in \mathbb{X},
\end{split}
\end{equation}
provided that the sample size satisfies
    \begin{equation*}
        \begin{split}
            M \geq \frac{U_a^2}{\tau^2} \left( 2\sqrt{m} + \sqrt{2 \ln \frac{1}{\delta}} \right)^2,
        \end{split}
    \end{equation*}
    which implies 
\[\textnormal{P}_{\bm{d}}^M \left[
\begin{split}
    \textnormal{P}_{\bm{d}} &[\bm{f}(\bm{x},\bm{d})\in \mathbb{X}] \\
    &\geq 1-\lambda^*(D^M)-h(\bm{a}^*(D^M), \bm{x}) 
    \end{split}
    \right]\geq 1-\delta, \forall \bm{x}\in \mathbb{X}.
    \]
\end{theorem}
\begin{proof}
For a fixed $\bm{x}$, define the function class: \[\mathbb{G}_{\bm{x}}=\{h(\bm{a},\bm{f}(\bm{x},\cdot))=\bm{a}^T \cdot \bm{\phi}(\bm{x},\bm{d})\mid \bm{a}\in [0,U_a]^m\}.\]

This class is:
\begin{enumerate}
    \item Linear in $\bm{a}$: $h(\bm{a},\bm{f}(\bm{x},\bm{d}))$ is linear in $\bm{a}$ for fixed $\bm{x}$ and $\bm{d}$;
    \item Bounded: $h(\bm{a},\bm{f}(\bm{x},\bm{d}))\in [0,U_a]$ and  $\|\bm{\phi}(\bm{x},\bm{d})\|_2\leq 1$ for $\bm{a}\in [0,U_a]^m$, $\bm{d}\in \mathbb{D}$ and $\bm{x}\in \widehat{\mathbb{X}}$(This is guaranteed, since $h(\bm{a},\bm{x})$ is a polynomial with a Bernstein basis).
\end{enumerate}

The empirical Rademacher complexity of $\mathbb{G}_{\bm{x}}$ for a sample $D^M$ is :
\[\Re_M(\mathbb{G}_{\bm{x}})=\frac{1}{M}\textnormal{E}_{\sigma}\Big[\sup_{\bm{a}\in [-U_{\bm{a}},U_{\bm{a}}]^m} \big|\sum_{j=1}^M \sigma_jh(\bm{a},\bm{f}(\bm{x},\bm{d}^{(j)}))\big|\Big],\]
where $\sigma_j$ are independent Rademacher random variables ($P(\sigma_i=+1)=P(\sigma_j=-1)=\frac{1}{2}$). 

Using standard results for bounded linear function classes, we have $\Re_M(\mathbb{G}_{\bm{x}})\leq \frac{\sqrt{m}U_a}{\sqrt{M}}$, $\forall \bm{x}\in \mathbb{X}$. It is obtained as follows: 
 By linearity of $h$, we have
\[
\sum_{j=1}^M \sigma_j h(\bm{a}, \bm{f}(\bm{x}, \bm{d}^{(j)})) = \bm{a}^\top \sum_{j=1}^M \sigma_j \bm{\phi}(\bm{x}, \bm{d}^{(j)}).
\]
Using Cauchy--Schwarz, we obtain
\[
\begin{split}
&\sup_{\bm{a} \in [0,U_a]^m} \Big| \bm{a}^\top \sum_{j=1}^M \sigma_j \bm{\phi}(\bm{x}, \bm{d}^{(j)}) \Big| \\
&\leq \sup_{\bm{a} \in [0,U_a]^m} \|\bm{a}\|_2 \Big\| \sum_{j=1}^M \sigma_j \bm{\phi}(\bm{x}, \bm{d}^{(j)}) \Big\|_2 
\\
&\leq U_a \sqrt{m} \Big\| \sum_{j=1}^M \sigma_j \bm{\phi}(\bm{x}, \bm{d}^{(j)}) \Big\|_2.
\end{split}
\]
By Jensen's inequality and the properties of Rademacher variables, we have
\[
\begin{split}
&\textnormal{E}_\sigma \Big\| \sum_{j=1}^M \sigma_j \bm{\phi}(\bm{x}, \bm{d}^{(j)})  \Big\|_2 \leq \sqrt{\textnormal{E}_\sigma \Big\| \sum_{j=1}^M \sigma_j \bm{\phi}(\bm{x}, \bm{d}^{(j)})  \Big\|_2^2} \\
&= \sqrt{\sum_{j=1}^M \|\bm{\phi}(\bm{x}, \bm{d}^{(j)}) \|_2^2} \leq \sqrt{M}.
\end{split}
\]
Hence, we obtain
\[
\Re_M(\mathbb{G}_{\bm{x}}) \leq \frac{1}{M} U_a \sqrt{m} \sqrt{M} = \frac{U_a \sqrt{m}}{\sqrt{M}}, \forall \bm{x}\in \mathbb{X}.
\]

By the one-sided Rademacher concentration inequality \cite{mitzenmacher2017probability}, we can conclude, for any $t>0$, 
\begin{equation*}
    \begin{split}
        &\textnormal{P}_{\bm{d}}^M \left[ 
        \begin{split}
        &\sup_{\bm{a}\in [0,U_a]^m}\big( \textnormal{E}_{\bm{d}}[h(\bm{a},\bm{f}(\bm{x},\bm{d}))] - \frac{1}{M} \sum_{j=1}^{M} h(\bm{a}, \bm{f}(\bm{x}, \bm{d}^{(j)}))\big) \\
        &>2\Re_M(\mathbb{G}_{\bm{x}})+t
        \end{split}
        \right]\\
        &
        \leq   e^{-\frac{M t^2}{2U_a^2}}, \forall \bm{x}\in \mathbb{X}.
    \end{split}
\end{equation*}

We choose $t$ such that this probability is bounded by $\delta$:
\[
\exp\left( -\frac{M t^2}{2U_a^2} \right) \leq \delta \Rightarrow t \geq U_a \sqrt{\frac{2}{M} \ln \frac{1}{\delta}}.
\]
We now require that the total deviation is at most $\tau$:
\[
2\Re_M(\mathbb{G}_{\bm{x}}) + t \leq \tau.
\]
Using the bound $\Re_M(\mathbb{G}_{\bm{x}}) \leq \frac{U_a \sqrt{m}}{\sqrt{M}}$ and the minimal choice for $t$, we obtain a sufficient condition:
\[
2 \cdot \frac{U_a \sqrt{m}}{\sqrt{M}} + U_a \sqrt{\frac{2}{M} \ln \frac{1}{\delta}} \leq \tau.
\]
Multiplying both sides by $\sqrt{M}/U_a$, we have
\[
2 \sqrt{m} + \sqrt{2 \ln \frac{1}{\delta}} \leq \frac{\tau \sqrt{M}}{U_a}.
\]
Solving for $M$ gives the condition, we have
\[
M \geq \frac{U_a^2}{\tau^2} \left( 2\sqrt{m} + \sqrt{2 \ln \frac{1}{\delta}} \right)^2.
\]

Following the proof of Theorem \ref{theo:1}, we have the conclusion.
\end{proof}
}

\section{Examples}
\label{sec:exam}

In this section, we demonstrate the effectiveness of the proposed approaches through several numerical examples. All examples involve polynomial systems, where each component of 
$\bm{f}(\bm{x},\bm{d})$ is polynomial in $\bm{x}$ and the safe set $\mathbb{X}$ is semi-algebraic. This choice allows us to leverage advanced computational tools such as sum-of-squares (SOS) programming, implemented via YALMIP \cite{lofberg2004yalmip} and Mosek 11.0.29 \cite{aps2019mosek}, to solve optimization problems \eqref{eq:RBF_linear21} and \eqref{eq:SBF_linear11} for polynomial barrier functions. For details on SOS programming, please refer to Appendix \ref{sec:app_sos}. All computations are run on a machine equipped with an i7-13700 2.1GHz CPU with 32GB RAM.


Although this work focuses on stochastic systems with unknown disturbance distributions, two non-data-driven baseline methods are included for comparison. 
When the disturbance set $\mathbb{D}$ is known and semi-algebraic, the constraint \eqref{eq:rbf} can be reformulated as an SOS programming, yielding a robust invariant set that guarantees forward invariance under all disturbances in $\mathbb{D}$. We refer to this baseline as the \textit{Robust-Set} method and compare it with the proposed Data-Driven RBFs method. Similarly, if the disturbance distribution is assumed to be known, the constraint \eqref{eq:sbf} can be recast as an SOS programming. We refer to this baseline as the \textit{Distribution-Known} method and compare it with the proposed Data-Driven SBFs method. To compare the Robust-Set method with the Data-Driven RBFs method, we estimate the relative volume of the resulting set $\mathbb{X}_s$ using Monte Carlo sampling. Specifically, $10^7$ samples are drawn uniformly from $\mathbb{X}$, and 
$V_{\mathbb{X}_s} = \frac{N_{\mathbb{X}_s}}{10^7}$
is computed, where $N_{\mathbb{X}_s}$ is the number of samples contained in $\mathbb{X}_s$. A larger $V_{\mathbb{X}_s}$ indicates a larger set $\mathbb{X}_s$. For comparison between the Distribution-Known and the proposed Data-Driven SBFs method, we use the optimal value $J^*=\frac{1}{N_o}\sum_{i=1}^{N_o}\big(\lambda^*+h(\bm{a}^*,\bm{x}'_i)\big)$ in \eqref{eq:SBF_linear11}, where a smaller $J^*$ is preferred. We also report the total computation time $T_{\mathrm{total}} = T_s + T_c$, where $T_s$ and $T_c$ denote the sampling/preprocessing time and the optimization time, respectively. All times are measured in seconds.

For the Data-Driven RBFs and Robust-Set methods, we fix $\gamma=0.99$ and $U_{\bm{a}}=10^4$. In all experiments, we set $\delta = 0.001$ for both Data-Driven RBFs and Data-Driven SBFs methods, corresponding to a $99.9\%$ confidence level according to Theorem~\ref{lemma:robust}, \ref{thm:so}, and \ref{lemma:1}, ensuring a high level of reliability in the results. 
Since the sample-size requirement in Theorem~\ref{thm:so} is always less restrictive than that in Theorem~\ref{lemma:robust} under the same settings, the number of samples in all experiments is chosen as the smallest value that satisfies Theorem~\ref{thm:so}.

\begin{example}
    \label{ex:bc} Consider the following system adapted from \cite{prajna2004safety},
    \begin{equation*}
        \begin{cases}
        x(t+1)=x(t) + 0.1(y(t)+x(t)d(t)),\\
        y(t+1)=y(t) + 0.1(-x(t)+\frac{1}{3}x^3(t)-y(t)),
        \end{cases}
    \end{equation*}
    where $d(t)$ follows a symmetric Beta distribution scaled to the interval $[-0.6,0.6]$, i.e., $d(t)=1.2\times \text{Beta}(20,20)-0.6$. The safe set is $\mathbb{X}=\{\,(x,y)^{\top}\mid x^2+y^2-1 \leq 0\,\}$.
\end{example}

In this example, we first apply the Robust-Set method to compute an RBF. Specifically, a polynomial template of degree $8$ is adopted for $h(\bm{a}, \bm{x})$. The resulting set $\mathbb{X}_s=\{\bm{x}\in \mathbb{X}\mid h(\bm{a}^*,\bm{x}) > 0\}$ is shown in Fig.~\ref{fig:bc_robust}, with an estimated volume of $0.4542$. While this yields a valid RBF, the result is relatively conservative since it enforces constraint \eqref{eq:rbf} to hold under all disturbances in $\mathbb{D}$. The proposed Data-Driven RBFs method mitigates this conservatism but does so at the cost of weakening the guarantees--from always-safety guarantees to PAC finite-step safety guarantees. For instance, when $\epsilon=0.01$, the Data-Driven RBFs method provides a lower bound of $0.9509$ for five-step safety with a confidence level of $99.9\%$. Moreover, as $\epsilon$ increases, the lower bound decreases. Fig.~\ref{fig:bc_robust} illustrates the resulting sets $\mathbb{X}_s$ for different values of $\epsilon$, along with three simulated trajectories starting from $(0.45, 0.75)$. The corresponding volumes and computation times are listed in
Table \ref{tab:vanderpol_epsilon}. 
Compared with the Robust-Set method, the data-driven approach enlarges $\mathbb{X}_s$, albeit at the cost of increased computation time. Furthermore, unlike the Robust-Set approach, our data-driven method does not require $\mathbb{D}$ to be explicitly known or semi-algebraic, thus enhancing its practical applicability.

\begin{table}[!htbp]
\captionsetup{skip=2pt}
\caption{Results of Data-Driven RBFs method for varying $\epsilon$.}
\setlength{\tabcolsep}{2mm}{
\begin{tabular}{cccccc}
\hline
$\epsilon$ & $V_{\mathbb{X}_s}$ & $T_{s}$ & $T_{c}$ & $T_{total}$  \\ \hline
0.5  & 0.8579 & 1.9  & 1.6  & 3.5\\
0.4  & 0.7385 & 1.9  & 1.9  & 3.7\\
0.3  & 0.7144 & 1.9  & 2.2  & 4.1\\
0.2  & 0.7388 & 2.0  & 3.3  & 5.3\\
0.1  & 0.6911 & 2.4  & 7.2  & 9.6\\
0.05 & 0.6652 & 3.3  & 25.3  & 28.6\\
0.01 & 0.6452 & 9.6  & 453.1  & 462.7 \\\hline
\end{tabular}}
\label{tab:vanderpol_epsilon}
\end{table}

We further examine the influence of polynomial degree on Data-Driven RBFs, as reported in Table~\ref{tab:vanderpol_rbf_degree}. Since the objective function $\frac{1}{N_o}\sum_{i=1}^{N_o}h(\bm{a},\bm{x}'_i)$ does not directly correspond to the volume of $\mathbb{X}_s$, a higher polynomial degree may improve the objective value but does not necessarily yield a larger $\mathbb{X}_s$. Moreover, although increasing the polynomial degree can enhance the expressive power of barrier functions, it also increases the dimension $m$ of $\bm{a}$, necessitating a larger sample size $M$ and thereby raising the computational burden. Since when all states in $\mathbb{X}$ are used for computations, the resulting set $\mathbb{X}_s$ coincides with the robust invariant set obtained via the Robust-Set method. Consequently, using higher-degree polynomials may lead to a smaller invariant set $\mathbb{X}_s$, as illustrated in Table \ref{tab:vanderpol_rbf_degree}. While this does not improve the accuracy of the PAC safety guarantees and, on the contrary, may lead to a smaller set $\mathbb{X}_s$, it could enhance robustness in certain practical applications.
\begin{table}[!htbp]
\captionsetup{skip=2pt}
\caption{Results for different degrees ($\epsilon = 0.2$).}
\setlength{\tabcolsep}{2mm}{
\begin{tabular}{ccccccc}
\hline
\multirow{2}{*}{Degree} & \multicolumn{2}{c}{Robust-Set} & \multicolumn{4}{c}{Data-Driven RBFs} \\ \cmidrule(lr){2-3}\cmidrule(lr){4-7}
& $V_{\mathbb{X}_s}$ & $T_{total}$ & $V_{\mathbb{X}_s}$ & $T_{s}$ & $T_{c}$ & $T_{total}$  \\ \hline
4 & 0.0000 & 2.4   & 0.7388 & 2.0  & 3.3  & 5.3\\
6 & 0.0000 & 2.8   & 0.7283 & 3.3  & 14.1 & 17.4\\
8 & 0.4542 & 3.8 & 0.7030 & 11.9 & 57.2 & 69.1 \\ \hline
\end{tabular}}
\label{tab:vanderpol_rbf_degree}
\end{table}

Next, we evaluate Data-Driven SBFs setting with $U_{\bm{a}} = 8$ and $\tau = 0.01$. Table~\ref{tab:vanderpol_sbf_degree} lists the results for polynomial degrees 4, 6, and 8, along with corresponding results from the Distribution-Known baseline. For the case where $h(\bm{a}, \bm{x})$ is of degree 8, Data-Driven SBFs method yields $\lambda^*(D^M)=0.0107$, and the contour of $1-\lambda^*(D^M)-h(\bm{a}^*(D^M), \bm{x})$ is shown in Fig.~\ref{fig:bc_stochastic}. For the Distribution-Known method, the corresponding value is $\lambda^*=0.0007$, with the contour shown in Fig.~\ref{fig:bc_known}. These results indicate that, regardless of the degree of $h(\bm{a}, \bm{x})$, the proposed Data-Driven SBFs achieves performance close to that of the ideal known-distribution case.
\begin{table}[!htbp]
\captionsetup{skip=2pt}
\caption{Results for different degrees.}
\setlength{\tabcolsep}{1.5mm}{
\begin{tabular}{ccccccc}
\hline
\multirow{2}{*}{Degree} & \multicolumn{2}{c}{Distribution-Known} & \multicolumn{4}{c}{Data-Driven SBFs} \\ \cmidrule(lr){2-3}\cmidrule(lr){4-7}
& $J^{*}$ & $T_{total}$ & $J^*$ & $T_{s}$ & $T_{c}$ & $T_{total}$  \\ \hline
4 & 0.5056 & 1.9 & 0.5155 & 6.1  & 0.6  & 6.7\\
6 & 0.2920 & 2.3 & 0.3019 & 6.3  & 0.6  & 6.9\\
8 & 0.1552 & 3.7 & 0.1651 & 12.6  & 1.2  & 13.8\\
\hline
\end{tabular}}
\label{tab:vanderpol_sbf_degree}
\end{table}

It's worth noting that in our Data-Driven RBFs method, the optimization step typically dominates the total computation time, whereas in our Data-Driven SBFs method, the optimization time is relatively small. This difference arises because the number of constraints in the optimization problem  \eqref{eq:RBF_linear21} grows with the number of samples, while the number of constraints in \eqref{eq:SBF_linear11} remains independent of the sample size. Thus, Data-Driven SBFs method is more computationally efficient for large-scale problems involving numerous samples, albeit at the cost of requiring more samples and providing only one-step safety guarantees.

\begin{figure*}[!htbp]
    \centering
    \subfigure[EX\ref{ex:bc}: Data-Driven RBFs and Robust-Set]{
    \includegraphics[width=0.19\linewidth]{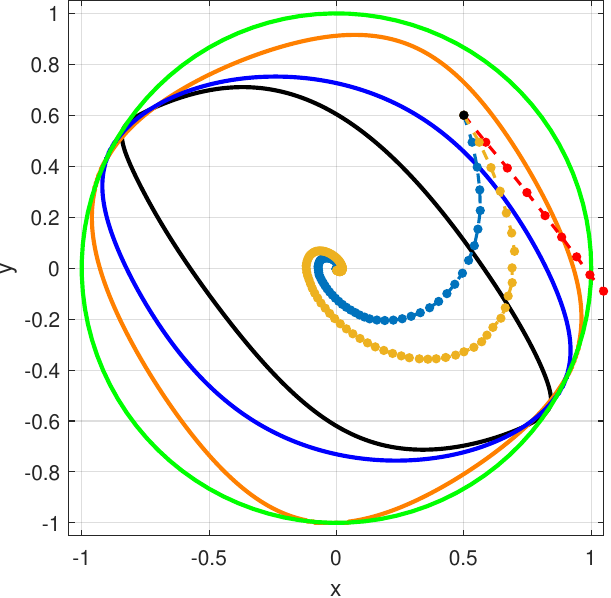}
    \label{fig:bc_robust}
    }
    \subfigure[EX\ref{ex:bc}: Data-Driven SBFs]{
    \includegraphics[width=0.19\linewidth]{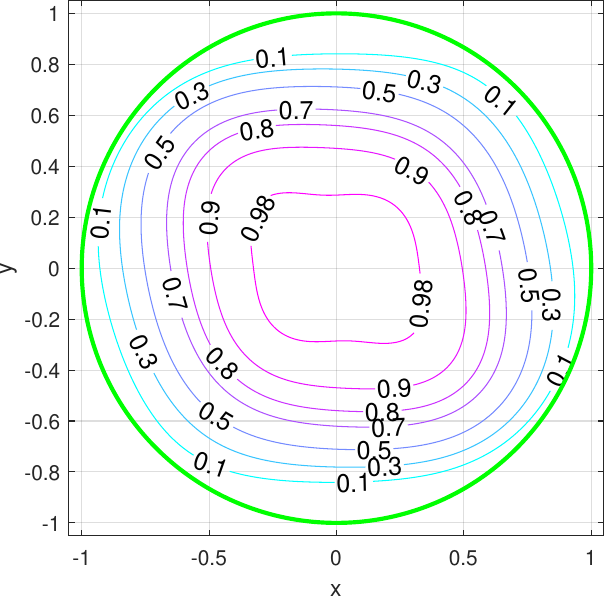}
    \label{fig:bc_stochastic}
    }
    \subfigure[EX\ref{ex:bc}: Distribution-Known]{
    \includegraphics[width=0.19\linewidth]{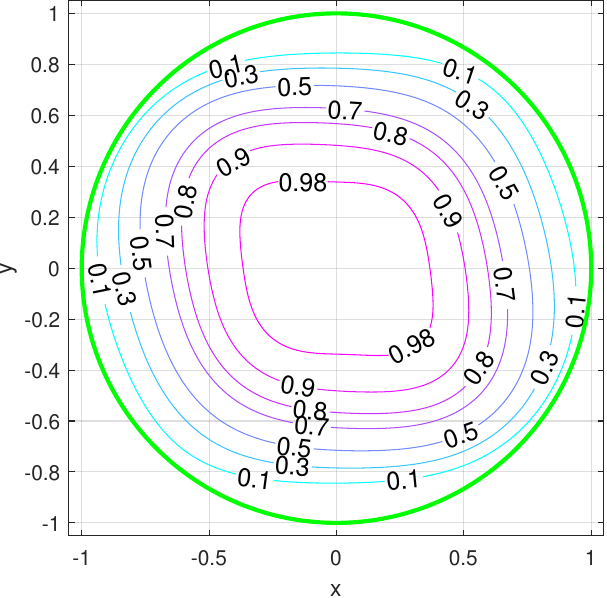}
    \label{fig:bc_known}
    }
    \subfigure[EX\ref{ex:lotka}: Data-Driven RBFs]{
    \includegraphics[width=0.19\linewidth]{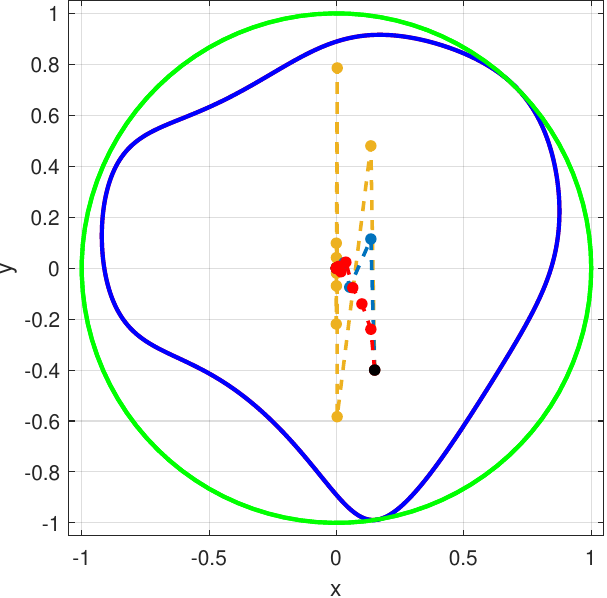}
    \label{fig:lotka_robust}
    }
    \subfigure[EX\ref{ex:lotka}: Data-Driven SBFs]{
    \includegraphics[width=0.19\linewidth]{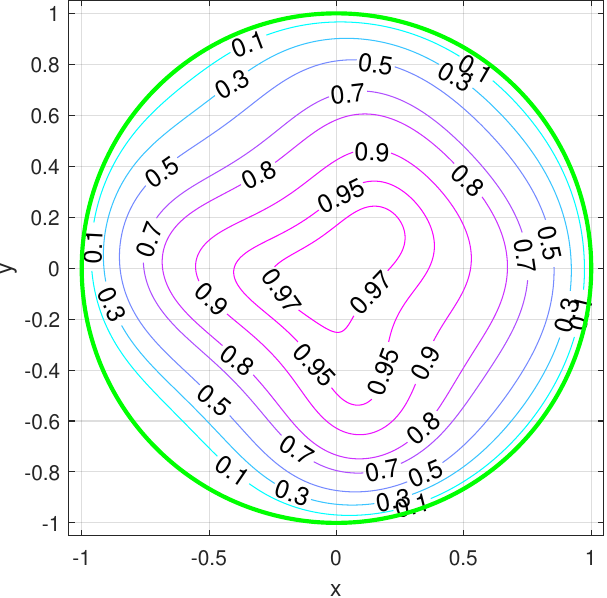}
    \label{fig:lotka_stochastic}
    }
    \subfigure[EX\ref{ex:lotka}: Distribution-Known]{
    \includegraphics[width=0.19\linewidth]{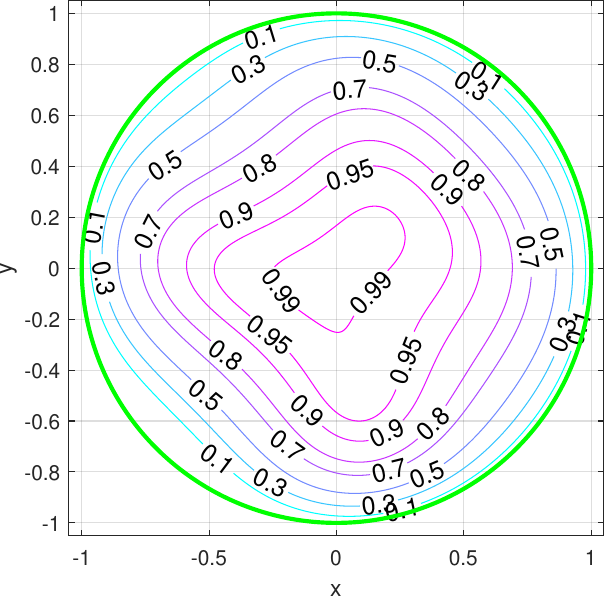}
    \label{fig:lotka_known}
    }
    \subfigure[EX\ref{ex:jet}: Data-Driven SBFs]{
    \includegraphics[width=0.19\linewidth]{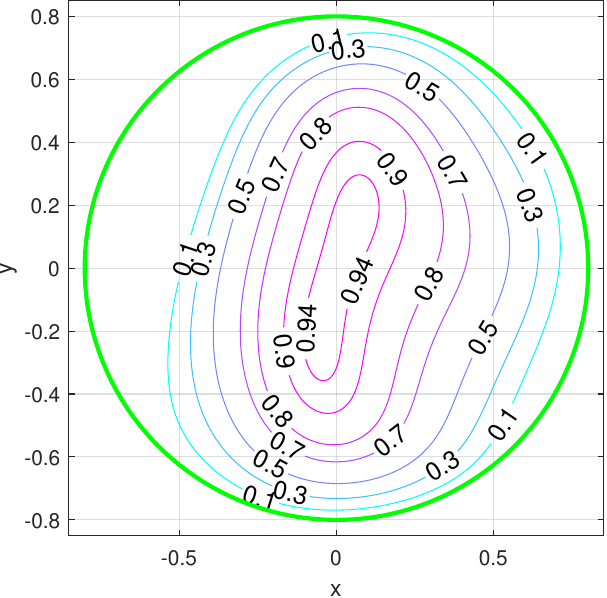}
    \label{fig:jet_ours}
    }
    \subfigure[EX\ref{ex:jet}: Distribution-Known]{
    \includegraphics[width=0.19\linewidth]{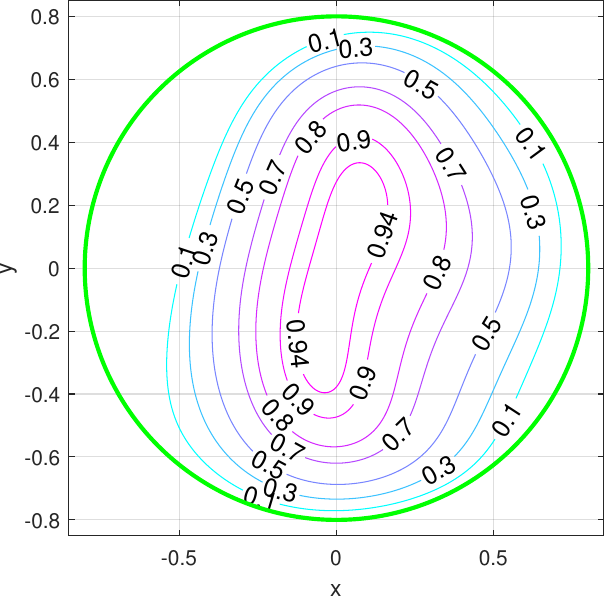}
    \label{fig:jet_known}
    }
    \caption{\textcolor{green}{Green} curve represents the boundary of the safe set $\mathbb{X}$. In Fig. \ref{fig:bc_robust}, black curve represents the $\mathbb{X}_s$ boundary obtained by the Robust-Set method, and \textcolor[rgb]{0 0.4470 0.7410}{blue} and \textcolor[rgb]{0.9290, 0.6940, 0.1250}{yellow} curves represent the $\mathbb{X}_s$ boundary obtained by Data-Driven RBFs when $\epsilon=0.01$ and $\epsilon=0.5$, respectively. Fig. \ref{fig:bc_robust} and \ref{fig:lotka_robust} plot three systems simulation trajectories starting from $(0.45, 0.75)$ and $(0.15, -0.4)$.}
    \label{fig:results}
\end{figure*}

\begin{example}
\label{ex:lotka}
    Consider the Lotka-Volterra model \cite{lotka2002contribution}:
\begin{equation*}
    \begin{cases}
x(t+1)=rx(t)-ay(t)x(t),\\
y(t+1)=sy(t)+acy(t)x(t),
\end{cases}
\end{equation*}
where $r=0.5$, $a=1$, $s=-0.5+d(t)$, and $\mathbb{X}=\{\,(x,y)^{\top}\mid x^2+y^2 \leq 1\,\}$. Besides, $d(t)$ follows a truncated normal distribution, defined on $[-1,1]$ with a mean of zero and a standard deviation of $0.15$.
\end{example}

In this example, the Robust-Set method fails to obtain a non-empty set $\mathbb{X}_s$. By contrast, the proposed Data-Driven RBFs method works for this case. 
The results for different polynomial degrees are reported in Table~\ref{tab:lotka_rbf_degree}, and Fig.~\ref{fig:lotka_robust} depicts the resulting set $\mathbb{X}_s$ for degree 6 along with three simulated trajectories initialized at $(0.15, -0.4)$. The results demonstrate that our method is able to compute a non-conservative $\mathbb{X}_s$, whereas the Robust-Set method yields $\mathbb{X}_s = \emptyset$. Moreover, contrary to Example \ref{ex:bc}, the size of $\mathbb{X}_s$ increases as the degree increases for this example.

\begin{table}[!htbp]
\captionsetup{skip=2pt}
\caption{Results for different degrees ($\epsilon = 0.1$).}
\setlength{\tabcolsep}{2mm}{
\begin{tabular}{ccccccc}
\hline
\multirow{2}{*}{Degree} & \multicolumn{2}{c}{Robust-Set} & \multicolumn{4}{c}{Data-Driven RBFs} \\ \cmidrule(lr){2-3}\cmidrule(lr){4-7}
& $V_{\mathbb{X}_s}$ & $T_{total}$ & $V_{\mathbb{X}_s}$ & $T_{s}$ & $T_{c}$ & $T_{total}$  \\ \hline
4 & 0.0000 & 2.3 & 0.0000 & 2.3  & 4.8  & 7.1\\
6 & 0.0000 & 2.7 & 0.7270 & 3.8  & 23.3 & 27.1\\
8 & 0.0000 & 3.4 & 0.7430 & 9.7  & 103.3  & 113.0 \\ \hline
\end{tabular}}
\label{tab:lotka_rbf_degree}
\end{table}

For the Data-Driven SBFs method, we set $U_{\bm{a}} = 8$, $\tau = 0.02$, and use a polynomial template of degree 6 for $h(\bm{a}, \bm{x})$. Our Data-Driven SBFs method yields $J^* = 0.5277$, and the contour of $1 - \lambda^*(D^M) - h(\bm{a}^*(D^M), \bm{x})$ is shown in Fig.~\ref{fig:jet_ours}. The total computation time is 52 seconds, of which 51 seconds are used for sampling and data preprocessing and 1 second for the optimization. For comparison, the Distribution-Known method produces $J^* = 0.5077$ in 12 seconds. These results demonstrate that even without knowledge of the true disturbance distribution, our Data-Driven SBFs method can generate reliable and non-conservative probabilistic safety guarantees based solely on sampled data.

\begin{table*}[!htb]
\captionsetup{skip=2pt}
\caption{Additional experimental results.}
\setlength{\tabcolsep}{2mm}{
\begin{tabular}{cccccccccccccc}
\hline
\multirow{2}{*}{Example} & \multirow{2}{*}{n} & \multicolumn{2}{c}{Robust-Set} & \multicolumn{4}{c}{Data-Driven RBFs} & \multicolumn{2}{c}{Distribution-Known} & \multicolumn{4}{c}{Data-Driven SBFs} \\ \cmidrule(lr){3-4} \cmidrule(lr){5-8} \cmidrule(lr){9-10} \cmidrule(lr){11-14} 
 & & $V_{\mathbb{X}_s}$ & $T_{total}$ & $V_{\mathbb{X}_s}$ & $T_{s}$ & $T_{c}$ & $T_{total}$ & $J^*$ & $T_{total}$ & $J^*$ & $T_{s}$ & $T_{c}$ & $T_{total}$ \\ \hline
\ref{ex:arch4} & 2 & 0.0000 & 3.5 & 0.8571 & 3.3 & 7.8 & 11.1 & 0.2044 & 5.2 & 0.2144 & 26.1 & 1.7 & 27.8 \\
\ref{ex:vinc}  & 2 & 0.0000 & 2.6 & 0.5286 & 2.6 & 7.3 & 9.9 & 0.3698 & 5.8 & 0.3798 & 8.4 & 0.9 & 9.3 \\
\ref{ex:bc4}   & 2 & 0.8312 & 2.2 & 0.8521 & 2.7 & 5.7 & 8.4 & 0.2251 & 5.8 & 0.2351 & 17.7 & 1.5 & 19.2 \\
\ref{ex:stable} & 3 & 0.7856 & 4.9 & 0.8633 & 9.6 & 95.0 & 104.6 & 0.3465 & 20.2 & 0.3565 & 31.2 & 1.0 & 32.2 \\
\ref{ex:3dvanderpol} & 3 & 0.0000 & 2.8 & 0.0000 & 5.8 & 316.1 & 321.9 & 0.1442 & 11.4 & 0.1542 & 8.0 & 5.3 & 13.3 \\
\ref{ex:sank}   & 4 & 0.0000 & 2.5 & 0.0000 & 2.2 & 12.8 & 15.0 & 0.3726 & 9.2 & 0.3826 & 6.6 & 4.5 & 11.1 \\
\ref{ex:lorenz} & 6 & 0.0000 & 2.7 & 0.9692 & 5.3 & 104.0 & 109.3 & 0.1274 & 65.3 & 0.1374 & 27.6 & 36.5 & 64.1 \\
\hline
\end{tabular}}
\label{tab:exp}
\end{table*}

In this example, the Data-Driven RBFs method provides a state-uniform lower bound on the $k$-step safety probability for all states contained in its computed set $\mathbb{X}_s$. For instance, when $\epsilon = 0.05$, the method not only certifies a uniform lower bound of approximately $0.95$ for one-step safety, but also a lower bound of $0.857$ for 3-step safety, both with $99.9\%$ confidence. However, these safety guarantees apply only to states within $\mathbb{X}_s$. In contrast, the Data-Driven SBFs method provides a state-wise probability lower bound for one-step safety only, while offering guarantees for every state in the entire safe set, as observed in Fig. \ref{fig:lotka_stochastic}. 
For instance, the Data-Driven SBFs method certifies a probability lower bound of $0.9573$ for one-step safety at $99.9\%$ confidence for the initial state $(0.15, -0.4) \in \mathbb{X}_s$. 
These results show that the Data-Driven RBFs and Data-Driven SBFs methods offer complementary strengths, and the choice between them should be guided by the characteristics and requirements of the specific problem at hand. Furthermore, both certification procedures can be applied recursively to establish safety guarantees over longer horizons.

\begin{example}
\label{ex:jet}
    Consider the discretization of Moore-Greitzer model of a jet engine \cite{aylward2008stability}:
\begin{equation*}
\begin{cases}
x(t+1)=x(t) + 0.1(-y(t)-1.5x^2(t)-0.5x^3(t)+d(t)),\\
y(t+1)=y(t) + 0.1(3x(t)-y(t)+d(t)),
\end{cases}
\end{equation*}
 where $d(\cdot)$ is uniformly distributed over $[-1.5,1.5]$. The safe set is $\mathbb{X}=\{\,(x,y)^{\top}\mid x^2+y^2 \leq 0.64\,\}$.
\end{example}


In this example, the Robust-Set method fails to produce a valid robust barrier certificate. In addition, for $\epsilon \le 0.99$, solving \eqref{eq:RBF_linear21} always produces $\mathbb{X}_s= \emptyset$, indicating that Theorems~\ref{lemma:robust} and \ref{thm:so} may not provide a meaningful guarantee.
  We therefore focus on the Data-Driven SBFs method. Setting $\tau = 0.01$, $U_{\bm a} = 30$, and using a polynomial template of degree 10 for $h(\bm{a}, \bm{x})$, solving \eqref{eq:SBF_linear11} yields $J^*=0.4819$ and $\lambda^*(D^M)=0.0479$, with a computation time of $51\ \mathrm{s}$. The contour of $1-\lambda^*(D^M)-h(\bm{a}^*(D^M),\bm{x})$ is shown in Fig.~\ref{fig:jet_ours}. Under the same settings but assuming the disturbance distribution is known, the optimal value is $J^*=0.4719$ and $\lambda^*=0.0379$ with a computation time of $19\ \mathrm{s}$, and the contour is shown in  Fig.~\ref{fig:jet_known}. The results from our method closely follow those obtained under the known distribution assumption, with our estimated bound being on average about $0.01$ lower, demonstrating the reliability of the proposed approach for one-step safety guarantees.


Fig. \ref{fig:jet_para} summarizes results under different hyperparameter settings. Increasing the polynomial degree of $h(\bm{a},\bm{x})$ improves expressiveness but requires longer computation. Increasing $U_{\bm{a}}$ generally yields a larger probability lower bound, while the improvement becomes moderate once $U_{\bm{a}}$ is sufficiently large. The effect of $\tau$ is approximately linear, and smaller $\tau$ results in a larger lower bound. However, as shown in \eqref{MM_RA}, the minimum required sample size $M$ grows with larger $U_{\bm{a}}$ or smaller $\tau$, which in turn increases computation time.
In practice, these hyperparameters can be tuned to trade off computational efficiency and certificate quality. Other parameters have negligible impact on both the computed results and the overall computation time.

\begin{figure}[!htbp]
    \centering
    \subfigure{
    \includegraphics[width=0.8\linewidth]{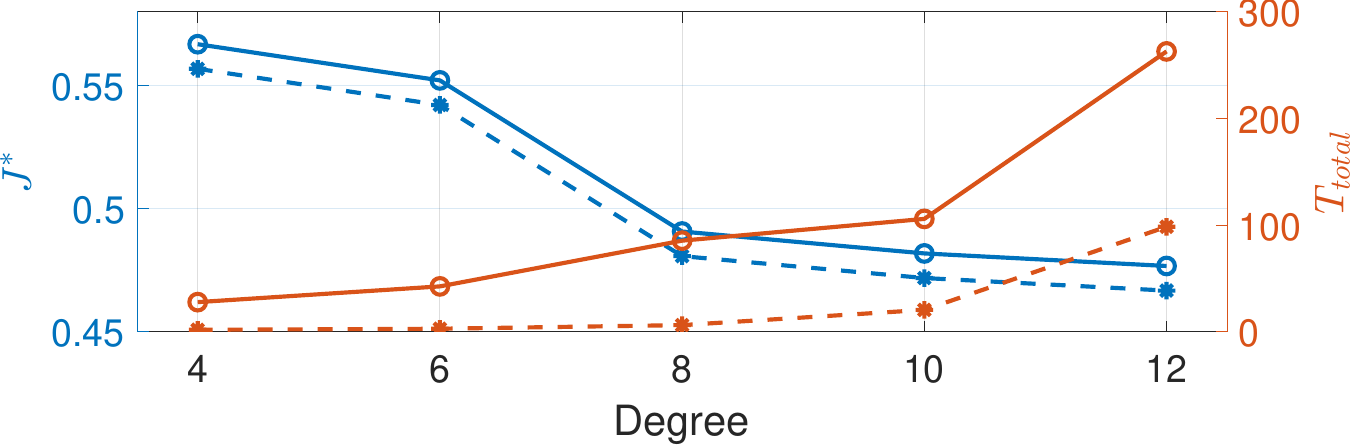}
    }
    \subfigure{
    \includegraphics[width=0.8\linewidth]{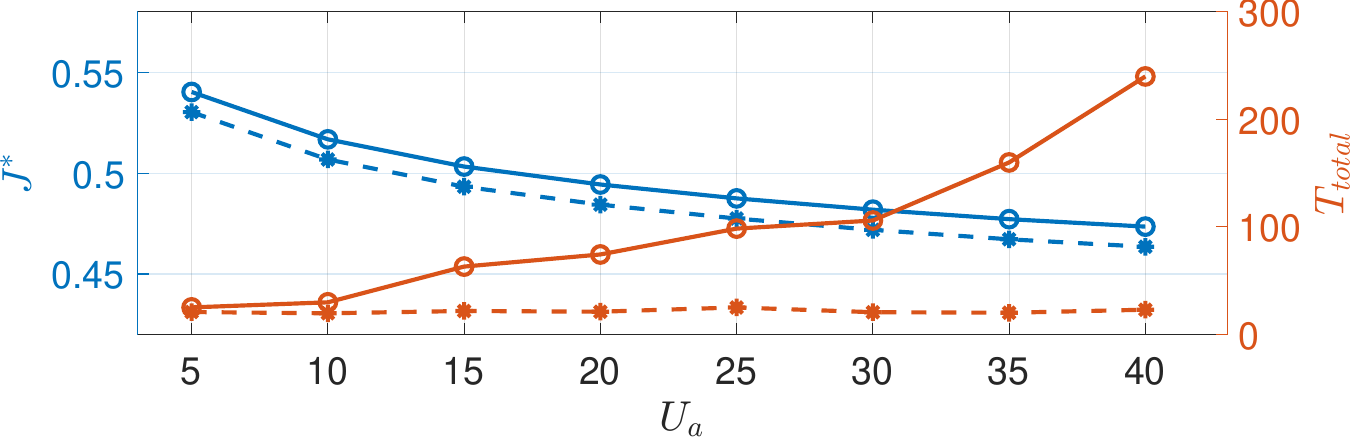}
    }
    \subfigure{
    \includegraphics[width=0.8\linewidth]{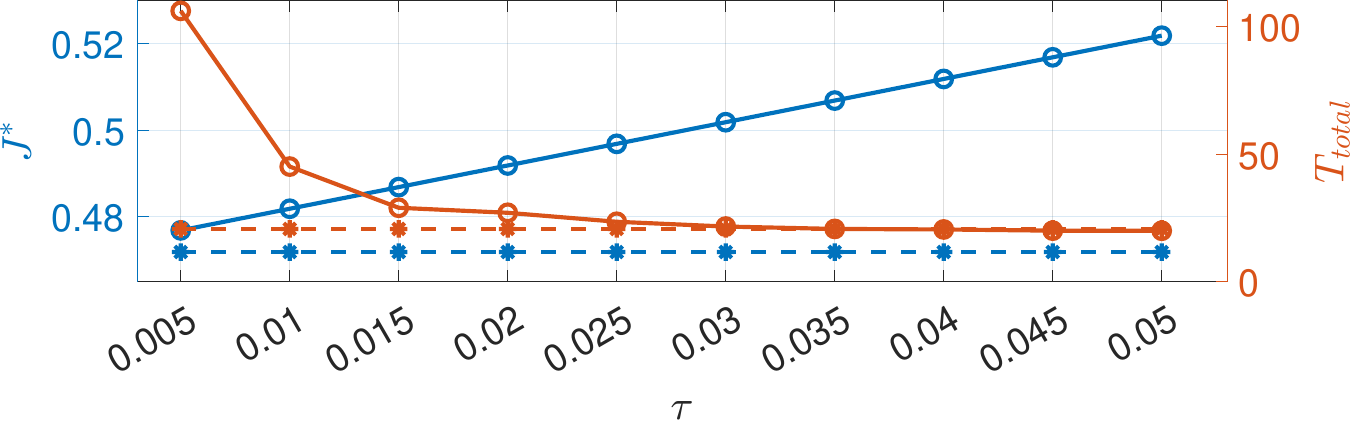}
    }
    \caption{Results under different hyperparameter. The solid line and dashed line correspond to the Data-Driven SBFs method and Distribution-Known method, respectively.}
    \label{fig:jet_para}
\end{figure}

The results of the remaining experiments are summarized in Table~\ref{tab:exp}, with the corresponding settings provided in Appendix~\ref{sec:app_exp}. In these experiments, we set $\epsilon=0.1$ for the proposed Data-Driven RBFs method. Compared with the Robust-Set method, Data-Driven RBFs method consistently yields larger sets $\mathbb{X}_s$ and remains applicable even when the Robust-Set method fails to produce a valid certificate. Meanwhile, the proposed Data-Driven SBFs achieves performance close to the Distribution-Known baseline. Both proposed methods demonstrate good scalability and are capable of handling six-dimensional systems. 

\section{Conclusion}
\label{sec:con}
This work studied PAC finite-time safety guarantees for stochastic discrete-time systems with unknown disturbance distributions. By reformulating robust and stochastic barrier-certificate conditions as empirical optimization problems, we established three complementary characterizations: one based on VC dimension, one using the scenario approach, and one leveraging Rademacher complexity. Together, these results connect finite-sample feasibility with probabilistic safety, enabling sample-efficient and distribution-free certification of stochastic systems.
Finally, we demonstrated the efficacy of the proposed approaches through several polynomial examples solved via semidefinite programming–based tools, illustrating both the practical performance and the inherent trade-offs of each approach. The results confirm that these data-driven methods could provide reliable safety certificates with high confidence, while highlighting the critical balance between conservatism, computational complexity, and sample efficiency. 

In this work, we consider PAC finite-time safety certification for discrete-time stochastic systems with known dynamics but unknown disturbance distributions (i.e., $\bm{f}(\bm{x},\bm{d})$ is known). {In many task-driven systems, however, not only safety but also successful reachability of a target set is essential. Accordingly, as a natural extension, we plan to generalize our framework to reach-avoid certification for such systems.} Further, in many applications, the dynamics themselves are also unknown. In future work, in addition to the directions discussed in Remark~\ref{relation}, we plan to extend PAC finite-time safety certification to systems with unknown dynamics and develop a certification procedure for the nested PAC guarantee as proposed in \cite{xue2020pac}. Also, we aim to extend the proposed approaches to PAC-style verification for probabilistic programs \cite{zaiser2025guaranteed} and quantum computations \cite{ying2010quantum}.
\oomit{
\begin{equation*}
\label{eq:pac_certification}
P_{\bm{x}}^N \left[ 
\begin{split}
    &P_{\bm{x}} \left[
    \begin{split}
        &P_{\bm{d}}^M \left[
        \mathcal{A}(S) \Rightarrow
        \begin{split}
            P_{\pi} \Big[&\bigwedge_{i=1}^k \bm{\phi}^{\bm{x}}_{\pi}(i) \in \mathbb{X} 
            \Big] \\
            &\ge 1-\epsilon_2(\bm{x})
            \end{split}
        \right] \\
        &\ge 1-\delta_2
        \end{split}
    \right] \\
    &\ge 1-\epsilon_1
    \end{split}
\right] \ge 1-\delta_1.
\end{equation*}
where $X^N =\{\bm{x}_i \in \mathbb{X}\}_{i=1}^N\stackrel{\text{i.i.d.}}{\sim} P_{\bm{x}}$ and $D_i^M=\{\bm{d}_{i,j} \in \mathbb{D}\}_{j=1}^M\stackrel{\text{i.i.d.}}{\sim} P_{\bm{d}}$ ($i=1,\ldots,N$).
That is, with confidence at least $1-\delta_1$, the probability measure of the set of states $\bm{x}$ in the safe set satisfying the following property is larger than or equal to $1-\epsilon_1$: 

\quad\big[with confidence at least $1-\delta_2$, if $\mathcal{A}(S)$ is accepted, the probability that the system remains within the safe set at the next time step is no less than a value determined by $\epsilon_2(\bm{x})$.\big]
}

\section*{Acknowledgments}
This work was partially supported by the National Research Foundation, Singapore, under its RSS Scheme (NRF-RSS2022-009), the CAS Pioneer Hundred Talents Program, and the Basic Research Program of the Institute of Software, Chinese Academy of Sciences (Grant No. ISCAS-JCMS-202302).

\bibliographystyle{ACM-Reference-Format}
\bibliography{sample-base}

\section*{Appendix}
\setcounter{secnumdepth}{2}
\setcounter{section}{0}
\renewcommand\thesubsection{\thesection.\arabic{subsection}}
\renewcommand\thesection{\Alph{section}}
\label{sec:appendix}

\section{Proofs in the Paper}
\subsection{The proof of Theorem \ref{lemma:robust}}
\label{sec:proof_theorem_robust}
\begin{proof}
Since $\textnormal{E}_{\bm{d},M}[\widehat{f}_{\bm{a}^*(D^M))}]=0, \forall\bm{x}\in \mathbb{X}$ holds, according to Lemma \ref{coro4}, we can obtain 
\begin{equation*}
\label{coro41}
\textnormal{P}_{\bm{d}}^M \big[\textnormal{E}_{\bm{d}}[\widehat{f}_{\bm{a}^*(D^M)}]\leq \epsilon\Big]
\geq 1-\delta, \forall \bm{x}\in \mathbb{X}.
\end{equation*}
\oomit{Let $\textnormal{P}_{\bm{d},M}$ the empirical measure on $M$ i.i.d.\ disturbances $\bm{d}^{(1)},\dots,\bm{d}^{(M)}$. For $\epsilon>0$ we seek a uniform one-sided bound
\begin{equation}
\label{vc}
\textnormal{P}_{\bm{d}}^M\Big[\sup_{\widehat{f}_{\bm{a},\xi}\in\mathbb{F}_{\bm{x}}}\big(\textnormal{P}_{\bm{d}}[\widehat{f}_{\bm{a},\xi}]-\textnormal{P}_{\bm{d},M}[\widehat{f}_{\bm{a},\xi}]\big)>\epsilon\Big]\leq\delta, \forall \bm{x}\in \mathbb{X}.
\end{equation}
A standard symmetrization + combinatorial argument (see, Theorem 2 in \cite{vapnik2015uniform}) gives, for any $M\geq \frac{2}{\epsilon^2}$,
\begin{equation}
\label{eq:symm}
\begin{split}
&\textnormal{P}_{\bm{d}}^M\Big[\sup_{\widehat{f}_{\bm{a},\xi}\in\mathbb{F}}\big(\textnormal{P}_{\bm{d}}[\widehat{f}_{\bm{a},\xi}]-\textnormal{P}_{\bm{d},M}[\widehat{f}_{\bm{a},\xi}]\big)>\epsilon\Big]
\\
\leq &2 \Pi_{\mathbb{F}_{\bm{x}}}(2M)\,\exp\!\Big(-\frac{M\epsilon^2}{8}\Big), \forall \bm{x}\in \mathbb{X},
\end{split}
\end{equation}
where $\Pi_{\mathbb{F}}N$ is the growth function. Under assumption \ref{ass:VC_finite}, by Sauer's lemma \cite{sauer1972density}, if $M\geq \frac{N}{2}$, we have
\begin{equation}
\label{eq:sauer}
\Pi_{\mathbb{F}_{\bm{x}}}(2M)\le\sum_{i=0}^{N} \binom{2M}{i}\le\Big(\frac{2eM}{m}\Big)^{N}, \forall \bm{x}\in \mathbb{X}.
\end{equation}
Substituting \eqref{eq:sauer} into \eqref{eq:symm} yields
\[
\begin{split}
&\textnormal{P}_{\bm{d}}^M\Big[\sup_{\widehat{f}_{\bm{a},\xi}\in\mathbb{F}_{\bm{x}}}\big(\textnormal{P}_{\bm{d}}[\widehat{f}_{\bm{a},\xi}]-\textnormal{P}_{\bm{d},M}[\widehat{f}_{\bm{a},\xi}]\big)>\epsilon\Big]\\
\le &2\Big(\frac{2eM}{N}\Big)^{N}\exp\!\Big(-\frac{M\epsilon^2}{8}\Big), \forall \bm{x}\in \mathbb{X}.
\end{split}
\]
Requiring the right-hand side $\le\delta$ is equivalent to the implicit condition
\begin{equation}
\label{eq:implicit}
\frac{M\epsilon^2}{8}\geq \ln 2 + N\ln\!\Big(\frac{2eM}{N}\Big)+\ln\frac{1}{\delta}.
\end{equation}
}

In addition, for each $\bm{x}\in \mathbb{X}$, we can obtain \[\wedge_{j=1}^M h(\bm{a}^*(D^M),\bm{f}(\bm{x},\bm{d}^{(j)}))\geq \gamma h(\bm{a}^*(D^M),\bm{x}).\]
Thus, we have
   \begin{equation*}
\begin{split}
  & \textnormal{P}_{\bm{d}}^M \left[\mathbb{X}_s\neq \emptyset \Rightarrow 
    \begin{split}
\Big( \textnormal{P}_{\bm{d}}\left[
\begin{split}
&h(\bm{a}^*(D^M),\bm{f}(\bm{x},\bm{d})) \\
& \geq\gamma h(\bm{a}^*(D^M),\bm{x})
\end{split}
\right]\geq 1-\epsilon\Big)
        \end{split}
        \right]\\   
          &\geq 
          1-\delta, \forall \bm{x}\in \mathbb{X}.
\end{split}
\end{equation*}

On the other hand, $h(\bm{a}^*(D^M),\bm{f}(\bm{x},\bm{d})) \geq\gamma h(\bm{a}^*(D^M),\bm{x})$ implies that 
if $\bm{x}\in \mathbb{X}_s$, $\bm{f}(\bm{x},\bm{d})\geq 0$ holds and thus $\bm{f}(\bm{x},\bm{d})\in \mathbb{X}_s$. Therefore, we have 
\begin{equation*}
\begin{split}
   &\textnormal{P}_{\bm{d}}^M \left[
    \begin{split}
        {\mathbb{X}_s\neq \emptyset} \Rightarrow  \textnormal{P}_{\bm{d}}[\bm{f}(\bm{x},\bm{d}) \in \mathbb{X}_s]\geq 1-\epsilon(\bm{x})
        \end{split}
        \right]\\
        &\geq 
          1-\delta, \forall \bm{x}\in \mathbb{X}.
\end{split}
\end{equation*}

The proof is finished.
\end{proof}

\subsection{The proof of Lemma \ref{lemma:robust_s}}
\label{sec:proof_lemma_robust_s}
\begin{proof}
Via treating $\bm{d} \in \mathbb{D}$ as uncertainty and $\bm{a}$ as decision variables, we first show that \eqref{eq:RBF_linear1} satisfies Assumption 1 in \cite{calafiore2006scenario}.   

Since $\mathbb{X}$ is compact and $\bm{f}(\bm{x},\bm{d})$ is continuous over $(\bm{x},\bm{d})$, we have $h(\bm{a}, \bm{f}(\bm{x},\bm{d})) \geq \gamma h(\bm{a}, \bm{x}), \forall \bm{x} \in \mathbb{X}$ is equivalent to 
\[\max_{\bm{x}\in \mathbb{X}}[-h(\bm{a}, \bm{f}(\bm{x},\bm{d}))+\gamma h(\bm{a}, \bm{x})]\leq 0\] 
under Assumption \ref{as:h}. Thus, we just show that \[\max_{\bm{x}\in \mathbb{X}}[-h(\bm{a}, \bm{f}(\bm{x},\bm{d}))+\gamma h(\bm{a}, \bm{x})]\] is continuous and convex over $\bm{a}$, for any fixed value of $\bm{d}\in \mathbb{D}$.


1) Since \begin{equation*}
\label{max_ineq}
\begin{split}
&\Big|\max_{\bm{x}\in \mathbb{X}}[-h(\bm{a}_1, \bm{f}(\bm{x},\bm{d}))+\gamma h(\bm{a}_1, \bm{x})]\\
&\quad\quad\quad\quad-\max_{\bm{x}\in \mathbb{X}}[-h(\bm{a}_2, \bm{f}(\bm{x},\bm{d}))+\gamma h(\bm{a}_2, \bm{x})]\Big|\\
\leq &\max_{\bm{x}\in \mathbb{X}}\Big|[-h(\bm{a}_1, \bm{f}(\bm{x},\bm{d}))+\gamma h(\bm{a}_1, \bm{x})]\\
&\quad\quad\quad\quad\quad\quad-[-h(\bm{a}_2, \bm{f}(\bm{x},\bm{d}))+\gamma h(\bm{a}_2, \bm{x})]\Big|,
\end{split}
\end{equation*}
according to Assumption \ref{as:h}, we can obtain 
\[
\begin{split}
\lim_{\bm{a}_1\rightarrow \bm{a}_2}\max_{\bm{x}\in \mathbb{X}}&[-h(\bm{a}_1, \bm{f}(\bm{x},\bm{d}))+\gamma h(\bm{a}_1, \bm{x})]\\
&=\max_{\bm{x}\in \mathbb{X}}[-h(\bm{a}_2, \bm{f}(\bm{x},\bm{d}))+\gamma h(\bm{a}_2, \bm{x})].
\end{split}
\]
This conclusion follows from the facts that $h(\bm{a},\bm{x})$ is continuous over $(\bm{a},\bm{x})$, $\bm{f}(\bm{x},\bm{d})$ is continuous over $(\bm{x},\bm{d})$, and $\mathbb{X}$ is compact. Thus, $\max_{\bm{x}\in \mathbb{X}}[-h(\bm{a}, \bm{f}(\bm{x},\bm{d}))+\gamma h(\bm{a}, \bm{x})]$ is continuous over $\bm{a}$. 

2) Assume that $h_1(\bm{\alpha}_1):=\max_{\bm{x}\in \mathbb{X}}[-h(\bm{a}_1, \bm{f}(\bm{x},\bm{d}))+\gamma h(\bm{a}_1, \bm{x})]$ and $h_2(\bm{\alpha}_2):=\max_{\bm{x}\in \mathbb{X}}[-h(\bm{a}_2, \bm{f}(\bm{x},\bm{d}))+\gamma h(\bm{a}_2, \bm{x})]$, we will show 
$\max_{\bm{x}\in \mathbb{X}}[-h(\bm{c}, \bm{f}(\bm{x},\bm{d}))+\gamma h(\bm{c}, \bm{x})]\leq \lambda h_1(\bm{\alpha}_1)+(1-\lambda)h_2(\bm{\alpha}_2)$, where 
$\bm{c}=\lambda \bm{a}_1+(1-\lambda) \bm{a}_2$ with $\lambda\in [0,1]$. 

According to Assumption \ref{as:h}, we have
\[
\begin{split}
&h(\bm{c}, \bm{f}(\bm{x},\bm{d}))-\gamma h(\bm{c}, \bm{x})\\
&=\lambda (h(\bm{a}_1, \bm{f}(\bm{x},\bm{d}))-\gamma h(\bm{a}_1, \bm{x}))\\
&\quad\quad\quad\quad\quad +(1-\lambda)(h(\bm{a}_2, \bm{f}(\bm{x},\bm{d}))-\gamma h(\bm{a}_2, \bm{x})).
\end{split}
\]
Thus, 
\[
\begin{split}
&\max_{\bm{x}\in \mathbb{X}}[-h(\bm{c}, \bm{f}(\bm{x},\bm{d}))+\gamma h(\bm{c}, \bm{x})]
\leq \lambda h_1(\bm{\alpha}_1)+(1-\lambda)h_2(\bm{\alpha}_2).
\end{split}
\]

Consequently, $\max_{\bm{x}\in \mathbb{X}}[-h(\bm{a}, \bm{f}(\bm{x},\bm{d}))+\gamma h(\bm{a}, \bm{x})]$ is convex over $\bm{a}$.
Similarly, we can show $\max_{\bm{x}\in \overline{\widehat{\mathbb{X}}\setminus \mathbb{X}}}h(\bm{a},\bm{x})$ is convex and continuous over $\bm{a}$. Thus,  \eqref{eq:RBF_linear1} satisfies Assumption 1 in \cite{calafiore2006scenario}.  

In addition, as mentioned before, there always exist solutions to \eqref{eq:RBF_linear21}. $\bm{a}=\bm{0}$ is a solution. \eqref{eq:RBF_linear21} satisfies Assumption 2 in \cite{calafiore2006scenario}. Thus, according to Theorem 1 in \cite{calafiore2006scenario}, we have 
 \begin{equation*}
\begin{split}
  & \textnormal{P}_{\bm{d}}^M \left[{\mathbb{X}_s\neq \emptyset} \Rightarrow 
    \begin{split}
\Big(\textnormal{P}_{\bm{d}}\left[
\begin{split}
&\max_{\bm{x}\in \mathbb{X}}[-h(\bm{a}^*(D^M),\bm{f}(\bm{x},\bm{d})) \\
& +\gamma h(\bm{a}^*(D^M),\bm{x})]\leq 0
\end{split}
\right]\geq 1-\epsilon\Big)
        \end{split}
        \right]\\   
          &\geq 
          1-\delta,
\end{split}
\end{equation*}
implying
  \begin{equation*}
\begin{split}
  & \textnormal{P}_{\bm{d}}^M \left[{\mathbb{X}_s\neq \emptyset}  \Rightarrow 
    \begin{split}
\Big(\forall \bm{x}\in \mathbb{X}, \textnormal{P}_{\bm{d}}\left[
\begin{split}
&h(\bm{a}^*(D^M),\bm{f}(\bm{x},\bm{d})) \\
& \geq\gamma h(\bm{a}^*(D^M),\bm{x})
\end{split}
\right]\geq 1-\epsilon\Big)
        \end{split}
        \right]\\   
          &\geq 
          1-\delta,
\end{split}
\end{equation*}
and 
\begin{equation*}
\begin{split}
   &\textnormal{P}_{\bm{d}}^M \left[
    \begin{split}
         {\mathbb{X}_s\neq \emptyset} \Rightarrow \Big(\forall \bm{x}\in \mathbb{X}_s, \textnormal{P}_{\bm{d}}[\bm{f}(\bm{x},\bm{d}) \in \mathbb{X}_s]\geq 1-\epsilon\Big)
        \end{split}
        \right]\\
        &\geq 
          1-\delta.
\end{split}
\end{equation*}
The proof is finished.
\end{proof}

\subsection{The proof of Theorem \ref{thm:so}}
\label{sec:proof_thm_so}
\begin{proof}
By Lemma~\ref{lemma:robust_s} there exists at least $1-\delta$-probability event over the sample $D^M$ on which the one-step safety bound holds uniformly over $\mathbb{X}_s$. Define the uniform event
\[
\mathbb{E}: =\Big\{D^M \mid  \forall \bm{y}\in\mathbb{X}_s,\textnormal{P}_{\bm d} \big[\bm{f}(\bm{y},\bm{d})\in\mathbb{X}_s\big]\ge 1-\epsilon\Big\}.
\]
Lemma~\ref{lemma:robust_s} guarantees $\textnormal{P}_{\bm{d}}^M(\mathbb{E})\ge 1-\delta$. 

Condition on any realization $D^M\in\mathbb{E}$. Fix an arbitrary initial state $\bm{x}\in\mathbb{X}_s$ and denote the trajectory by $\bm{x}_t:=\bm{\phi}_{\pi}^{\bm {x}}(t)$. By the law of total probability (tower property) we have the exact identity
\[
\begin{split}
&\textnormal{P}_{\pi}\big[\bm{x}_{t+1}\in\mathbb{X}_s \mid \bm{x}_t\in\mathbb{X}_s\big]\\
= ~&\int_{\bm{y}\in\mathbb{X}_s} \textnormal{P}_{\bm{d}}\big[\bm{f}({y},\bm {d})\in\mathbb{X}_s\big]\,d \textnormal{P}_{\bm{x}_t\mid \bm{x}_t\in\mathbb{X}_s}[\bm{y}]\\
= ~&\textnormal{E}_{\bm{y}\sim \textnormal{P}_{\bm{x}_t\mid \bm{x}_t\in\mathbb{X}_s}}\big[\textnormal{P}_{\bm{d}}[\bm{f}(\bm{y},\bm{d})\in\mathbb{X}_s]\big],
\end{split}
\]
where $\textnormal{P}_{\bm{x}_t\mid \bm{x}_t\in\mathbb{X}_s}$ refers to the conditional distribution of the random variable $\bm{x}_t$  given that it belongs to $\mathbb{X}_s$. Formally, for any measurable subset $\mathbb{B} \subseteq \mathbb{X}_s$, 
$
\textnormal{P}_{\bm{x} \mid \bm{x} \in \mathbb{X}_s}[\bm{x} \in \mathbb{B}]
= \frac{\textnormal{P}_{\bm{x}}[\bm{x} \in \mathbb{B} \cap \mathbb{X}_s]}{\textnormal{P}_{\bm{x}}[\bm{x} \in \mathbb{X}_s]}$ (
The condition $\textnormal{P}_{\bm{x}}[\bm{x} \in \mathbb{X}_s] > 0$ is required for the conditional probability 
$\textnormal{P}_{\bm{x}_t\mid \bm{x}_t\in\mathbb{X}_s}$ to be well-defined.
In our method, we take $\textnormal{P}_{\bm{x}}$ to be a Lebesgue-type measure so that any nonempty open set has positive measure. Thus, this condition $\textnormal{P}_{\bm{x}}[\bm{x} \in \mathbb{X}_s] > 0$ is satisfied by construction.
Since $h(\bm{a}^*(D^M),\bm{x})$ is continuous with respect to $\bm{x}$, 
if $h(\bm{a}^*(D^M),\bm{x}) > 0$ at some point $\bm{x}$, 
then there exists a neighbourhood $\Delta$ of $\bm{x}$ with positive (Lebesgue) measure such that
\[
h(\bm{a}^*(D^M),\bm{y}) > 0, 
\quad \forall\, \bm{y} \in \Delta.
\]
Consequently, $\Delta \subseteq \mathbb{X}_s$ and $\textnormal{P}_{\bm{x}}[\Delta] > 0$. 
Therefore,
\[
\textnormal{P}_{\bm{x}}[x_t \in \mathbb{X}_s] \geq \textnormal{P}_{\bm{x}}[x_t \in \Delta] >0,
\]
which ensures that $\textnormal{P}_{\bm{x}}[\bm{x}\in \mathbb{X}_s] > 0$ holds automatically under the continuity of $h(\bm{a}^*(D^M),\bm{x})$ ).

Under $D^M\in\mathbb{E}$ the integrand satisfies $\textnormal{P}_{\bm{d}}[\bm{f}(\bm{y},\bm{d})\in\mathbb{X}_s]\ge 1-\epsilon$ for every $\bm{y}\in\mathbb{X}_s$, hence the expectation (an average of values $\ge 1-\epsilon$) is itself at least $1-\epsilon$. Therefore, for every time $t$,
\[
\textnormal{P}_{\pi}\big[\bm{x}_{t+1}\in\mathbb{X}_s \mid \bm{x}_t\in\mathbb{X}_s\big] \ge 1-\epsilon.
\]

Applying the Markov property to the $k$-step conjunction yields
\[
\textnormal{P}_{\pi}\Big[\bigwedge_{i=1}^k \bm{x}_i\in\mathbb{X}_s \mid \bm{x}_0=\bm{x} \Big]
= \prod_{i=0}^{k-1} \textnormal{P}_{\pi}\big[\bm{x}_{i+1}\in\mathbb{X}_s \mid \bm{x}_i\in\mathbb{X}_s\big]
\ge (1-\epsilon)^k.
\]

The preceding inequalities were derived under the conditioning $D^M\in\mathbb{E}$. Removing that conditioning and using $\textnormal{P}_{\bm{d}}^M(\mathbb{E})\ge 1-\delta$ gives the stated high-probability guarantee:
\[
\begin{split}
&\textnormal{P}_{\bm{d}}^M\Big[\,{\mathbb{X}_s\neq \emptyset} \Rightarrow
\big(\forall \bm{x}\in\mathbb{X}_s,\; 
\textnormal{P}_{\pi}[\wedge_{i=1}^k \bm{\phi}_{\pi}^{\bm{x}}(i)\in\mathbb{X}_s]\ge (1-\epsilon)^k\big)
\Big] \\
&\ge 1-\delta,
\end{split}
\]
which completes the proof.
\end{proof}

\subsection{The proof of Theorem \ref{lemma:1}}
\label{sec:proof_lemma_1}
\begin{proof}

The empirical Rademacher complexity  is
\[
\Re_M(\mathbb{G}_{\bm{x}})=\frac{1}{M}\textnormal{E}_{\sigma}\Big[\sup_{\|\bm{a}\|_2\leq U_a}\big|\sum_{j=1}^M \sigma_jh(\bm{a},\bm{f}(\bm{x},\bm{d}^{(j)}))\big|\Big]
\]
for any $\bm{x}\in \mathbb{X}$, where $\sigma_j$ are independent Rademacher random variables ($\textnormal{P}[\sigma_i=+1]=\textnormal{P}[\sigma_i=-1]=\frac{1}{2}$) \cite{shalev2014understanding}. 
By standard results for linear classes with $\|\bm{\phi}(\bm{x},\bm{d})\|_2\leq R$, $\forall (\bm{x},\bm{d})\in \mathbb{X}\times \mathbb{D}$, we obtain 
\[
\Re_M(\mathbb{G}_{\bm{x}})\leq \frac{U_a R}{\sqrt{M}}, \forall \bm{x}\in \mathbb{X}.
\]

By the one-sided Rademacher concentration inequality \cite{mitzenmacher2017probability}, we conclude that, for any $t>0$,
\[
\begin{split}
&\textnormal{P}_{\bm{d}}^M\left[ 
\begin{split}
&\sup_{\bm{a}\in [-U_a,U_a]^m}\big( \textnormal{E}_{\bm{d}}[h(\bm{a},\bm{f}(\bm{x},\bm{d}))] - \tfrac{1}{M} \sum_{j=1}^{M} h(\bm{a}, \bm{f}(\bm{x}, \bm{d}^{(j)}))\big)\\ &
> 2\Re_M(\mathbb{G}_{\bm{x}})+t
\end{split}
\right]\\
&\leq e^{-\frac{M t^2}{2U_a^2}},  \forall \bm{x}\in \mathbb{X}.
\end{split}
\]

Choose $t=U_a\sqrt{\tfrac{2}{M}\ln\tfrac{1}{\delta}}$, so the deviation is
\[
2\Re_M(\mathbb{G}_{\bm{x}})+t \le \tau, \forall \bm{x}\in \mathbb{X}.
\]
That is,
\[
\frac{2U_a R}{\sqrt{M}} + U_a\sqrt{\tfrac{2}{M}\ln\tfrac{1}{\delta}} \le \tau, \forall \bm{x}\in \mathbb{X}.
\]

Let $s=\sqrt{M}$. Multiplying both sides by $s/U_a$, we get
\[
2R + \sqrt{2\ln\tfrac{1}{\delta}} \le \tfrac{\tau}{U_a}s,
\]
which yields the condition
\[
M = s^2 \ge \left( \frac{2RU_a + U_a\sqrt{2\ln(1/\delta)}}
             {\tau}\right)^2.
\]
        
Thus, we can obtain
\begin{equation*}
    \begin{split}
    \textnormal{P}_{\bm{d}}^M \left[
    \begin{split}
        &\textnormal{E}_{\bm{d}}[h(\bm{a}^*(D^M), \bm{f}(\bm{x},\bm{d}))]\leq \\
        &\tfrac{1}{M}\sum_{j=1}^M h(\bm{a}^*(D^M), \bm{f}(\bm{x}, \bm{d}^{(j)})) + \tau
        \end{split}
    \right]\geq 1-\delta, \forall \bm{x}\in \mathbb{X}.
    \end{split}
    \end{equation*}

    On the other hand, if \[
    \begin{split}
    &\textnormal{E}_{\bm{d}}[h(\bm{a}^*(D^M), \bm{f}(\bm{x},\bm{d}))]\leq \tfrac{1}{M}\sum_{j=1}^M h(\bm{a}^*(D^M), \bm{f}(\bm{x}, \bm{d}^{(j)})) +\tau
    \end{split}
    \] holds, then
        \[\textnormal{E}_{\bm{d}}[h(\bm{a}^*(D^M), \bm{f}(\bm{x},\bm{d}))]\leq h(\bm{a}^*(D^M),\bm{x})+\lambda^*(D^M)\] holds from \eqref{eq:SBF_linear11}. Further, under the condition that $h(\bm{a}^*(D^M),\bm{x}) \geq 0$ over $\mathbb{X}$ and $h(\bm{a}^*(D^M),\bm{x}) \geq 1$ over $\widehat{\mathbb{X}}\setminus \mathbb{X}$, we have 
\[\textnormal{P}_{\bm{d}} \big[\bm{f}(\bm{x},\bm{d})\in \mathbb{X}\big] 
        \geq  1-\lambda^*(D^M)-h(\bm{a}^*(D^M), \bm{x}).\]
        
Consequently, we obtain
\[
\begin{split}&
\textnormal{P}_{\bm{d}}^{M} \left[
\begin{split}
\textnormal{P}_{\bm{d}} [\bm{f}(\bm{x},\bm{d})\in \mathbb{X}] \geq 1-\lambda^*(D^M)-h(\bm{a}^*(D^M), \bm{x})
\end{split}
\right]\\
&\geq 1-\delta, \forall \bm{x}\in \mathbb{X}
\end{split}
\]
and finish the proof.
\end{proof}

\section{Implementation Details of the SOS Programming}
\label{sec:app_sos}
In this subsection, we present the implementation details of the SOS programming. Without loss of generality, let $\mathbb{X} = \{\bm{x}\in\mathbb{R}^n\mid g(\bm{x}) < 0\}$ and $\widehat{\mathbb{X}} = \{\bm{x}\in\mathbb{R}^n\mid \hat{g}(\bm{x}) < 0\}$, where $g(\bm{x})$ and $\hat{g}(\bm{x})$ are polynomial functions. The SOS programmings corresponding to optimization problem \eqref{eq:RBF_linear21} and \eqref{eq:SBF_linear11} are formulated in \eqref{eq:rbc_sos2} and \eqref{eq:sbc_sos1}, respectively. The proposed Data-Driven RBFs method involves solving \eqref{eq:rbc_sos2}, while the Data-Driven SBFs method requires solving \eqref{eq:sbc_sos1}. For comparison, when the disturbance set is given as $\mathbb{D} = \{\bm{d}\in\mathbb{R}^d \mid p(\bm{d}) \leq 0\}$, where $p(\bm{d})$ is a known polynomial function, \eqref{eq:rbf} can be recast into \eqref{eq:rbc_sos}, i.e., the Robust-Set method. When both the $\mathbb{D}$ and the disturbance distribution $\mathbb{P}_d$ are assumed to be known, $\eqref{eq:sbf}$ can be recast into $\eqref{eq:sbc_sos}$, corresponding to the Distribution-Known method.

In the sequel, $\sum[\bm{y}]$ denotes the set of sum-of-squares polynomials over variables $\bm{y}$, i.e., 
\[\sum[\bm{y}]=\{p\in \mathbb{R}[\bm{y}]\mid p=\sum_{i=1}^k q^2_i(\bm{y}), q_i(\bm{y})\in \mathbb{R}[\bm{y}],i=1,\ldots,k\},\] where $\mathbb{R}[\bm{y}]$ denotes the ring of polynomials in variables $\bm{y}$.

\begin{align}
\label{eq:rbc_sos2}
    &\min - \textstyle\frac{1}{N_o}\textstyle\sum_{k=1}^{N_o}h(\bm{a},\bm{x}'_k)\\
    \text{s.t.}&\begin{cases}
     h(\bm{a}, \bm{f}(\bm{x},\bm{d}^{(j)})) - \gamma h(\bm{a}, \bm{x}) + s_0(\bm{x}) g(\bm{x}) \in \sum [\bm{x}], \\
        -h(\bm{a},\bm{x}) + s_{1}(\bm{x}) \hat{g}(\bm{x}) - s_{2}(\bm{x}) g(\bm{x}) \in \sum [\bm{x}],\\
        s_{0}(\bm{x}), s_1(\bm{x}),s_2(\bm{x}) \in \sum [\bm{x}],\\
        \bm{a}[l]  \in [-U_a, U_a], \\
        ~ j=1,\ldots,M;~ l=1,\ldots,m, \\
    \end{cases}\notag
\end{align}

\begin{align}
    \label{eq:sbc_sos1}
    &\min \lambda + \textstyle\frac{1}{N_o}\textstyle\sum_{k=1}^{N_o}h(\bm{a},\bm{x}'_k)\\
    \text{s.t.}&
    \begin{cases}
       h(\bm{a},\bm{x}) + s_1(\bm{x}) g(\bm{x}) \in \sum [\bm{x}],\\
       h(\bm{a},\bm{x}) - 1 + s_2(\bm{x}) \hat{g}(\bm{x}) -s_3(\bm{x}) g(\bm{x}) \in \sum[\bm{x}],\\
       -\hat{h}(\bm{x},\bm{a},\lambda) + s_4(\bm{x}) g(\bm{x}) \in \sum [\bm{x}],\\
       s_1(\bm{x}),s_2(\bm{x}),s_3(\bm{x}),s_4(\bm{x}) \in \sum [\bm{x}],\\
        \lambda \in [0,1]; ~ \bm{a}[l] \in [-U_a, U_a]; ~ l=1,\ldots,m,
    \end{cases}\notag
\end{align}

\begin{align}
\label{eq:rbc_sos}
    &\min - \textstyle\frac{1}{N_o}\textstyle\sum_{k=1}^{N_o}h(\bm{a},\bm{x}'_k)\\
    \text{s.t.}&\begin{cases}
     h(\bm{a}, \bm{f}(\bm{x},\bm{d})) - \gamma h(\bm{a}, \bm{x}) + s_1(\bm{x}, \bm{d}) g(\bm{x}) \\
     \quad\quad\quad\quad\quad\quad\quad\quad\quad\quad\quad+ s_2(\bm{x}, \bm{d}) p(\bm{d}) \in \sum [\bm{x},\bm{d}], \\
        -h(\bm{a},\bm{x}) + s_3(\bm{x}) \hat{g}(\bm{x}) - s_4(\bm{x}) g(\bm{x}) \in \sum [\bm{x}],\\
        s_1(\bm{x}, \bm{d}),s_2(\bm{x},\bm{d}) \in \sum [\bm{x},\bm{d}]; s_3(\bm{x}),s_4(\bm{x}) \in \sum [\bm{x}],\\
        \|\bm{a}\|_2 \leq U_a; ~ l=1,\ldots,m, \\
    \end{cases}\notag
\end{align}

\begin{align}
    \label{eq:sbc_sos}
    &\min \lambda + \textstyle\frac{1}{N_o}\textstyle\sum_{k=1}^{N_o}h(\bm{a},\bm{x}'_k)\\
    \text{s.t.}&
    \begin{cases}
       h(\bm{a},\bm{x}) + s_1(\bm{x}) g(\bm{x}) \in \sum [\bm{x}],\\
       h(\bm{a},\bm{x}) - 1 + s_2(\bm{x}) \hat{g}(\bm{x}) -s_3(\bm{x}) g(\bm{x}) \in \sum[\bm{x}],\\
       h(\bm{x}) + \lambda -\mathbb{E}_{\bm{d}}[h(\bm{f}(\bm{x},\bm{d}))] + s_4(\bm{x}) g(\bm{x}) \in \sum [\bm{x}],\\
       s_1(\bm{x}),s_2(\bm{x}),s_3(\bm{x}),s_4(\bm{x}) \in \sum [\bm{x}],\\
        \lambda \in [0,1]; ~ \|\bm{a}\|_2 \leq U_a; ~ l=1,\ldots,m,
    \end{cases}\notag
\end{align}

\section{Additional Examples}
\label{sec:app_exp}
In this subsection, we provide additional examples with specific settings.

\begin{example}
\label{ex:arch4}
    Consider the following system adapted from \cite{sogokon2016non}:
\begin{equation*}
    \begin{cases}
x(t+1)=x(t)+0.01\big(-2x(t)+x^2(t)+y(t)\big),\\
y(t+1)=y(t)+0.01\big(x(t)-2y(t)+y^2(t)+d(t)\big),
\end{cases}
\end{equation*}
where $d(t)$ follows a truncated normal distribution, defined on $[-1,1]$ with a mean of zero and a standard deviation of $0.15$. The safe set is $\mathbb{X}=\{\,(x,y)^{\top}\mid x^2+y^2 \leq 1\,\}$.
\end{example}

\begin{example}
    \label{ex:vinc}Consider the following system adapted from \cite{vincent1997nonlinear}:
\begin{equation*}
    \begin{cases}
x(t+1)=x(t)+0.01\big(y(t) - x(t) d(t)\big),\\
y(t+1)=y(t)+0.01\big(-(1-x^2(t))x(t) - y(t)\big),
\end{cases}
\end{equation*}
where $d(t)$ follows a symmetric Beta distribution scaled to the interval $[-2,2]$, i.e., $d(t)=4\times \text{Beta}(250,250)-2$.. The safe set is $\mathbb{X}=\{\,(x,y)^{\top}\mid x^2+y^2 \leq 0.64\,\}$. 
\end{example}

\begin{example}
    \label{ex:bc4}
    Consider the following system adapted from \cite{zhang2018safety}:
\begin{equation*}
    \begin{cases}
x(t+1)=x(t)+0.01\big(-x(t)+2x^2(t)y(t)+x(t)d_1(t)\big),\\
y(t+1)=y(t)+0.01\big(-y(t)+y(t)d_2(t)\big),
\end{cases}
\end{equation*}
where $d_1(t)$ and $d_2(t)$ are independent and follow the uniform distribution on the interval $[-0.5, 0.5]$. The safe set is $\mathbb{X}=\{\,(x,y)^{\top}\mid x^2+y^2\leq 1 \,\}$.
\end{example}

\begin{example}
\label{ex:stable}
    Consider the following system adapted from \cite{ben2015stability}:
\begin{equation*}
\begin{cases}
x(t+1)=x(t)+\tau_s\big(-x(t)+y(t)-z(t)-d_1(t)\big),\\
y(t+1)=y(t)+\tau_s\big(-x(t)(z(t)+1)-y(t)-d_2(t)\big),\\
z(t+1)=z(t)+\tau_s\big(0.76524x(t) -4.7037z(t)-d_3(t)\big),\\
\end{cases}
\end{equation*}
where $d_1(t)$, $d_2(t)$, and $d_3(t)$ are independent and follow uniform distributions on $[-0.5, 0.5]$. The safe set is $\mathbb{X}=\{\,(x,y,z)^{\top}\mid x^2+y^2+z^2 \leq 1\,\}$, and $\tau_s=0.01$. 
\end{example}

\begin{example}
\label{ex:3dvanderpol}
    Consider the following system adapted from \cite{korda2014controller}:
\begin{equation*}
\begin{cases}
x(t+1)=x(t)+\tau_s\big(-2y(t)\big),\\
y(t+1)=y(t)+\tau_s\big(0.8x(t)-2.1y(t)+z(t)+10x^2(t)y(t)\big),\\
z(t+1)=z(t)+\tau_s\big(-z(t)+z^3(t)+d(t)\big),\\
\end{cases}
\end{equation*}
where $d(t)$ follows a uniform distribution on $[-0.5, 0.5]$. The safe set is $\mathbb{X}=\{\,(x,y,z)^{\top}\mid x^2+y^2+z^2 \leq 1\,\}$, and $\tau_s=0.01$. 
\end{example}

\begin{example}
    \label{ex:sank} Consider a model adapted from \cite{sankaranarayanan2013lyapunov} with a discrete time $0.01$,
    \begin{equation*}
        \begin{cases}
        x_1(t+1)=x_1(t) + \tau_s(-x_1(t) + x_2^3(t) - 3x_3(t)x_4(t) + d(t)),\\
        x_2(t+1)=x_2(t) + \tau_s(-x_1(t) - x_2^3(t)),\\
        x_3(t+1)=x_3(t) + \tau_s(x_1(t)x_4(t)-x_3(t)),\\
        x_4(t+1)=x_4(t) + \tau_s(x_1(t)x_3(t)-x_4^3(t)),\\
        \end{cases}
    \end{equation*}
    with $\tau_s=0.01$ and the safe set $\mathbb{X}=\{\,(x_1,x_2,x_3,x_4)^{\top}\mid \sum_{i=1}^4 x_i^2 \leq 1\,\}$. The disturbance $d(t)$ follows the uniform distribution on the interval $[-1,1]$.
\end{example} 

\begin{example}
\label{ex:lorenz}
    Consider a 6-dimensional Lorenz model adapted from \cite{lorenz1996predictability}:
\begin{equation*}
    \begin{cases}
        x_1(t+1) = x_1(t) + \tau_s\big((x_2(t) - x_5(t))x_6(t) - x_1(t) + d(t)\big),\\
        x_2(t+1) = x_2(t) + \tau_s\big((x_3(t) - x_6(t))x_1(t) - x_2(t) \big),\\
        x_3(t+1) = x_3(t) + \tau_s\big((x_4(t) - x_1(t))x_2(t) - x_3(t) \big),\\
        x_4(t+1) = x_4(t) + \tau_s\big((x_5(t) - x_2(t))x_3(t) - x_4(t) \big),\\
        x_5(t+1) = x_5(t) + \tau_s\big((x_6(t) - x_3(t))x_4(t) - x_5(t) \big),\\
        x_6(t+1) = x_6(t) + \tau_s\big((x_1(t) - x_4(t))x_5(t) - x_6(t) \big),\\
\end{cases}
\end{equation*}
with $\tau_s = 0.01$. The disturbance $d(t)$ follows a truncated normal distribution, defined on $[-3,3]$ with a mean of zero and a standard deviation of $0.5$. The safe set is defined as $\mathbb{X}=\{\,(x_1,\ldots,x_6)^{\top}\mid \sum_{i=1}^6 x_i^2 \leq 1\,\}$.
\end{example}

\begin{table}[H]
\caption{Parameter Settings for Data-Driven SBFs and Distribution-Known Methods.}
\setlength{\tabcolsep}{2.5mm}{
\begin{tabular}{ccccccc}
\hline
Examples & Degree & $\tau$ & $U_{\bm{a}}$  \\ \hline
\ref{ex:arch4}       & 10 & 0.01  & 10 \\
\ref{ex:vinc}        & 10 & 0.01  & 10 \\
\ref{ex:bc4}         & 10 & 0.01  & 10 \\
\ref{ex:stable}      & 6  & 0.01  & 10 \\
\ref{ex:3dvanderpol} & 6  & 0.01  & 10 \\
\ref{ex:sank}        & 4  & 0.01  & 15 \\
\ref{ex:lorenz}      & 4  & 0.01  & 5  \\
\hline
\end{tabular}}
\label{tab:para}
\end{table}

For the Data-Driven RBFs and Robust-Set methods, the polynomial degree of $h(\bm{a}, \bm{x})$ is set to 4 in Examples \ref{ex:arch4} - \ref{ex:3dvanderpol}, and to 2 in Examples \ref{ex:sank} and \ref{ex:lorenz}. The parameters used in the Data-Driven SBFs and Distribution-Known methods are summarized in Table~\ref{tab:para}.

\end{document}